\newtheorem{proposition}{Proposition}
\newtheorem{lemma}{Lemma}
\newtheorem{assumption}{Assumption}
\newtheorem{theorem}{Theorem}
\theoremstyle{remark}
\newtheorem{remark}{Remark}
\newcommand*\diff{\mathop{}\!\mathrm{d}}
\title{Pricing American Parisian Options under General Time-Inhomogeneous Markov Models}
\author{Yuhao Liu\thanks{School of Science and Engineering, The Chinese University of Hong Kong, Shenzhen, China. Email: oxhowardliu@outlook.com} \and Nian Yang\thanks{Corresponding author. Department of Finance and Insurance, Nanjing University. Email: yangnian@nju.edu.cn} \and Gongqiu Zhang\thanks{Corresponding author. School of Science and Engineering, The Chinese University of Hong Kong, Shenzhen, China. Email: zhanggongqiu@cuhk.edu.cn.}}
\begin{document}

\maketitle

\begin{abstract}
    This paper develops general approaches for pricing various types of American-style Parisian options (down-in/-out, perpetual/finite-maturity) with general payoff functions based on continuous-time Markov chain (CTMC) approximation under general 1D time-inhomogeneous Markov models. For the down-in types, by conditioning on the Parisian stopping time, we reduce the pricing problem to that of a series of vanilla American options with different maturities and their prices integrated with the distribution function of the Parisian stopping time yield the American Parisian down-in option price. This facilitates an efficient application of CTMC approximation to obtain the approximate option price by calculating the required quantities. For the perpetual down-in cases under time-homogeneous models, significant computational cost can be reduced. The down-out cases are more complicated, for which we use the state augmentation approach to record the excursion duration and then the approximate option price is obtained by solving a series of variational inequalities recursively with the Lemke's pivoting method. We show the convergence of CTMC approximation for all the types of American Parisian options under general time-inhomogeneous Markov models, and the accuracy and efficiency of our algorithms are confirmed with extensive numerical experiments.

\bigskip
		\textbf{Keywords}: American Parisian options, CTMC approximation, time-inhomogeneous Markov models, convergence analysis.
		
		\textbf{MSC (2020) Classification}: 60J28, 60J60, 60J76, 91G20, 91G30, 91G60.

\end{abstract}

\section{Introduction}
    American Parisian option is a type of Parisian option that allows early exercise after activated or before canceled at the Parisian stopping time, e.g., the first time the underlying asset price stays below or above some barrier continuously for a certain duration of time. They can be classified according to different characteristics: (1) the barrier is upper or lower than the initial asset price: up or down; (2) the option is activated or canceled when the Parisian stopping time is triggered: in or out; (3) the maturity is finite or not: finite-maturity or perpetual; (4) payoff function: call or put. In this way, there is a total of 16 types of American Parisian options, providing a flexible structure for speculation and risk management and inheriting the merit of European-style Parisian options in less prone to market manipulation than classical Barrier options. The feature of early-exercising subject to a Parisian stopping time or other random time alike also appear in many other applications: executive options for listed companies (\cite{zhuang2023pricing}), convertible bond (\cite{chu2013pricing}), structural capital structure models (\cite{antill2019optimal}), to name a few.

    However, due to the complicated nature of American Parisian options, the research on their pricing is very limited, even compared to the European-style Parisian options which are already quite challenging (see e.g., \cite{zhang2023general}). The existing literature for American Parisian option pricing are limited in either the Black-Scholes model or jump-diffusion models with relatively simple jump structure, and call or put payoffs. As for the Black-Scholes model and call/put payoff, \cite{chesney2006american} proposed valuation formulae of various types of American Parisian options based on a decomposition approach for the down-in cases and alterative arguments for the down-out cases; \cite{zhu2015pricing} obtained a pricing formula for American Parisian down-in call options by solving the pricing partial differential equation systems; \cite{le2017pricing} developed an integral equation method for pricing American Parisian down-out call options; \cite{le2016analytical} developed a pricing approach for American Parisian up-in calls based on a ``moving window'' technique; \cite{lu2018pricing} priced American Parisian up-out call options by integral equation approach. \cite{haber1999pricing} developed a finite-difference approach for pricing American Parisian options under the Black-Scholes model. As for jump-diffusion models, an analytical maturity-excursion randomization approach is proposed by \cite{chesney2018parisian} for pricing American Parisian options under the hyper-exponential jump-diffusion models with finite jump activities. To the best of our knowledge, there is not a general pricing approach applicable to general jump-diffusions and payoff functions for American Parisian options.
    
    In this paper, we develop efficient methods that are applicable to general one-dimensional time-inhomogeneous Markov models for pricing various types of American Parisian options with general payoffs and establish their convergence. Our idea is to approximate the underlying model by a continuous-time Markov chain (CTMC). In recent years, CTMC approximation has been very successful in developing efficient option pricing approaches under very general stochastic models; its applications include European and barrier options (\cite{mijatovic2013continuously} and \cite{cui2021pricing}), American options (\cite{eriksson2015american}),  Asian options(\cite{cai2015general}, \cite{song2018computable} and \cite{cui2018single}), Parisian options (\cite{zhang2023general}), lookback options (\cite{zhang2023lookback}), drawdown options (\cite{zhang2021pricing} and \cite{zhang2023drawdown}), etc.
	
	We consider a 1D time-inhomogeneous Markov model $X$ living in an interval $I \subseteq \mathbb{R}$ whose infinitesimal generator $\mathcal{G}_t$ takes the following form:
	\begin{align}
		\mathcal{G}_t g(x) &:= \lim_{s \downarrow t} \frac{\mathbb{E}[g(X_s)|X_t=x] - g(x)}{s - t}  \\
        &= \mu(t,x) g'(x) + \frac{1}{2} \sigma^2(t,x) g''(x) + \int_{\mathbb{R}} \big( g(x+z) - g(x) - z g'(x) 1_{\{ z \le 1 \}} \big) \nu(t, x, \diff z),	
	\end{align}
	for $g\in C_c^2(I)$ (twice continuously differentiable with compact support in $I$). Here the drift $\mu(t,x)$, diffusion coefficient $\sigma(t,x)$ and jump measure $\nu(t,x,\diff y)$ are time-dependent, providing flexibility in capturing time-varying factors such as economic cycles, policy changes, etc. To develop CTMC approximation, we follow \cite{zhang2023drawdown} to embed $X$ into a regime-switching time-homogeneous Markov process with the regime capturing the time evolution and approximate it with a CTMC whose generator matrix approximates $\partial_t + \mathcal{G}_t$. Then we perform American Parisian option pricing under the approximating CTMC model.

    \begin{itemize}
        \item The down-in cases.  As for the down-in American Parisian options, at the exercise time $\tau$, the discounted payoff can be written as:
	\begin{align}
		e^{-r\tau}  f( X_\tau) 1_{\{  \tau_{L, D}^- \le \tau \le T, \tau < \infty \}},
	\end{align}
	where $\tau_{L, D}^- := \inf\{ t \ge 0: t - g_{L,t}^- \ge D \}$ is the Parisian stopping time with $\ g_{L,t}^- :=    \sup\{ s \le t: X_s \ge L \}$, $r$ is the risk-free rate,  $T \in  (0, \infty]$ is the maturity and $f(\cdot)$ is the payoff function. Once $\tau_{L, D}^-$ is activated, the option becomes a vanilla American option whose pricing function can be found by standard numerical routines such as the Lemke's pivoting method (\cite{lemke1968complementary}). Given that found, the remained is to find the joint distribution of the Parisian stopping time $\tau_{L, D}^-$ and the state of $X$ at that time. Under the approximating CTMC model, we derive linear systems for these quantities and find that they can be solved efficiently with backward recursion. Especially, when $T = \infty$ (the perpetual case) and the model is time-homogeneous, the recursion can be avoided, leading to significant complexity reduction.

        \item The down-out cases. Compared with the down-in case, the down-out case is much more complicated. At the exercise time $\tau$, the down-out American Parisian option delivers a discounted payoff
	\begin{align}
		e^{-r\tau}  f( X_\tau) 1_{\{  \tau \le \tau_{L, D}^-, \tau  \le T, \tau < \infty \}}.
	\end{align}
	We can see that the effects of Parisian stopping time and maturity on early-exercise are strongly coupled and it is difficult to disentangle them. Thus we utilize a state augmentation approach to add an additional state that records the current duration of excursion below the barrier $L$ at any time $t$ and formulate a system of variational inequalities for the option prices under the approximating CTMC model. We solve the variational inequalities by the Lemke's pivoting method and obtain the approximate option price.
    \end{itemize}

Analyzing the convergence of approximate American Parisian option prices is a challenging task. In the down-in cases, the difficulties of convergence analysis mainly come from the co-existence of Parisian barriers and the early exercise feature. Inspired by the property that the down-in American Parisian option becomes a plain vanilla American option once the Parisian time is triggered, we first prove the convergence of the value function of the American Parisian option conditional on the Parisian stopping time, which is the value function of a vanilla American option, and then show the weak convergence of the Parisian stopping time under the approximating CTMC model; combining these results implies the convergence of the approximate down-in American Parisian option prices. The down-out cases are even more challenging, mainly due to the strong interaction between early exercise and the Parisian stopping time which makes it not applicable to decompose the American Parisian option prices like the down-in cases. We address the difficulties of convergence analysis by carefully constructing some intermediate quantities for the down-out cases.

 The rest of the paper is organized as follows. In Section \ref{sec:coctmc}, we introduce the construction of approximating CTMC. In Section \ref{sec:dictmc}, we present the analysis of down-in American Parisian options with CTMC approximation, and the down-out cases are dealt with in Section \ref{sec:doctmc}. Section \ref{sec:ca} performs the convergence analysis of the proposed approximations. The results of numerical experiments are included in Section \ref{sec:ne}, and Section \ref{sec:con} concludes this paper. Supplementary proofs are provided in Appendix \ref{app:proof}.

\section{ CTMC Approximation}
         \label{sec:coctmc}
    The construction of CTMC approximation for 1D time-inhomogeneous Markov models can be found in \cite{pistorius2009continuously},  \cite{ding2021markov}, and \cite{zhang2023drawdown}. \cite{pistorius2009continuously} and \cite{ding2021markov} build an approximating time-inhomogeneous chain with generator matrices in the time intervals after disretization by matching the infinitesimal behaviors of the original processes,  while \cite{zhang2023drawdown} embed the original 1D time-inhomogeneous Markov process $X$ into a 2D time-homogeneous Markov process with an additional dimension capturing time evolution, which is then approximated by a regime-switching CTMC where the regime process approximates the natural clock. In this paper, we find it convenient to include such an additional approximation for analyzing the distribution of Parisian stopping time and hence adopt the construction method proposed by \cite{zhang2023drawdown}.
    
    \subsection{ Embedding the Time Dimension}
	To construct the approximating CTMC, we first embed $X$ into a time-homogeneous process $(\xi, X_\xi)$ with $\xi_t = \xi_0 + t$. Then the infinitesimal generator of this process can be found as
	\begin{align}
		\mathcal{G}^{\xi,X} g(t,x) &= \lim_{\varepsilon \downarrow 0} \frac{\mathbb{E}_{t,x}[g(\xi_\varepsilon, X_{\xi_\varepsilon}) - g(t,x)]}{\varepsilon} \\
        &= \partial_t g(t, x) + \mathcal{G}_t g(t, x) \\
		&= \partial_t g(t,x) + \mu(t,x) \partial_x g(t, x) + \frac{1}{2} \sigma^2(t,x) \partial_{xx} g(t,x) \\
		&\quad + \int_{\mathbb{R}} \big( g(t,x+z) - g(t,x) - z \partial_x g(t,x) 1_{\{ z \le 1 \}} \big) \nu(t, x, \diff z),
	\end{align}
	where $\mathbb{E}_{t,x}[\cdot] = \mathbb{E}[\cdot| \xi_0 = t, X_{\xi_0} = x]$.

    \subsection{Construction of CTMC}
	We first discretize the state space of $(\xi, X_\xi)$ as $\mathbb{T} \times \mathbb{S}$ with
	\begin{align}
		&\mathbb{T} = \{ t_i = i\delta_t: i = 0, 1, 2, \ldots \},\\
		&\mathbb{S} = \{ y_0, y_1, \ldots, y_n \}.
	\end{align}
	We denote by $\mathbb{S}^o = \mathbb{S} \backslash \{ y_0, y_n \}$ the set of interior states of $X$.  The grid size of $\mathbb{S}$ is measured as the maximum distance between neighboring states,
	\begin{align}
		\delta_x = \max_{x \in \mathbb{S}^-} \delta^+ x = \max_{x \in \mathbb{S}^+} \delta^- x, 
	\end{align}
	where
	\begin{align}
		& \mathbb{S}^+ = \mathbb{S} \backslash \{ y_0 \},\ \mathbb{S}^- = \mathbb{S} \backslash \{ y_n \},\\
		& x^+ = \min_{y > x, y \in \mathbb{S}} y,\ x^- = \max_{y < x, y \in \mathbb{S}} y,\\
		& \delta^+ x = x^+ - x,\ \delta^- x = x - x^-.
	\end{align}
	
	We let $\mathbb{T} \times \mathbb{S}$ be the state space of the approximating CTMC $(\zeta, Y)$ whose generator can be constructed by approximating the derivatives and integral in $\mathcal{G}^{\xi,X} g(t,x)$. For $(t, x) \in \mathbb{T} \times \mathbb{S}$,
	\begin{align}
		\partial_t g(t, x) \approx \frac{g(t+\delta_t, x) - g(t, x)}{\delta_t}.
	\end{align}
	And for $(t,x) \in \mathbb{T} \times \mathbb{S}^o$, the spatial derivatives can be approximated as
	\begin{align}
		\partial_x g(t, x) \approx \nabla g(t, x),\\
		\partial_{xx} g(t, x) \approx \Delta g(t, x),
	\end{align}
	where $\delta x = (\delta^+ x + \delta^- x)/2$ and,
	\begin{align}
		& \nabla^\pm g(t,x) = \pm \frac{g(t,x^\pm) - g(t,x)}{\delta^\pm x},\\
		& \nabla g(t, x) = \frac{\delta^+ x \nabla^- g(t,x) + \delta^-x \nabla^+ g(t,x)}{2\delta x},\\
		&\Delta g(t, x) = \frac{\nabla^+ g(t,x) - \nabla^- g(t,x)}{\delta x}.
	\end{align}
	For the integral in $\mathcal{G}^{\xi,X} g(t,x)$, we can approximate the small jumps as an additional diffusion component and the remained integral as a discrete summation,
	\begin{align}
		&\int_{\mathbb{R}} \big( g(t,x+z) - g(t,x) - z \partial_x g(t,x) 1_{\{ z \le 1 \}} \big) \nu(t, x, \diff z) \\
		&\approx \frac{1}{2} \overline{\sigma}^2(t, x) \partial_{xx} g(t,x) - \overline{\mu}(t, x) \partial_x g(t,x) + \int_{\mathbb{R} \backslash (I_x-x)} \big( g(t,x+z) - g(t,x)  \big)  \nu(t, x, \diff z) \\
		&\approx \frac{1}{2} \overline{\sigma}^2(t, x) \partial_{xx} g(t,x) - \overline{\mu}(t, x) \partial_x g(t,x) + \sum_{y \in \mathbb{S} \backslash \{ x \}} \big( g(t,y) - g(t,x)  \big) \overline{\nu}(t, x, y),
	\end{align}
	where $\overline{\nu}(t, x, y) = \int_{I_y - x} \nu(t, x, \diff z)$, and,
	\begin{align}
		& I_y = [y-\delta^-y/2, y+\delta^+y/2),\ y \in \mathbb{S}^o,\\
		& I_y = (-\infty,y+\delta^+y/2),\ y = y_0, \\
		& I_y = [y-\delta^-y/2, \infty),\ y = y_n,
	\end{align}
	and,
	\begin{align}
		\overline{\sigma}^2(t,x) = \int_{I_x-x} z^2 \nu(t, x, \diff z),\ \overline{\mu}(t,x) = \sum_{y \in \mathbb{S} \backslash\{x\}}(y-x) \int_{I_{y}-x} 1_{\{|z|\le 1\}} \nu(t, x, \diff z).
	\end{align}
	Putting the approximations above together, we get an approximation $\mathcal{G}^{\zeta, Y}$ for $\mathcal{G}^{\xi, X}$ and set it as the generator of CTMC $(\zeta, Y)$.
	\begin{align}
		\mathcal{G}^{\xi, X}g(t,x) &\approx \mathcal{G}^{\zeta, Y}  g(t,x)\\
		&= \frac{g(t+\delta_t, x) - g(t, x)}{\delta_t} + \big(\mu(t, x) - \overline{\mu}(t, x) \big) \nabla g(t, x) + \frac{1}{2} \big(\sigma^2(t, x) + \overline{\sigma}^2(t, x) \big) \Delta g(t,x) \\
		&+ \sum_{y \in \mathbb{S} \backslash \{ x \}} \big( g(t,x+y) - g(t,x)  \big) \overline{\nu}(t, x, y),
	\end{align}
	for $(t, x) \in \mathbb{T} \times \mathbb{S}^o$, while for $(t, x) \in \mathbb{T} \times \{ y_0, y_n \}$, we set
	\begin{align}
		\mathcal{G}^{\zeta, Y}  g(t,x) = \frac{g(t+\delta_t, x) - g(t, x)}{\delta_t}.
	\end{align}
	This means the spatial boundary states $y_0$ and $y_n$ are set as absorbing.
	
	We can get the transition rates from the formula of $\mathcal{G}^{\zeta, Y}$. Let $G_{(t,x) \to (s,y)}$ be the transition rate from state $(t,x)$ to $(s,y)$ for $(t,x) \ne (s,y)$. Then for $(t, x) \in \mathbb{T} \times \mathbb{S}^o$,
	\begin{align}
		& G_{(t,x) \to (t + \delta_t,x)} = \frac{1}{\delta_t},\\
		& G_{(t,x) \to (t, x^+)}  = \big( \mu(t,x) - \overline{\mu}(t,x) \big) \frac{\delta^-x }{2\delta^+x\delta x} + \frac{1}{2\delta^+ x\delta x} \big(\sigma^2(t, x) + \overline{\sigma}^2(t, x) \big) +  \overline{\nu}(t, x, x^+),\\
		& G_{(t,x) \to (t, x^-)} = -\big( \mu(t,x) - \overline{\mu}(t,x) \big) \frac{\delta^+x }{2\delta^-x\delta x}+ \frac{1}{2\delta^- x\delta x} \big(\sigma^2(t, x) + \overline{\sigma}^2(t, x) \big) + \overline{\nu}(t, x, x^-),\\
		& G_{(t, x) \to (t, y)} = \overline{\nu}(t, x, y),\ y \ne x^\pm, x.
	\end{align}
	Other transition rates among different states are all zero, and,
	\begin{align}
		G_{(t,x) \to (t,x)} = -\sum_{(s,y) \in (\mathbb{T} \times \mathbb{S}) \backslash \{(t,x)\}} G_{(t,x) \to (s,y)}.
	\end{align}
	We put the transition rates between spatial states into the matrix ${\pmb G}_t^Y$ with elements
	\begin{align}
		{\pmb G}_t^Y(x, y) = G_{(t,x) \to (t, y)},\ x, y \in \mathbb{S}.
	\end{align}
	The corresponding generator form is
	\begin{align}
		\mathcal{G}_t^Y g(t, x) = \sum_{y \in \mathbb{S}} {\pmb G}_t^Y(x, y) g(t, y).
	\end{align}
	Then $\mathcal{G}^{\zeta, Y}$ can be rewritten as
	\begin{align}
		\mathcal{G}^{\zeta, Y}g(t, x) &= \frac{g(t+\delta_t,x) - g(t,x)}{\delta_t} + \mathcal{G}_t^Y g(t, x) \\ 
		&= \frac{g(t+\delta_t,x) - g(t,x)}{\delta_t} + \sum_{y \in \mathbb{S}} {\pmb G}_t^Y(x, y) g(t,y).
	\end{align}
	
	In the following sections, we first develop the pricing methods of American Parisian options under the approximating CTMC model and then show the convergence of approximate prices to exact ones under the continuous model.

     \begin{remark}
         In the time-homogeneous case, we note that the spatial transition rates of $Y$ are not modulated by $\zeta$ and hence $Y$ alone is a time-homogeneous CTMC.
     \end{remark}

\section{Down-In American Parisian Options under the CTMC Approximation}
        \label{sec:dictmc}
        In this section, we focus on perpetual and finite-maturity down-in American Parisian options under the approximating CTMC model. We first present the pricing approach for the finite-maturity case and then show how complexity reduction can be achieved for the perpetual case.  
 \subsection{The Finite-Maturity Case}
	In this subsection, we assume that $T < \infty$.  In the CTMC model $(\zeta, Y)$, $Y$ approximates the spatial dynamics of $X$ and $\zeta$ approximates the natural clock such that $\zeta_t \approx \zeta_0 + t = t$ with $\zeta_0 = 0$. Therefore, the discounted payoff at exercise can be approximated as
	\begin{align}
		e^{-r\tau}  f( X_\tau) 1_{\{ \tau_{L, D}^{-} \le \tau \le T\}} \approx e^{-r\zeta_{\tau}}  f( Y_\tau) 1_{\{ \tau_{L, D}^{-} \le \tau, \zeta_{\tau} \le T \}}.
	\end{align}
    The reason to add such an approximation is that we can do conditioning on $\zeta_{\tau_{L, D}^-}$ instead of $\tau_{L, D}^-$  which eases the preceding derivations as the former has a discrete distribution while the latter is a continuous random variable. Assuming that the current duration of excursion is $0$, we have the approximate down-in American Parisian option price at time $t$:
	\begin{align}
		C_{fi}(t,x) := \sup_{\tau \in \mathcal{T}_t}\mathbb{E}_{t,x} \big[ e^{-r(\zeta_{\tau} - t)}  f( Y_\tau) 1_{\{ \tau_{L, D}^- \le \tau, \zeta_{\tau} \le T \}} \big],
	\end{align}
	where $\mathbb{E}_{t,x}[\cdot] = \mathbb{E}_{t,x}[\cdot | \zeta_t = t, Y_t = x]$ and $\mathcal{T}_t$ is the set of stopping times taking value in $[t,\infty]$. With a light abuse of notation, we still use $\tau_{L, D}^-$ to denote the Parisian stopping time defined from $Y$.

    We note that $C_{fi}(t, x)$ is essentially a perpetual American option pricing problem with a Parisian stopping time constraint. The option is activated at $\tau_{L, D}^-$ and after that it becomes a vanilla American option. By the strong Markovian property of CTMC, we have that
    \begin{align}
        C_{fi}(t, x) = \mathbb{E}_{t, x} \Big[ e^{-r(\zeta_{\tau_{L, D}^-} - t)} \sup_{\tau \in \mathcal{T}_{\tau_{L, D}^-}}\mathbb{E}\big[ e^{-r(\zeta_\tau - \zeta_{\tau_{L, D}^-})} f(Y_\tau) 1_{\{ \zeta_\tau \le T \}} | \zeta_{\tau_{L, D}^-}, Y_{\tau_{L, D}^-} \big] \Big],
    \end{align}
    where the inner side is the price of a vanilla American option. Integrating it w.r.t. the distribution of $(\zeta_{\tau_{L, D}^-}, Y_{\tau_{L, D}^-})$ yields a decomposition of $C_{fi}(t, x)$ as follows.
	\begin{theorem}\label{fdi}
		$C_{fi}(t,x)$ can be decomposed as,
		\begin{align}
			C_{fi}(t,x) = \sum_{(s,y) \in \mathbb{T} \times \mathbb{S}} e^{-r(s-t)} h(t,x; s,y) c_f(s,y),
		\end{align}	
		where
		\begin{align}
			& h(t,x; s,y) = \mathbb{P}_{t,x}\big[  \zeta_{\tau_{L,D}^-} =s, Y_{\tau_{L,D}^-} = y \big], \\
			& c_f(s, y) = \sup_{\tau \in \mathcal{T}_s} \mathbb{E}_{s,y} \big[ e^{-r(\zeta_\tau - s)} f(Y_\tau) 1_{\{ \zeta_{\tau} \le T \}} \big].
		\end{align}
	\end{theorem}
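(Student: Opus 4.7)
The plan is to prove Theorem \ref{fdi} in two conceptually distinct steps: first, establish the intermediate identity already displayed just before the statement, which rewrites $C_{fi}(t,x)$ as a single outer expectation of $e^{-r(\zeta_{\tau_{L,D}^-}-t)}\, c_f(\zeta_{\tau_{L,D}^-}, Y_{\tau_{L,D}^-})$; and second, expand that outer expectation into a sum over the discrete support of the random state.

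For the first step, I would begin from the definition of $C_{fi}(t,x)$ and observe that any $\tau$ with $\tau < \tau_{L,D}^-$ contributes nothing because the indicator $1_{\{\tau_{L,D}^- \le \tau\}}$ vanishes, so one may restrict attention to $\tau \in \mathcal{T}_{\tau_{L,D}^-}$. I would then factor the discount as $e^{-r(\zeta_\tau - t)} = e^{-r(\zeta_{\tau_{L,D}^-} - t)}\, e^{-r(\zeta_\tau - \zeta_{\tau_{L,D}^-})}$ and use the tower property with $\mathcal{F}_{\tau_{L,D}^-}$ to pull the first factor outside the conditional expectation. The strong Markov property of the CTMC $(\zeta, Y)$ at $\tau_{L,D}^-$ then identifies the remaining inner problem with the vanilla American optimal stopping problem defining $c_f$, now started from the random state $(\zeta_{\tau_{L,D}^-}, Y_{\tau_{L,D}^-})$.

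The second step is essentially bookkeeping: since $\zeta$ only visits points of $\mathbb{T}$ and $Y$ only visits points of $\mathbb{S}$, the joint law of $(\zeta_{\tau_{L,D}^-}, Y_{\tau_{L,D}^-})$ is a discrete mass function supported on $\mathbb{T}\times\mathbb{S}$, and by definition this mass function is exactly $h(t,x;s,y)$. Expanding the outer expectation as a weighted sum over this support immediately yields the claimed formula.

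The hard part will be rigorously justifying the Bellman-type interchange of outer expectation and supremum in the first step. The ``$\ge$'' direction is obtained by pasting, for each $\varepsilon>0$, an $\varepsilon$-optimal post-activation rule to the rule ``wait until $\tau_{L,D}^-$'', producing an admissible $\tau$ that nearly attains the right-hand side. The ``$\le$'' direction requires a measurable selection of an $\varepsilon$-optimal post-activation rule for every realization of the random state; because $\mathbb{T}\times\mathbb{S}$ is countable, one can simply pick $\tau^\ast_{s,y}$ individually for each $(s,y)$ and stitch them together on the partition $\{\zeta_{\tau_{L,D}^-}=s,\, Y_{\tau_{L,D}^-}=y\}$. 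Admissibility of the glued stopping time follows from the discreteness of the index set, and sending $\varepsilon\downarrow 0$ closes the argument. This is precisely where the earlier embedding of the time dimension into a discrete grid pays off.
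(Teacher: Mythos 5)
Your proposal follows essentially the same route as the paper, which justifies Theorem \ref{fdi} in a single step by conditioning at $\tau_{L,D}^-$, invoking the strong Markov property to identify the post-activation problem with the vanilla American value $c_f$, and then integrating against the discrete law $h(t,x;\cdot,\cdot)$; your write-up merely makes explicit the $\varepsilon$-optimal pasting and the countable-state selection that the paper leaves implicit. One cosmetic slip: the gluing of $\varepsilon$-optimal continuation rules on the partition $\{\zeta_{\tau_{L,D}^-}=s,\,Y_{\tau_{L,D}^-}=y\}$ is what establishes the ``$\ge$'' inequality, whereas the ``$\le$'' inequality needs no selection at all — only the tower property and the fact that every post-activation continuation value is dominated by $c_f(\zeta_{\tau_{L,D}^-},Y_{\tau_{L,D}^-})$ — so you have attributed the selection argument to the wrong direction.
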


    For $h(t,x; s, y)$, we derive a linear system for it as follows.
	\begin{proposition}\label{prop:htx-linear-system}
		$h(t, x; s, y) = 0$ for $s < t$. And for $(t,x), (s,y) \in \mathbb{T} \times \mathbb{S}$ with $s \ge t$,
		\begin{align}
			h(t, x; s, y) &= 1_{\{ x < L \}} \sum_{z \ge L, s' \ge t} \mathbb{P}_{t,x} \big[  \tau_L^+ < D, Y_{\tau_L^+} = z,\zeta_{\tau_L^+} = s' \big] h(s',z; s,y) \\
			&\quad + 1_{\{ x < L \}}  \mathbb{P}_{t,x} \big[  \tau_L^+ \ge D, Y_D = y, \zeta_D = s \big] \\
			&\quad + 1_{\{ x \ge L \}} \sum_{z < L, s' \ge t} \mathbb{P}_{t,x} \big[ \zeta_{\tau_L^-} = s', Y_{\tau_L^-} = z \big] h(s', z; s, y),
		\end{align}
		where $\tau_L^+ = \inf\{ t \ge 0: Y_t \ge L \}$ and $\tau_L^- = \inf\{ t \ge 0: Y_t < L \}$.
	\end{proposition}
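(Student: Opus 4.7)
The plan is to derive the linear system by a first-passage decomposition based on the initial position of $x$ relative to the barrier $L$, and, in the case $x < L$, by whether the running excursion reaches duration $D$ before an up-crossing. The strong Markov property of the time-homogeneous 2D CTMC $(\zeta, Y)$ will be the main workhorse throughout.

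First I would dispose of the boundary case $s < t$. Under $\mathbb{P}_{t,x}$ we have $\zeta_t = t$ and $\zeta$ is non-decreasing, so any stopping time $\tau \ge t$ satisfies $\zeta_\tau \ge t$; in particular $\zeta_{\tau_{L,D}^-} \ge t$, so $\mathbb{P}_{t,x}[\zeta_{\tau_{L,D}^-} = s] = 0$ whenever $s < t$, giving $h(t,x;s,y) = 0$.

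Next, for $x \ge L$, I note that the running excursion below $L$ is empty, so the Parisian clock cannot start until $Y$ first moves below $L$, i.e.\ until $\tau_L^-$. At that instant the chain lies in some state $z < L$ with clock value $s'$, and conditional on $(\zeta_{\tau_L^-}, Y_{\tau_L^-}) = (s', z)$ the strong Markov property implies that the subsequent probability of $(\zeta_{\tau_{L,D}^-}, Y_{\tau_{L,D}^-}) = (s, y)$ is exactly $h(s', z; s, y)$ (the restarted chain begins an excursion with zero elapsed duration, matching the implicit convention in the definition of $h$). Summing over admissible $(s', z)$ produces the third line of the recursion. For $x < L$, the excursion is in progress starting from time $t$, and I would condition on whether $\tau_L^+ \ge D$ or $\tau_L^+ < D$. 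In the first sub-case the excursion reaches duration $D$, the Parisian time fires at internal time $D$, and the joint constraint $\{Y_D = y, \zeta_D = s\}$ gives the second line directly. In the second sub-case the excursion is interrupted at $\tau_L^+$ with $Y_{\tau_L^+} = z \ge L$ and $\zeta_{\tau_L^+} = s'$, from which the process restarts by the strong Markov property, contributing the first line after summing over $(s',z)$.

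The main obstacle is to make sure the strong Markov property is applied to a chain whose Parisian excursion clock is properly reset at each decomposition point: after an up-crossing from $x < L$, the restarted state $z \ge L$ automatically carries zero excursion duration, and after a down-crossing from $x \ge L$, the new state $z < L$ starts a fresh excursion, both of which are precisely the initial conditions implicit in the definition of $h$. Measurability of the events $\{\tau_L^\pm < D\}$, $\{Y_{\tau_L^\pm} = z\}$, $\{\zeta_{\tau_L^\pm} = s'\}$ under the natural filtration of $(\zeta, Y)$, together with the discreteness of $\mathbb{T}$, ensures that the sums over $(s', z)$ are countable and that Fubini / the tower property apply without analytic subtleties, completing the derivation.
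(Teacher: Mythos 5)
Your proposal is correct and follows essentially the same route as the paper's proof: split on $x<L$ versus $x\ge L$, further condition on $\{\tau_L^+<D\}$ versus $\{\tau_L^+\ge D\}$ in the former case (noting $\tau_{L,D}^-=D$ on the latter event), and apply the strong Markov property at $\tau_L^+$ or $\tau_L^-$ with the excursion clock reset to zero. The additional remarks on the trivial case $s<t$ and on measurability are harmless elaborations of what the paper leaves implicit.
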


    \begin{proposition}\label{prop:cfi-linear-system}
		The discounted option price $\widetilde{C}_{fi}(t,x) = e^{-rt} C_{fi}(t, x)$ satisfies the following linear system:
		\begin{align}
			\widetilde{C}_{fi}(t,x) &= 1_{\{ x < L \}} \sum_{z \ge L, s \ge t} \mathbb{P}_{t,x} \big[  \tau_L^+ < D, Y_{\tau_L^+} = z,\zeta_{\tau_L^+} = s \big] \widetilde{C}_{fi}(s,z) \\
			&\quad + 1_{\{ x < L \}} \sum_{(s,y) \in \mathbb{T}\times \mathbb{S}, s \ge t}  \mathbb{P}_{t,x} \big[  \tau_L^+ \ge D, Y_D = y,\zeta_D = s \big] e^{-rs} c_f(s, y) \\
			&\quad + 1_{\{ x \ge L \}} \sum_{z < L, s \ge t} \mathbb{P}_{t,x} \big[ \zeta_{\tau_L^-} = s, Y_{\tau_L^-} = z \big] \widetilde{C}_{fi}(s, z). 
		\end{align}
	\end{proposition}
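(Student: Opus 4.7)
The plan is to derive the linear system by substituting the recursion from Proposition \ref{prop:htx-linear-system} into the decomposition from Theorem \ref{fdi}. The starting point is the identity
\begin{align}
\widetilde{C}_{fi}(t,x) = e^{-rt} C_{fi}(t,x) = \sum_{(s,y)\in \mathbb{T}\times\mathbb{S}} e^{-rs}\, h(t,x;s,y)\, c_f(s,y),
\end{align}
which follows immediately by multiplying the formula of Theorem \ref{fdi} through by $e^{-rt}$. The key observation is that, for any fixed entry point $(s',z)$, the tail sum $\sum_{(s,y)} e^{-rs}\, h(s',z;s,y)\, c_f(s,y)$ equals exactly $\widetilde{C}_{fi}(s',z)$, since the factor $e^{-rs} = e^{-r(s-s')}\cdot e^{-rs'}$ reproduces the discount pattern in Theorem \ref{fdi} evaluated at $(s',z)$.

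Next I would plug the three-branch recursion from Proposition \ref{prop:htx-linear-system} for $h(t,x;s,y)$ into the sum. The indicator $1_{\{x<L\}}$ and $1_{\{x\ge L\}}$ pass straight through the $(s,y)$-summation. For the first and third branches, $h$ appears in the form $h(s',z;s,y)$ with $(s,y)$ still being the summation variable; applying Fubini to exchange the order of the $(s,y)$ sum with the $(s',z)$ sum, and then recognising the inner sum as $\widetilde{C}_{fi}(s',z)$ via the identity above, produces the first and third terms of the claimed system. For the middle branch, the relevant summand $\mathbb{P}_{t,x}[\tau_L^+\ge D,\,Y_D=y,\,\zeta_D=s]$ already carries no embedded $h$, so it multiplies $e^{-rs}c_f(s,y)$ directly and appears unchanged in the statement.

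Finally I would confirm the boundary case $s < t$: since $h(t,x;s,y)=0$ whenever $s<t$, the ranges ``$s\ge t$'' and ``$s'\ge t$'' in the statement are automatic, and restricting the summations accordingly is merely bookkeeping. The only technical point is justifying the interchange of summations; this is immediate once one observes that $c_f(s,y)$ is dominated by $\sup_{(s,y)}|c_f(s,y)|$, which is finite under the standing assumption that $f$ grows at most to values rendering the vanilla American option price finite, and $h$ is a probability mass so $\sum_{(s,y)} h(t,x;s,y)\le 1$. I expect no substantive obstacle: the proposition is essentially a linear algebraic consequence of Theorem \ref{fdi} combined with Proposition \ref{prop:htx-linear-system}, and the main care needed is in tracking which variable is the ``outer'' entry state and which is the ``inner'' terminal state of $h$.
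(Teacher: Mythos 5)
Your proposal is correct and follows essentially the same route as the paper: substitute the three-branch recursion of Proposition \ref{prop:htx-linear-system} for $h(t,x;s,y)$ into the discounted decomposition $\widetilde{C}_{fi}(t,x)=\sum_{(s,y)}e^{-rs}h(t,x;s,y)c_f(s,y)$ from Theorem \ref{fdi}, exchange the order of summation, and identify the inner sums over $(s,y)$ as $\widetilde{C}_{fi}(s',z)$. Your explicit remarks on the vanishing of $h$ for $s<t$ and the justification of the summation interchange are points the paper leaves implicit, but they do not change the substance of the argument.
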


	Proposition \ref{prop:cfi-linear-system} implies a recursive approach to calculate $\widetilde{C}_{fi}(t,x)$.
	\begin{itemize}
		\item For $t = T^+ := \min\{s \in \mathbb{T}: s> T\}$, $\widetilde{C}_{fi}(T^+, x) = 0$.
		
		\item For $t = T^+-\delta_t, T^+ - 2\delta_t, \ldots$, given $\widetilde{C}_{fi}(s,x)$ for all $s > t$, we solve the linear system for $\widetilde{C}_{fi}(t, x)$ (a reformulation of Proposition \ref{prop:cfi-linear-system}):
		\begin{align}\label{eq:cfi-linear-system}
			\widetilde{C}_{fi}(t,x) &= 1_{\{ x < L \}} \sum_{z \ge L} h^+(D, t, x, z) \widetilde{C}_{fi}(t,z) + 1_{\{ x \ge L \}} \sum_{z < L} h^-(t, x, z) \widetilde{C}_{fi}(t, z) \\
			&\quad + 1_{\{ x < L \}} u^+(D, t, x)  + 1_{\{ x < L \}} v(D, t, x)  + 1_{\{ x \ge L \}} u^-(t, x), 
		\end{align}
		where 
		\begin{align}
			& v(D, t, x) = \sum_{(s,y) \in \mathbb{T}\times \mathbb{S}}  \mathbb{P}_{t,x} \big[  \tau_L^+ \ge D, Y_D = y,\zeta_D = s \big] e^{-rs} c_f(s, y),\\
			& u^-(t, x) = \sum_{z < L, s > t} \mathbb{P}_{t,x} \big[ \zeta_{\tau_L^-} = s, Y_{\tau_L^-} = z \big] \widetilde{C}_{fi}(s, z),\\
			& h^+(D, t, x, z) = \mathbb{P}_{t,x} \big[  \tau_L^+ < D, Y_{\tau_L^+} = z,\zeta_{\tau_L^+} = t \big],\\
			& h^-(t, x, z) = \mathbb{P}_{t,x} \big[ \zeta_{\tau_L^-} = t, Y_{\tau_L^-} = z \big],\\
			& u^+(D, t, x) = \sum_{z \ge L, s > t} \mathbb{P}_{t,x} \big[  \tau_L^+ < D, Y_{\tau_L^+} = z,\zeta_{\tau_L^+} = s \big] \widetilde{C}_{fi}(s,z).
		\end{align}
	\end{itemize}
	
	

Next we analyze each of the quantities involved in the recursion above.

\subsubsection{Analysis of $c_f(s, y)$}

We note that $c_f(s, y)$ is the price of a perpetual vanilla American option written on $(\zeta, Y)$ jointly. By the general theory of optimal stopping, $c_f(s,y)$ satisfies the variational inequalities for all $(s,y) \in \mathbb{T} \times \mathbb{S}$ (\cite{eriksson2015american}),
	\begin{align}
		\begin{cases}
			\mathcal{G}^{\zeta, Y} c_f(s, y) \le 0,\\
			c_f(s,y) \ge f(y) 1_{\{ s \le T \}},\\
			\big(\mathcal{G}^{\zeta, Y} c_f(s, y)\big) \big( c_f(s,y) - f(s, y) 1_{\{ s \le T \}} \big) = 0.
		\end{cases}
	\end{align}
	The system above can be solved recursively.
	\begin{itemize}
		\item First, $c_f(T^+, y) = 0$.
		\item For $s = T^+-\delta_t, T^+-2\delta_t,\ldots $, given that $c_f(s+\delta_t, y)$ has been found for all $y$. $c_f(s,y)$ can be found by solving the following system (where we write the action of $\mathcal{G}^{\zeta, Y}$ explicitly) with the Lemke's pivoting method (\cite{lemke1968complementary}).
		\begin{align}
			\begin{cases}
				\sum_{y \in \mathbb{S}} {\pmb G}_s^Y c_f(s,y) - \frac{1}{\delta_t} c_f(s,y) \le -\frac{1}{\delta_t} c_f(s+\delta_t, y),\\
				c_f(s, y) \ge f(y),\\
				\big( \sum_{y \in \mathbb{S}} {\pmb G}_s^Y c_f(s,y) - \frac{1}{\delta_t} c_f(s,y)  + \frac{1}{\delta_t} c_f(s+\delta_t, y) \big) \big( c_f(s,y) - f(y) \big) = 0.
			\end{cases}
		\end{align}
		Let
		\begin{align}
			{\pmb c}_f(s) = \big( c_f(s,y) \big)_{y \in \mathbb{S}},\ {\pmb f} = \big( f(y) \big)_{y \in \mathbb{S}}.
		\end{align}
		Then the system can be written in a vectorized form
		\begin{align}\label{eq:variational-ineq-finite-down-in}
			\min \big( ({\pmb I} - {\pmb G}_t^Y \delta_t) {\pmb c}_f(s) - {\pmb c}_f(s+\delta_t), {\pmb c}_f(s) - {\pmb f}   \big) = ({\pmb 0})_{n+1},
		\end{align}
  where $(\pmb{0})_{n+1}$ is an all-zero vector in $\mathbb{R}^{n+1}$ and ${\pmb I}$ is the identity matrix in $\mathbb{R}^{(n+1)\times (n+1)}$.
		The linear complementary problem (LCP) can be formulated as
		\begin{align}\label{LCP-finite-down-in}
		(\operatorname{LCP}(A, \psi))\left\{\begin{array}{l}
		A \overline{C}+\psi \geq \mathbf{0}, \\
		\overline{C} \geq \mathbf{0}, \\
		\overline{C}^{\top}(A \overline{C}+\psi)=0,
		\end{array}\right.
		\end{align}
		where $A={\pmb I} - {\pmb G}_t^Y \delta_t$, $\overline{C}={\pmb c}_f(s) - {\pmb f}$, and $\psi=A{\pmb f}-{\pmb c}_f(s+\delta_t)$.
	\end{itemize}

\subsubsection{Analysis of $v(D, t, x)$, $u^-(t, x)$, $h^+(D, t, x, z)$, $h^-(t, x, z)$ and $u^+(D, t, x)$}

In this part, we derive ODE and linear systems for $v(D, t, x)$, $u^-(t, x)$, $h^+(D, t, x, z)$, $h^-(t, x, z)$, and $u^+(D, t, x)$ which imply analytical solutions for them in terms of matrix exponentiation or inversion. The derivations of those equations are shown in Appendix \ref{app:proof}. In the following,  ${\pmb I}_L^+ = \operatorname{diag}\big((1_{x\ge L})_{x \in \mathbb{S}}\big)$ and ${\pmb I}_L^- = {\pmb I} - {\pmb I}_L^+$. We also use the symbol $(A)_{k \times l}$ to represent a new matrix composed of the first $k$ rows and the first $l$ columns of the matrix $A$.

\begin{proposition}\label{prop:v-system}
     $v(D, t, x)$ satisfies the following ODE system:
     \begin{align}
		\begin{cases}
			\partial_D v(D, t, x) = \frac{v(D, t+\delta_t, x) - v(D, t, x)}{\delta_t} + \mathcal{G}_t^Y v(D, t, x),\quad t \le T, D>0, x < L,\\
			v(D, t, x)= 0,\quad D>0, x \ge L,\\
                v(D, t, x)= 0,\quad t>T, \\
                v(0,t,x) = e^{-rt} c_f(t, x).
		\end{cases}
	\end{align}
 \end{proposition}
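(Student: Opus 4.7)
The plan is to recognize $v(D, t, x)$ as a Feynman--Kac expectation for the CTMC $(\zeta, Y)$ killed upon entering $\{y \ge L\}$ and then invoke the Kolmogorov backward equation for this killed process. Setting $g(s, y) := e^{-rs} c_f(s, y)$ and summing the joint law against $g$, the definition of $v$ rearranges as
\begin{equation}
v(D, t, x) = \mathbb{E}_{t,x}\!\left[ g(\zeta_D, Y_D) \, 1_{\{ \tau_L^+ > D \}} \right].
\end{equation}

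First, I would verify the three degenerate cases. At $D = 0$ the indicator equals $1$ and $(\zeta_0, Y_0) = (t, x)$ under $\mathbb{P}_{t,x}$, giving $v(0, t, x) = g(t, x) = e^{-rt} c_f(t, x)$. For $x \ge L$ with $D > 0$, the process starts in the killing set, so $\tau_L^+ = 0 < D$ and the indicator vanishes, yielding $v(D, t, x) = 0$. For $t > T$, the variational inequality defining $c_f$ in the preceding subsection forces $c_f(s, y) = 0$ whenever $s > T$; since $\zeta_D \ge t > T$ almost surely under $\mathbb{P}_{t,x}$, the integrand is zero and $v(D, t, x) = 0$.

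Next, for the interior equation on $\{x < L\}$, extend $v(\cdot,\cdot,\cdot)$ by zero on $\{x \ge L\}$, which is compatible with the boundary condition just established. The strong Markov property of $(\zeta, Y)$ applied at duration $h > 0$ gives
\begin{equation}
v(D, t, x) = \mathbb{E}_{t,x}\!\left[ v(D - h, \zeta_h, Y_h) \, 1_{\{ \tau_L^+ > h \}} \right].
\end{equation}
Differentiating in $h$ at $h = 0^+$, using the smoothness of the CTMC transition semigroup together with the fact that the zero extension converts the killing on $\{y \ge L\}$ into an absorbing condition already encoded in $\mathcal{G}^{\zeta, Y}$, yields $-\partial_D v(D, t, x) + \mathcal{G}^{\zeta, Y} v(D, t, x) = 0$ on $\{x < L\}$. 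Substituting the explicit decomposition $\mathcal{G}^{\zeta, Y} v(D, t, x) = (v(D, t+\delta_t, x) - v(D, t, x))/\delta_t + \mathcal{G}_t^Y v(D, t, x)$ from Section~\ref{sec:coctmc} produces the displayed ODE.

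The only mild subtlety is ensuring the jump terms of $\mathcal{G}_t^Y$ that would send $x < L$ into some $y \ge L$ are treated as killing rather than continuation; this is automatic once $v$ is extended by zero on $\{y \ge L\}$, since those terms then contribute only $-\overline{\nu}(t, x, y) v(D, t, x)$, which is exactly the killing rate appearing in the backward equation. With this bookkeeping in hand, the result follows from standard backward-Kolmogorov / Feynman--Kac machinery for CTMCs on the countable state space $\mathbb{T} \times \mathbb{S}$, so no serious technical obstacle arises beyond the first-step analysis itself.
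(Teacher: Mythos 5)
Your proof is correct and follows essentially the same route as the paper's: the interior equation is the backward Kolmogorov equation $\partial_D v = \mathcal{G}^{\zeta,Y} v$ for the process killed on $\{y\ge L\}$ (which the paper obtains by citing the argument of Proposition~2.3 in \cite{zhang2023general} rather than rederiving the first-step analysis as you do), and the degenerate cases are checked directly from the definition of $v$, with your treatment of the $t>T$ case via $c_f(s,y)=0$ for $s>T$ being, if anything, slightly more careful than the paper's. The only cosmetic point is that the defining event is $\{\tau_L^+\ge D\}$ rather than $\{\tau_L^+>D\}$, which matters only for reading off $v(0,t,x)$ when $x\ge L$ and is immaterial for the ODE on $\{x<L\}$.
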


Next we show the procedures of calculating $v(D, t, x)$. Let $m$ be an integer in $\{0, 1, 2, \ldots, n\}$ such that $y_m<L$, $y_{m+1}\ge L$, $(\pmb{0})_{n-m}$ be the all-zero vector in $\mathbb{R}^{n-m}$, and
	\begin{align}
         & \overline{{\pmb v}}(D, t)=(v(D,t,x))_{x\in \{y_0, y_1, \cdots, y_m\}},\\
         & {\pmb v}(D, t)=(v(D,t,x))_{x\in \mathbb{S}},\\
		&\overline{{\pmb c}}_f(t)=\big( e^{-rt} c_f(t, x)  \big)_{x\in \{y_0, y_1, \cdots, y_m\}}.
	\end{align}
	Then, for $x<L$, we have the solution
	\begin{align}\label{eq:vDt-integral-recursion}
		\overline{\pmb v}(D, t) = \exp\big( (\overline{\pmb G}_t^Y - \overline{\pmb I}/\delta_t) D \big)\overline{\pmb c}_f(t) + \int_{0}^{D} \exp\big((\overline{\pmb G}_t^Y - \overline{\pmb I}/\delta_t) (D-s) \big) \frac{1}{\delta_ t}\overline{\pmb v}(s, t+\delta_t) \diff s,
	\end{align}
 where $\overline{{\pmb G}}_t^Y=({\pmb G}_t^Y)_{(m+1)\times (m+1)}$ and $\overline{\pmb I} = {\pmb I}_{(m+1)\times (m+1)}$.
 
Let $\pmb M=\overline{{\pmb G}}_t^Y - {\pmb I}/\delta_t$. According to Proposition \ref{prop:v-system},  $v(D,t,x)$ can be calculated by recursively evaluating the integrals in \eqref{eq:vDt-integral-recursion} from $T^+-\delta_t$. We get that, for $t = T^+ - (j+1)\delta_t$ with $j=0, 1,2,\cdots, \frac{T^+}{\delta_t}-1$,
\begin{align}
\overline{\pmb v}(D, t) = \sum_{k=0}^{j} (\frac{1}{\delta_t})^ke^{D\pmb{M}}\frac{D^k}{k!}\overline{\pmb c}_f(t+k\delta_t).
\end{align}

For $x\ge L$, we have $v(D, t, x)=0$. Then
 \begin{align}
     {\pmb v}(D, t)=\begin{bmatrix}
  \overline{\pmb{v}}(D,t)\\
(\pmb{0})_{n-m}
\end{bmatrix}\in \mathbb{R}^{n+1}.
 \end{align}

\begin{proposition}\label{prop:h^+-system}
$h^+(D,t,x,z)$ satisfies the following ODE system:
    \begin{align}
		\begin{cases}
			\partial_D h^+(D,t,x, z)= (\mathcal{G}_t^Y - 1/\delta_t) h^+(D,t,x, z),\quad x < L, t\le T, D>0,\\
			h^+(D, t, x, z) = 1_{\{ x = z \}},\quad  x \ge L, t\le T, D>0, \\
             h^+(0, t, x, z) = 0.
		\end{cases}
	\end{align}
$h^+(D, t, x, z)$ can be decomposed into two parts:
	\begin{align}
		h^+(D, t, x, z) = h^+_1(t, x, z) - h^+_2(D, t, x, z).
	\end{align}
	Here $h^+_1(t, x, z)$ satisfies,
	\begin{align}
		\begin{cases}
			(\mathcal{G}_t^Y - 1/\delta_t) h^+_1(t, x, z) = 0,\quad x < L, t \le T, \\
			h^+_1(t, x, z) = 1_{\{ x=z \}},\quad x \ge L, t \le T. 
		\end{cases}
	\end{align}
	And $h^+_2(D, t, x, z)$ satisfies
	\begin{align}
		\begin{cases}
			\partial_D h^+_2(D, t, x, z) = (\mathcal{G}_t^Y - 1/\delta_t) h^+_2(D, t, x, z),\quad x < L, t \le T, D > 0,\\
			h^+_2(D, t, x, z) = 0,\quad x \ge L, t \le T, D > 0,\\
			h^+_2(0, t, x, z) = h^+_1(t, x, z).
		\end{cases}
	\end{align}
 \end{proposition}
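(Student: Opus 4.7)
The plan is to derive the ODE system for $h^+(D,t,x,z)$ by first rewriting the probability as a Feynman--Kac functional of a time-frozen CTMC, and then to obtain the decomposition $h^+ = h_1^+ - h_2^+$ by a standard linear subtraction. The key observation is that the event $\{\zeta_{\tau_L^+} = t\}$ coincides with ``no $\zeta$-jump occurs during $[0,\tau_L^+]$''. Because $\zeta$ jumps at a constant rate $1/\delta_t$ independently of $Y$, and because $Y$ evolves with the frozen generator $\pmb{G}_t^Y$ for as long as $\zeta=t$, integrating out the $\zeta$-jump time yields the representation
\[
h^+(D,t,x,z) = \mathbb{E}^{(t)}_x\bigl[e^{-\tau_L^+/\delta_t}\, 1_{\{\tau_L^+ < D,\, Y_{\tau_L^+}=z\}}\bigr],
\]
where under $\mathbb{E}^{(t)}$ the process $Y$ is a time-homogeneous CTMC with generator $\pmb{G}_t^Y$.

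From this representation the ODE system follows in a standard manner. For $x \ge L$ the hitting time $\tau_L^+$ equals $0$ almost surely, giving the boundary condition $h^+(D,t,x,z) = 1_{\{x=z\}}$. At $D=0$ the event $\{\tau_L^+ < 0\}$ is empty, giving $h^+(0,t,x,z) = 0$ on $\{x<L\}$. For $x<L$ and $D>0$, a Kolmogorov backward / Feynman--Kac argument (or equivalently a first-step analysis conditioning on the first transition of $(\zeta,Y)$ which produces a renewal equation one then differentiates in $D$) yields $\partial_D h^+ = (\mathcal{G}_t^Y - 1/\delta_t) h^+$, exactly the ODE claimed.

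For the decomposition I would define $h_1^+(t,x,z) := \lim_{D\to\infty} h^+(D,t,x,z) = \mathbb{E}^{(t)}_x\bigl[e^{-\tau_L^+/\delta_t}\, 1_{\{Y_{\tau_L^+}=z\}}\bigr]$; this limit exists because the discount factor $e^{-\tau_L^+/\delta_t}$ controls the tail. Letting $D\to\infty$ in the ODE shows that $h_1^+$ solves the stationary Dirichlet-type problem $(\mathcal{G}_t^Y - 1/\delta_t) h_1^+ = 0$ on $\{x<L\}$ with boundary value $1_{\{x=z\}}$ on $\{x\ge L\}$. Then $h_2^+ := h_1^+ - h^+$ satisfies the stated system by linearity: it inherits the ODE from $h^+$, takes the value $1_{\{x=z\}} - 1_{\{x=z\}} = 0$ on $\{x\ge L\}$, and has initial datum $h_2^+(0,t,x,z) = h_1^+(t,x,z) - 0 = h_1^+(t,x,z)$.

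The main obstacle will be making the first reduction step rigorous: converting the constraint $\{\zeta_{\tau_L^+} = t\}$ into the exponential discount factor $e^{-\tau_L^+/\delta_t}$ and justifying that under this conditioning $Y$ is governed by the frozen-time generator $\pmb{G}_t^Y$. This requires a careful decomposition of the joint chain $(\zeta,Y)$ up to its first $\zeta$-jump together with an application of the strong Markov property. Once this reduction is in hand, both the ODE derivation and the decomposition argument are standard for CTMCs on a finite state space.
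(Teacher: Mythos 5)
Your proposal is correct, and it reaches the ODE by a genuinely different route than the paper. The paper applies the space--time generator identity $\partial_D h^+(D,t,x,z) = \mathcal{G}^{\zeta,Y} h^+(D,t,x,z)$ (in the spirit of its Proposition \ref{prop:v-system} and of Proposition 2.3 in \cite{zhang2023general}) and then observes that the forward-shifted term $\mathbb{P}_{t+\delta_t,x}\big[\tau_L^+<D,\,Y_{\tau_L^+}=z,\,\zeta_{\tau_L^+}=t\big]$ vanishes identically, because a clock started at $t+\delta_t$ can never satisfy $\zeta_{\tau_L^+}=t$; the $\frac{1}{\delta_t}$ time-difference term therefore collapses to $-h^+/\delta_t$, yielding $(\mathcal{G}_t^Y-1/\delta_t)h^+$. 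You instead integrate out the first clock jump explicitly, using that $\{\zeta_{\tau_L^+}=t\}$ is exactly the event that the independent $\mathrm{Exp}(1/\delta_t)$ clock has not rung by $\tau_L^+$, to obtain the Feynman--Kac representation $h^+(D,t,x,z)=\mathbb{E}^{(t)}_x\big[e^{-\tau_L^+/\delta_t}1_{\{\tau_L^+<D,\,Y_{\tau_L^+}=z\}}\big]$ under the frozen chain with generator ${\pmb G}_t^Y$, from which the killed-ODE follows by standard first-passage theory. Your route makes the probabilistic origin of the $-1/\delta_t$ killing term transparent and reduces everything to a time-homogeneous finite-state computation, at the cost of having to justify the clock-decoupling step carefully (which you correctly flag); the paper's route is shorter and uniform with its treatment of $v$, $h^-$, $u^\pm$. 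For the decomposition, the paper simply checks by linearity that $h_1^+-h_2^+$ (defined through their respective systems) satisfies the system for $h^+$, whereas you construct $h_1^+$ as the $D\to\infty$ limit and set $h_2^+:=h_1^+-h^+$; this is a mild strengthening since it also identifies $h_1^+$ probabilistically and shows the limit exists, and the verification of the three conditions for $h_2^+$ is then immediate, as you say.
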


The solutions to the equations of Proposition \ref{prop:h^+-system} can be given in closed-form as follows. 
	\begin{align}
		& {\pmb H}_1^+(t) := \big(h^+_1(t, x, z)\big)_{x, z \in \mathbb{S}} = \big( {\pmb I}_L^- ({\pmb I}/\delta_t -  {\pmb G}^Y_t) + {\pmb I}_L^+ \big)^{-1} {\pmb I}_L^+,\\
		& {\pmb H}_2^+(D, t) :=  \big( h^+_2(D, t, x, z) \big)_{x, z \in \mathbb{S}} = \exp\big( {\pmb I}_L^- ({\pmb G}^Y_t - {\pmb I/\delta_t}) D \big) {\pmb I}_L^-{\pmb H}_1^+(t),\\
		& {\pmb H}^+(D, t) := \big( h^+(D, t, x, z) \big)_{x, z \in \mathbb{S}} = {\pmb H}_1^+(t) - {\pmb H}_2^+(D, t).
	\end{align}

\begin{proposition}\label{prop:h^-system}
    $h^-(t,x, z)$ satisfies the following linear system:
    \begin{align}
		\begin{cases}
			\mathcal{G}_t^Y h^-(t,x,z)-\frac{h^-(t,x,z)}{\delta_t} = 0,\quad x \ge L, t\le T\\
			h^-(t, x, z) = 1_{\{ x = z \}} ,\quad x < L, t\le T. 
		\end{cases}
	\end{align} 
\end{proposition}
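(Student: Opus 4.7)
The plan is to treat $h^-(t,x,z)$ as a first-passage quantity for the joint CTMC $(\zeta,Y)$ and to derive the stated equations via first-step analysis on the two competing families of jump clocks.

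For the boundary condition, when $x < L$ we have $Y$ starting below $L$, so $\tau_L^- = 0$ trivially; this forces $\zeta_{\tau_L^-}=t$ and $Y_{\tau_L^-}=x$, yielding $h^-(t,x,z)=1_{\{x=z\}}$ directly from the definition.

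For the interior equation with $x \ge L$, I would condition on the first transition of $(\zeta,Y)$ out of $(t,x)$. By construction of the approximating chain there are two families of competing exponential clocks: a single $\zeta$-clock of rate $1/\delta_t$ that advances $\zeta$ to $t+\delta_t$ while holding $Y$ fixed, and $Y$-clocks of rates $\pmb{G}_t^Y(x,y)$ for $y\ne x$ that move $Y$ within the current $\zeta$-layer. These two jump types never coincide, and the total exit rate is $1/\delta_t-\pmb{G}_t^Y(x,x)$. If the $\zeta$-clock fires first, then $\zeta_{\tau_L^-}\ge t+\delta_t>t$, so the event $\{\zeta_{\tau_L^-}=t\}$ fails and this branch contributes $0$. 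If a $Y$-clock fires first and takes $Y$ to some $y<L$, then $\tau_L^-$ is realized at that instant with $\zeta_{\tau_L^-}=t$ and $Y_{\tau_L^-}=y$, contributing $1_{\{y=z\}}$. If $Y$ instead moves to some $y\ge L$, the strong Markov property yields $h^-(t,y,z)$ from the new state. Weighting these branches by their first-transition probabilities and invoking the boundary relation $h^-(t,y,z)=1_{\{y=z\}}$ for $y<L$, I obtain
\begin{align}
\bigl(1/\delta_t-\pmb{G}_t^Y(x,x)\bigr)\,h^-(t,x,z) = \sum_{y\ne x} \pmb{G}_t^Y(x,y)\,h^-(t,y,z),
\end{align}
which rearranges to $\mathcal{G}_t^Y h^-(t,x,z)-h^-(t,x,z)/\delta_t = 0$ for $x\ge L$.

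The only delicate point is the bookkeeping of the two jump families and the observation that $\{\zeta_{\tau_L^-}=t\}$ excludes every branch in which a $\zeta$-jump precedes the first descent below $L$; this is automatic because $\zeta$ is non-decreasing with jump size $\delta_t$. An equivalent and perhaps more uniform derivation would apply Dynkin's formula to $(\zeta,Y)$ stopped at $\tau_L^-$, viewing the $\zeta$-clock as a killing of rate $1/\delta_t$ on $\{x\ge L\}$ with terminal payoff $1_{\{x=z\}}$ on $\{x<L\}$; this mirrors the argument style used for $v(D,t,x)$ and $h^+(D,t,x,z)$ in Propositions \ref{prop:v-system} and \ref{prop:h^+-system}.
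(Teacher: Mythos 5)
Your proof is correct and follows essentially the same route as the paper: the paper asserts $\mathcal{G}^{\zeta,Y}h^-(t,x,z)=0$ on $\{x\ge L\}$ and observes that $\mathbb{P}_{t+\delta_t,x}\big[\zeta_{\tau_L^-}=t,\,Y_{\tau_L^-}=z\big]=0$, which is precisely your observation that a $\zeta$-jump before the first descent below $L$ makes the event $\{\zeta_{\tau_L^-}=t\}$ impossible; your first-step analysis on the competing clocks is just the explicit probabilistic unpacking of that harmonicity claim, and your boundary argument for $x<L$ matches the paper's.
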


Then $h^-(t,x, z)$ is calculated as follows:
	\begin{align}
		{\pmb H}^-(t) := \big( h^-(t, x, z) \big)_{x, z \in \mathbb{S}} = \big( {\pmb I}_L^+ ({\pmb I}/\delta_t - {\pmb G}_t^Y) + {\pmb I}_L^- \big)^{-1} {\pmb I}_L^-.
	\end{align}

\begin{remark}
    In Propositions \ref{prop:h^+-system} and \ref{prop:h^-system}, if $Y$ is a time-homogeneous CTMC, then $h^+(D,t,x,z)$ and $h^-(t,x,z)$ are independent of $t$ and hence we only need to do the calculations above once for all $t$.
\end{remark}

\begin{proposition}\label{prop:u^+-system}
    $u^+(D,t,x)$ satisfies the following equations:
    \begin{align}
		\begin{cases}
			 \partial_D u^+(D,t,x) = \frac{1}{\delta_t} \sum_{z \ge L} h^+(D, t+\delta_t, x, z) \widetilde{C}_{fi}(t+\delta_t, z) \\
                    \qquad\qquad\qquad+ \big( u^+(D,t+\delta_t,x) - u^+(D,t,x) \big) /\delta_t \\
                    \qquad\qquad\qquad+ \mathcal{G}_t^Y u^+(D,t,x),\quad x < L, t\le T, D>0,\\
			u^+(D, t, x) = 0 ,\quad t > T, \\
              u^+(D, t, x) = 0 ,\quad x \ge L, \\
             u^+(0, t, x) = 0.
		\end{cases}
	\end{align}
\end{proposition}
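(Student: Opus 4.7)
The plan is to verify the three boundary/initial conditions by direct inspection and then derive the ODE for $x < L$, $t \le T$, $D > 0$ by a first-step analysis of the Markov process $(\zeta, Y)$. The trivial cases are immediate: $u^+(0, t, x) = 0$ because $\{\tau_L^+ < 0\} = \emptyset$; $u^+(D, t, x) = 0$ for $x \ge L$ because then $\tau_L^+ = 0$ forces $\zeta_{\tau_L^+} = t$, which is excluded by the strict inequality $s > t$ in the defining sum; and $u^+(D, t, x) = 0$ for $t > T$ because the recursive construction of $\widetilde{C}_{fi}$ gives $\widetilde{C}_{fi}(s, z) = 0$ whenever $s > T$, so every term in the sum vanishes.

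For the main ODE I would condition on the time $T_1$ of the first transition of $(\zeta, Y)$ starting from $(t, x)$ with $x < L$. Here $T_1$ is exponential with rate $\lambda_{t,x} := 1/\delta_t - {\pmb G}_t^Y(x, x)$, and conditional on a jump, the $\zeta$-coordinate advances to $t + \delta_t$ with probability $(1/\delta_t)/\lambda_{t,x}$, while otherwise $Y$ jumps to some $y \ne x$ with probability ${\pmb G}_t^Y(x, y)/\lambda_{t,x}$. I would then dissect the event $\{\tau_L^+ < D,\ \zeta_{\tau_L^+} > t\}$ case by case. If $\zeta$ jumps first, the process restarts at $(t+\delta_t, x)$ with remaining window $D - T_1$, the requirement $\zeta_{\tau_L^+} > t$ is automatically satisfied, and the continuation further splits according to whether the subsequent hitting of $L$ occurs before any second $\zeta$-jump (yielding $\sum_{z \ge L} h^+(D - T_1, t+\delta_t, x, z) \widetilde{C}_{fi}(t+\delta_t, z)$) or after at least one second $\zeta$-jump (yielding $u^+(D - T_1, t+\delta_t, x)$). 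If $Y$ jumps to some $y < L$, the process restarts at $(t, y)$ and contributes $u^+(D - T_1, t, y)$. If $Y$ jumps to some $y \ge L$, then $\tau_L^+ = T_1$ with $\zeta_{\tau_L^+} = t$, which feeds $h^+(D, t, x, y)$ but contributes $0$ to $u^+$.

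Assembling these contributions and integrating against the exponential density of $T_1$ gives the integral representation
\begin{align*}
u^+(D, t, x) = \int_0^D e^{-\lambda_{t,x} s} \Big[ &\tfrac{1}{\delta_t} \sum_{z \ge L} h^+(D - s, t+\delta_t, x, z) \widetilde{C}_{fi}(t+\delta_t, z) \\
&+ \tfrac{1}{\delta_t} u^+(D - s, t+\delta_t, x) + \sum_{y < L,\, y \ne x} {\pmb G}_t^Y(x, y) u^+(D - s, t, y) \Big] \diff s.
\end{align*}
Changing variables $r = D - s$ to factor out $e^{-\lambda_{t,x} D}$ and applying Leibniz's rule then yields $\partial_D u^+ = -\lambda_{t,x} u^+ + F(D)$, where $F(D)$ is the bracketed integrand evaluated at $s = 0$. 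Substituting $\lambda_{t,x} u^+ = u^+/\delta_t - {\pmb G}_t^Y(x, x) u^+$ and using the already-verified boundary condition $u^+(D, t, y) = 0$ for $y \ge L$ to reassemble ${\pmb G}_t^Y(x, x) u^+(D, t, x) + \sum_{y < L,\, y \ne x} {\pmb G}_t^Y(x, y) u^+(D, t, y)$ into $\mathcal{G}_t^Y u^+(D, t, x)$ produces precisely the claimed ODE.

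The main obstacle is the bookkeeping after the first $\zeta$-jump: one must recognize that the continuation event $\{\tau_L^+ < D - T_1\}$ from $(t + \delta_t, x)$ decomposes further according to whether any \emph{subsequent} $\zeta$-jump precedes the hitting of $L$, which is what generates the split between the $h^+$-source term and the $u^+$-recursive term in the ODE. The other subtlety is carefully excluding the spatial-jump-to-above-$L$ case from the $u^+$-recursion (since it is already carried by the $h^+(D, t, x, z)$ term in Proposition \ref{prop:cfi-linear-system}); once these two points are pinned down, the remainder is a routine renewal-equation differentiation.
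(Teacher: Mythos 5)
Your proposal is correct, and it reaches the stated system by a genuinely different route from the paper. The paper's proof is operator-based: it asserts the backward equation $\partial_D \mathbb{P}_{t,x}[\tau_L^+<D,\,Y_{\tau_L^+}=z,\,\zeta_{\tau_L^+}=s'] = \mathcal{G}^{\zeta,Y}\mathbb{P}_{t,x}[\cdots]$ (by appeal to the argument of Proposition 2.3 in \cite{zhang2023general}), expands $\mathcal{G}^{\zeta,Y}$ into its time-difference and spatial parts, and obtains the source term $\frac{1}{\delta_t}\sum_{z\ge L}h^+(D,t+\delta_t,x,z)\widetilde{C}_{fi}(t+\delta_t,z)$ purely by re-indexing: after shifting the initial state to $(t+\delta_t,x)$, the sum over $s'>t$ splits into the $s'=t+\delta_t$ slice (which is $h^+$) and the $s'>t+\delta_t$ tail (which is $u^+(D,t+\delta_t,x)$). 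You instead derive a renewal integral equation by conditioning on the first transition of $(\zeta,Y)$ and differentiate it in $D$; the same $h^+$ versus $u^+$ split appears in your argument as the explicit path decomposition according to whether the clock ticks again before $Y$ reaches $[L,\infty)$. Your route is more self-contained (it does not lean on the cited backward-equation result) and makes the probabilistic origin of each term transparent, at the cost of more bookkeeping; the paper's is shorter. Your justifications of the three boundary conditions match the paper's, and your reason for the case $t>T$ (namely $\widetilde{C}_{fi}(s,z)=0$ for all $s>T$) is in fact cleaner than the one printed in the appendix. One caveat: your exit rate $\lambda_{t,x}=1/\delta_t-{\pmb G}_t^Y(x,x)$ is correct only under the zero-row-sum reading of ${\pmb G}_t^Y$ (i.e.\ ${\pmb G}_t^Y(x,x)=-\sum_{y\ne x}{\pmb G}_t^Y(x,y)$, excluding the $1/\delta_t$ transition to $(t+\delta_t,x)$); this is the convention the paper actually uses in its closed-form solutions and in the target ODE itself, but it conflicts with the literal definition ${\pmb G}_t^Y(x,x)=G_{(t,x)\to(t,x)}$ given in Section 2, so it is worth stating explicitly which diagonal you mean before the final reassembly step.
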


 Let
	\begin{align}
		&\overline{{\pmb u}}^+(D, t) :=\big( u^+(D, t, x) \big)_{x \in \{y_0, y_1, \cdots, y_m \}},\\
            &\pmb{u}^+(D,t)=\big( u^+(D, t, x) \big)_{x \in \mathbb{S}}, \\
		&\overline{{\pmb H}}_1^+(t):=({\pmb H}_1^+(t))_{(m+1)\times (n+1)},\\
        & \widetilde{\pmb{C}}_{fi}(t):=\big(\widetilde{C}_{fi}(t,x) \big)_{x \in \mathbb{S}}.
	\end{align}
Then, for $x<L$, the solution is given by
	\begin{align}\label{eq:uDt-integral-recursive}
		\overline{{\pmb u}}^+(D, t) = \int_0^D \exp\big( (\overline{{\pmb G}}_t^Y - \overline{\pmb I}/\delta_t) (D - s)\big) \frac{1}{\delta_t} \big( \overline{{\pmb H}}^+(s, t+\delta_t) \widetilde{\pmb C}_{fi}(t+\delta_t) + \overline{{\pmb u}}^+(s, t+\delta_t) \big)  \diff s,
	\end{align}
	where $\overline{{\pmb H}}^+(s,t)=\overline{{\pmb H}}_1^+(t)-\exp((\overline{{\pmb G}}_t^Y-\overline{\pmb I}/\delta_t)s)\overline{{\pmb H}}_1^+(t)$. $\overline{{\pmb u}}^+(D, t)$ can be found by evaluating the integral in \eqref{eq:uDt-integral-recursive} recursively from $T$. Let $(\pmb{0})_{(m+1)\times (m+1)}$ be the all-zero matrix in $\mathbb{R}^{(m+1)\times (m+1)}$. We get that, for $t = T^+ - (j+1)\delta_t$ with $j=0, 1,2,\cdots, \frac{T^+}{\delta_t}-1$,
	\begin{align}
		\overline{{\pmb u}}^+(D, t) = \sum_{k=0}^{j} \pmb{C}_k(D)\overline{{\pmb H}}_1^+(t)\widetilde{\pmb C}_{fi}(t+k\delta_t),
	\end{align}
 where 
 \begin{align}
\pmb{C}_k(D)=\begin{cases}(\frac{1}{\delta_t})^k\big((-1)^k(\pmb{M}^{-1})^k+\sum_{m=1}^{k+1}(-1)^{k+m}e^{D\pmb{M}}\frac{D^{(m-1)}}{(m-1)!}(\pmb{M}^{-1})^{k+1-m}  \big), k>0,\\
(\pmb{0})_{(m+1)\times (m+1)}, k=0.
\end{cases}
 \end{align}
 
For $x\ge L$, we have that $u^+(D, t, x)=0$. Then
 \begin{align}
     \pmb{u}^+(D,t)=\begin{bmatrix}
  \overline{\pmb{u}}^+(D,t)\\
(\pmb{0})_{n-m}
\end{bmatrix}\in \mathbb{R}^{n+1}.
 \end{align}

\begin{proposition}\label{prop:u^-system}
    $u^-(t,x)$ satisfies the following linear system:
    \begin{align}
		\begin{cases}
			\frac{1}{\delta_t}\sum_{z < L} h^-(t+\delta_t, x, z) \widetilde{C}_{fi}(t+\delta_t, z) + \frac{u^-(t+\delta_t,x)-u^-(t,x)}{\delta_t} + \mathcal{G}^Y_t u^-(t,x)= 0,\quad x \ge L, t\le T\\
			u^-(t,x)= 0,\quad x < L, \\
                u^-(t,x)= 0,\quad t>T.
		\end{cases}
	\end{align}
Let
	\begin{align}
		& {\pmb u}^-(t) = \big( u^-(t,x) \big)_{x \in \mathbb{S}},\\
		& \widetilde{\pmb C}_{fi}(t+\delta_t) = \big( \widetilde{C}_{fi}(t + \delta_t, z) \big)_{z \in \mathbb{S}}.
	\end{align}
\end{proposition}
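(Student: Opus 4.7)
The plan is to verify the two boundary conditions directly from the definition of $u^-$ and to derive the main ODE by an infinitesimal first-step analysis on the CTMC $(\zeta, Y)$, carefully distinguishing first-passage events that occur at $\zeta = t$ from those at $\zeta > t$.

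The boundary conditions are immediate. If $x < L$, then $\tau_L^- = 0$ (since $Y_{\tau_L^-} = Y_0 = x < L$), so $\zeta_{\tau_L^-} = t$ and the sum defining $u^-(t, x)$, which is restricted to $s > t$, is empty. If $t > T$, then the indicator $1_{\{\zeta_\tau \le T\}}$ in the defining expression for $C_{fi}(s, z)$ forces $C_{fi}(s, z) = 0$ whenever $s > T$; since $u^-(t, x)$ involves only $\widetilde{C}_{fi}(s, z)$ with $s > t > T$, it vanishes.

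For the main equation at $x \ge L$, $t \le T$, I condition on the first transition of the CTMC out of $(t, x)$. There are three cases: \textbf{(i)} a time-jump to $(t+\delta_t, x)$, after which $\zeta_{\tau_L^-} \ge t+\delta_t > t$ automatically, so by the strong Markov property the contribution equals $u^-(t+\delta_t, x) + \sum_{z < L} h^-(t+\delta_t, x, z) \widetilde{C}_{fi}(t+\delta_t, z)$, the $h^-$ term separating off crossings that occur within the new time slot (counted by $u^-(t, \cdot)$ but not by $u^-(t+\delta_t, \cdot)$, which requires $s > t+\delta_t$); \textbf{(ii)} a spatial jump to $(t, y)$ with $y \ge L$, contributing $u^-(t, y)$; \textbf{(iii)} a spatial jump to $(t, y)$ with $y < L$, which triggers the crossing at $\zeta_{\tau_L^-} = t$ and thus contributes zero under the $s > t$ restriction. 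Weighting by the conditional jump probabilities, using the total exit rate $1/\delta_t - \pmb{G}_t^Y(x,x)$, extending the spatial sum over $y \in \mathbb{S}$ (permissible since $u^-(t, y) = 0$ for $y < L$), and identifying the resulting sum with $\mathcal{G}_t^Y u^-(t, x)$ then delivers the stated ODE after routine rearrangement.

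The main obstacle is the interface correction in case \textbf{(i)}: a crossing occurring during $\zeta = t+\delta_t$ is accounted for by $u^-(t, \cdot)$ but lies outside the definition of $u^-(t+\delta_t, \cdot)$, and this discrepancy is precisely $\sum_{z<L} h^-(t+\delta_t, x, z) \widetilde{C}_{fi}(t+\delta_t, z)$, which is what forces the ODE to be inhomogeneous with $h^-$ appearing as a source. Once this boundary term is placed with the correct $1/\delta_t$ weighting, the rest reduces to manipulations of CTMC transition rates, so the analytical content of the proposition lies entirely in getting this bookkeeping right.
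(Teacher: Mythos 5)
Your proof is correct and takes essentially the same route as the paper: the paper applies the generator $\mathcal{G}^{\zeta,Y}$ to the first-passage probabilities defining $u^-(t,x)$ and observes that the result vanishes for $x\ge L$, which is precisely the one-step conditioning you carry out, and both arguments hinge on the same bookkeeping whereby the $s'=t+\delta_t$ term left over when comparing the sum over $s'>t$ started from $(t+\delta_t,x)$ with $u^-(t+\delta_t,x)$ produces the source $\frac{1}{\delta_t}\sum_{z<L}h^-(t+\delta_t,x,z)\widetilde{C}_{fi}(t+\delta_t,z)$. The boundary cases $x<L$ and $t>T$ are disposed of in the same way in both arguments.
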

The solution is given by
\begin{align}
			{\pmb u}^-(t) = \big( {\pmb I}^-_L + {\pmb I}^+_L ({\pmb I} - {\pmb G}^Y_t \delta_t) \big)^{-1} {\pmb I}^+_L \big( {\pmb u}^-(t+\delta_t) + {\pmb H}^-(t) \widetilde{\pmb C}_{fi}(t + \delta_t)  \big).
		\end{align} 
Noting that ${\pmb u}^-(T^+) = ({\pmb 0})_{n+1}$, ${\pmb u}^-(t)$ can be calculated recursively from $t = T^+$.

\subsubsection{The Algorithm and Complexity Analysis}

After we obtain all the required quantities, we can calculate the approximate finite-maturity down-in American Parisian option prices by a recursive procedure implied by the Proposition \ref{prop:cfi-linear-system}, which we summarize in Algorithm \ref{algo:finite-down-in-Amer-Pari-option-price}. 

Next we analyze the time complexity of our algorithm. For jump models, constructing $\pmb{G}_t^Y\in \mathbb{R}^{(n+1)\times (n+1)}$ and $\pmb{M}\in \mathbb{R}^{(m+1)\times (m+1)}$ costs $\mathcal{O}(n^2)$ (note that $m$ should be proportional to $n$) and the calculations of the quantities that involve matrix exponentiation, multiplication or inversion cost $\mathcal{O}(n^3)$. For the LCP \eqref{LCP-finite-down-in}, constructing it costs $\mathcal{O}(n^2)$ mainly due to the complexity of calculating $(\pmb{I}-\pmb{G}_t^Y\delta_t)\pmb{f}$, and by \cite{cottle2009linear}, solving the LCP for $\overline{\pmb{c}}_f(t)$ with the Lemke's pivoting method requires $\mathcal{O}(n)$ pivot steps. Therefore, the total complexity for computing $\widetilde{\pmb{C}}_{fi}(t)$ in each recursion step is $\mathcal{O}(n^3)$ for jump models.

As for the diffusion case, the computational cost can be significantly reduced.  Define $\pmb{1}_y=(1_{\{x=y\}})_{x\in \mathbb{S}}$, $L^+=\min \{z\ge L: z\in \mathbb{S}\}$ and $L^-=\max \{z< L: z\in \mathbb{S}\}$. Since the process $Y$ is a birth-and-death process, $1_{\{x<L\}}h^+(D,t,x,z)\neq 0$ only when $z=L^+$ and $1_{\{x\ge L\}}h^-(t,x,z)\neq 0$ only when $z=L^-$. So \eqref{eq:cfi-linear-system} reduces to
\begin{align}\label{eq:cfi-diffusion-linear-system}
\widetilde{C}_{fi}(t,x) &= 1_{\{ x < L \}}  h^+(D, t, x, L^+) \widetilde{C}_{fi}(t,L^+) + 1_{\{ x \ge L \}}  h^-(t, x, L^-) \widetilde{C}_{fi}(t, L^-) \\
			&\quad + 1_{\{ x < L \}} u^+(D, t, x)  + 1_{\{ x < L \}} v(D, t, x)  + 1_{\{ x \ge L \}} u^-(t, x). 
\end{align}
Setting $x=L^+$ and $x=L^-$ respectively in \eqref{eq:cfi-diffusion-linear-system}, we have
\begin{align}
    \quad & \widetilde{C}_{fi}(t,L^+)= h^-(t, L^+, L^-) \widetilde{C}_{fi}(t, L^-)+ u^-(t, L^+),\\
    & \widetilde{C}_{fi}(t,L^-)=  h^+(D, t, L^-, L^+) \widetilde{C}_{fi}(t,L^+)+ u^+(D, t, L^-)+v(D, t, L^-).
\end{align}
Solving the linear system above for $\widetilde{C}_{fi}(t,L^+)$ and $\widetilde{C}_{fi}(t,L^-)$ yields
\begin{align}
    &\widetilde{C}_{fi}(t,L^-)=\frac{h^+(D,t,L^-,L^+)u^-(t,L^+)+u^+(D,t,L^-)+v(D,t,L^-)}{1-h^+(D,t,L^-,L^+)h^-(t,L^+,L^-)}\label{eq:Cfi-L^-}, \\
    &\widetilde{C}_{fi}(t,L^+)=h^-(t,L^+,L^-)\widetilde{C}_{fi}(t,L^-)+u^-(t, L^+)\label{eq:Cfi-L^+}.
\end{align}
The quantities involved above can be computed as follows:
\begin{align}
    &\pmb{h}_1^+(t)=\big(h_1^+(t,x,L^+)\big)_{x\in \mathbb{S}}=\big(\pmb{I}_L^-(\pmb{I}/\delta_t-\pmb{G}_t^Y)+\pmb{I}_L^+\big)^{-1}\pmb{1}_{L^+},\\
    &\pmb{h}_2^+(t)=\big(h_2^+(D,t,x,L^+)\big)_{x\in \mathbb{S}}=\exp\big(\pmb{I}^-_L(\pmb{G}_t^Y-\pmb{I}/\delta_t)D\big)\pmb{I}_L^-\pmb{h}_1^+(t),\\
    &\pmb{h}^+(D,t)=\big(h^+(D,t,x,L^+)\big)_{x\in \mathbb{S}}=\pmb{h}_1^+(t)-\pmb{h}_2^+(t),\\
    &\pmb{h}^-(t)=\big(h^-(t,x,L^-)\big)_{x\in \mathbb{S}}=\big(\pmb{I}^+_L(\pmb{I}/\delta_t-\pmb{G}_t^Y)+\pmb{I}^-_L \big)^{-1}\pmb{1}_{L^-}.
\end{align}
It follows that 
\begin{align}
   & h^+(D,t,L^-,L^+)=\pmb{1}_{L^-}^\top\pmb{h}^+(D,t),\\
   & h^-(t,L^+,L^-)=\pmb{1}_{L^+}^\top\pmb{h}^-(t).
\end{align}
Moreover,
\begin{align}
    &u^-(t,L^+)=\pmb{1}_{L^+}^\top\pmb{u}^-(t),\\
    &u^+(D,t,L^-)=\pmb{1}_{L^-}^\top\pmb{u}^+(D,t),\\
    &v(D,t,L^-)=\pmb{1}_{L^-}^\top\pmb{v}(D,t).
\end{align}
Then $\widetilde{C}_{fi}(t,L^-)$ and $\widetilde{C}_{fi}(t,L^+)$ can be found by substituting the above quantities back into \eqref{eq:Cfi-L^-} and \eqref{eq:Cfi-L^+}. The approximate down-in American Parisian option prices can be computed as follows:
\begin{align}
    \widetilde{\pmb{C}}_{fi}(t)=\pmb{I}_L^-\pmb{h}^+(D,t)\widetilde{C}_{fi}(t,L^+)+\pmb{I}_L^+\pmb{h}^-(t)\widetilde{C}_{fi}(t,L^-)+\pmb{I}_L^{-}\pmb{u}^+(D,t)+\pmb{I}_L^{-}\pmb{v}(D,t)+\pmb{I}_L^{+}\pmb{u}^{-}(t).
\end{align}
The costs of constructing $\pmb{G}_t^Y$ and $\pmb{M}$ are $\mathcal{O}(n)$. Since $\pmb{G}_t^Y$ is a tridiagonal matrix, the calculations of $\pmb{h}_1^+(t)$, $\pmb{h}^-(t)$ and $\pmb{u}^-(t)$ cost $\mathcal{O}(n)$. For any tridiagonal matrix $\pmb{A}\in \mathbb{R}^{n\times n}$ and vector $b\in \mathbb{R}^n$, the matrix exponentiation $\exp{(\pmb{A})}b$ can be approximated by
\begin{equation}
    \exp{(\pmb{A})}b \approx (I-\pmb{A}/k)^{-k}b,  \label{eq:exp-approximation}
\end{equation}
which costs $\mathcal{O}(kn)$ according to \cite{zhang2023general}. Then calculating $\pmb{h}_2^+(D,t)$ and $\pmb{v}(D,t)$ costs $\mathcal{O}(kn)$ using the approximation in \eqref{eq:exp-approximation}. As for $\pmb{u}^+(D,t)$, we need to do the following calculation:
\begin{align}
    \overline{\pmb{u}}^+(D,t)=&\sum_{k=1}^{j} \bigg((\frac{1}{\delta_t})^k\big((-1)^k(\pmb{M}^{-1})^k\\
    &+\sum_{m=1}^{k+1}(-1)^{k+m}e^{D\pmb{M}}\frac{D^{(m-1)}}{(m-1)!}(\pmb{M}^{-1})^{k+1-m}  \big)\bigg)\overline{{\pmb h}}_1^+(t)\big(\pmb{1}_{L^+}^\top\widetilde{\pmb C}_{fi}(t+k\delta_t)\big),
\end{align}
where $\overline{{\pmb h}}_1^+(t)=\big(h_1^+(t,x,L^+)\big)_{x\in \{y_0, y_1, \cdots, L^-\}}$ and $j=(T^+-t)/\delta_t$. The calculation of $\overline{\pmb{u}}^+(D,t)$ involves solving the system $\pmb{M}x=b_1$ and calculating $\exp(D\pmb{M})b_2$ given some vectors $b_1, b_2\in \mathbb{R}^{m+1}$ and the tridiagonal matrix $\pmb{M}\in \mathbb{R}^{(m+1)\times (m+1)}$, which cost $\mathcal{O}(n)$ and $\mathcal{O}(kn)$ respectively. In summary, computing $\pmb{u}^+(D,t)$ costs $\mathcal{O}(kn)$. Constructing the LCP \eqref{LCP-finite-down-in} costs $\mathcal{O}(n)$ as $\pmb{I}-\pmb{G}_t^Y\delta_t$ tridiagonal and by \cite{cottle2009linear}, solving \eqref{LCP-finite-down-in} costs $\mathcal{O}(n)$. Considering all the costs analyzed above, the total complexity for calculating $\widetilde{\pmb{C}}_{fi}(t)$ in each recursion step is $\mathcal{O}(kn)$ in the diffusion case.

Considering that the largest number of recursion steps for the time variable is $n_t=T^+/\delta_t$ in Algorithm \ref{algo:finite-down-in-Amer-Pari-option-price}, the overall complexity for calculating the price of a finite-maturity down-in American Parisian option is $\mathcal{O}(n_tn^3)$ in general and $\mathcal{O}(n_tkn)$ in the diffusion case.

\begin{algorithm}[htbp]
\caption{Calculate $\widetilde{C}_{fi}(s,x)=\sup_{\tau \in \mathcal{T}_s}\mathbb{E}_{s,x} \big[ e^{-r\zeta_{\tau}}  f( Y_\tau) 1_{\{ \tau_{L, D}^- \le \tau, \zeta_{\tau} \le T \}} \big]$}
\label{algo:finite-down-in-Amer-Pari-option-price}
\begin{algorithmic}[1]
\Require $D, L, n, x, f(\cdot), T, \delta_t, r$ 
\State $\widetilde{\pmb{C}}_{fi}(T+\delta_t)\gets (\pmb{0})_{n+1},  \pmb{c}_f(T+\delta_t)\gets (\pmb{0})_{n+1}, \overline{\pmb{u}}^+(D,T+\delta_t)\gets (\pmb{0})_{m+1}$ 
\State $ \pmb{u}^-(T+\delta_t)\gets (\pmb{0})_{n+1}, \overline{\pmb{v}}(D,T+\delta_t)\gets (\pmb{0})_{m+1}, \pmb{I}_{L}^+\gets \text{diag}((1_{x\ge L})_{x\in \mathbb{S}}), \pmb{I}_{L}^-\gets \text{diag}((1_{x< L})_{x\in \mathbb{S}}) $
\For{$j\in \{0,1,2,\cdots, \frac{T^+-s}{\delta_t}-1\}$}
\State $t=T^+-(j+1)\delta_t$
\State calculate $\pmb{G}_t^Y,\overline{\pmb{G}}_t^Y$
\State $\pmb M\gets \overline{{\pmb G}}_t^Y - \overline{\pmb I}/\delta_t$
\State calculate $\pmb{M}^{-1}$
        \State  solve $\min \big( ({\pmb I} - {\pmb G}_t^Y \delta_t) {\pmb c}_f(t) - {\pmb c}_f(t+\delta_t), {\pmb c}_f(t) - {\pmb f}   \big) = {\pmb 0}$
        \State ${\pmb H}_1^+(t) \gets \big( {\pmb I}_L^- ({\pmb I}/\delta_t -  {\pmb G}^Y_t) + {\pmb I}_L^+ \big)^{-1} {\pmb I}_L^+$
        \State ${\pmb H}_2^+(D, t) \gets \exp\big( {\pmb I}_L^- ({\pmb G}^Y_t - {\pmb I/\delta_t}) D \big) {\pmb I}_L^-{\pmb H}_1^+(t)$
        \State $\pmb{H}^+(D,t)\gets \pmb{H}_1^+(t)-\pmb{H}_2^+(D,t)$
        \State ${\pmb H}^-(t) \gets \big( {\pmb I}_L^+ ({\pmb I}/\delta_t - {\pmb G}_t^Y) + {\pmb I}_L^- \big)^{-1} {\pmb I}_L^-$
        \State obtain $\overline{{\pmb H}}_1^+(t), \overline{\pmb c}_f(t)$
        \State $\overline{{\pmb u}}^+(D, t) \gets \sum_{k=0}^{j} \pmb{C}_k(D) \overline{{\pmb H}}_1^+(t)\widetilde{\pmb C}_{fi}(t+k\delta_t)
        $
        \State ${\pmb u}^-(t) \gets \big( {\pmb I}^-_L + {\pmb I}^+_L ({\pmb I} - {\pmb G}^Y_t \delta_t) \big)^{-1} {\pmb I}^+_L \big( {\pmb u}^-(t+\delta_t) + {\pmb H}^-(t) \widetilde{\pmb C}_{fi}(t + \delta_t)  \big)$
        \State $\overline{\pmb v}(D, t) \gets \sum_{k=0}^{j} (\frac{1}{\delta_t})^ke^{D\pmb{M}}\frac{D^k}{k!}\overline{\pmb c}_f(t+k\delta_t)$
        \State obtain $\pmb{u}^+(D, t), \pmb{v}(D, t) $
        \State $\widetilde{\pmb{C}}_{fi}(t)=(\pmb{I}-\pmb{I}^{-}_L\pmb{H}^{+}(D,t)\pmb{I}_L^{+}-\pmb{I}_L^{+}\pmb{H}^{-}(t)\pmb{I}_L^{-})^{-1}(\pmb{I}_L^{-}\pmb{u}^+(D,t)+\pmb{I}_L^{-}\pmb{v}(D,t)+\pmb{I}_L^{+}\pmb{u}^{-}(t))$
      \EndFor
    \State \textbf{return} $\widetilde{\pmb{C}}_{fi}(s)$
\end{algorithmic}
\end{algorithm}

\subsection{The Perpetual and Time-Homogeneous Case}
	
	When the down-in American Parisian option is perpetual and the underlying model is time-homogeneous, the pricing problem can be significantly simplified. Unlike the finite-maturity or time-inhomogeneous case, we do not approximate the exercise time $\tau$ as $\zeta_\tau$ in the discounted payoff, and hence the pricing problem under the CTMC model can be written as follows:
	\begin{align}
		C_{pi}(x) := \sup_{\tau \in \mathcal{T}}\mathbb{E}_x \big[ e^{-r\tau}  f( Y_\tau) 1_{\{ \tau \ge \tau_{L, D}^-, \tau < \infty \}} \big],
	\end{align}
	where $\mathbb{E}_x[\cdot] = \mathbb{E}[\cdot | Y_0 = x]$ and $\mathcal{T}$ is the set of stopping times taking value in $[0,\infty]$.
	
	As the generator is independent of $t$, by construction, the transition rates of $Y$ are now independent of $\zeta$. Therefore, $Y$ itself is a CTMC and the generator $\mathcal{G}^Y $ is independent of $t$ with a corresponding transition rate matrix denoted as ${\pmb G}^Y$.
	
	Conditional on the activation time $\tau_{L, D}^-$ and $Y_{\tau_{L, D}^-}$, we can decompose $C_{pi}(x)$ as follows.
	\begin{theorem}\label{pdi}
		The perpetual down-in American Parisian option price can be decomposed as
		\begin{align}
			C_{pi}(x) = \sum_{y \in \mathbb{S}} h_p(r, x; y) c_p(y),
		\end{align}
		where
		\begin{align}
			&h_p(r,x;y) = \mathbb{E}_x \Big[ e^{-r\tau_{L,D}^-} 1_{\big\{ Y_{\tau_{L,D}^-} = y\big\}} \Big],\\
			&c_p(y) = \sup_{\tau \in \mathcal{T}} \mathbb{E}_y [e^{-r\tau} f(Y_\tau)].
		\end{align}	
	\end{theorem}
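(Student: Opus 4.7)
The plan is to leverage the strong Markov property of the time-homogeneous CTMC $Y$ at the Parisian activation time $\tau_{L,D}^-$ and decompose the optimization into an outer expectation over the activation event and an inner perpetual American stopping problem. By the remark preceding the statement, in the time-homogeneous case $Y$ alone is a CTMC whose transition rates do not depend on $\zeta$, so this clean factorization is available.

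First, I would restrict attention to stopping times $\tau \in \mathcal{T}$ for which the indicator in the definition of $C_{pi}(x)$ can be non-zero, namely those with $\tau \geq \tau_{L,D}^-$ on $\{\tau<\infty\}$. On this event I would write
\begin{align}
e^{-r\tau} f(Y_\tau)\, 1_{\{\tau \geq \tau_{L,D}^-,\, \tau<\infty\}} = e^{-r\tau_{L,D}^-}\,\bigl[ e^{-r(\tau-\tau_{L,D}^-)} f(Y_\tau) 1_{\{\tau<\infty\}} \bigr],
\end{align}
and condition on $\mathcal{F}_{\tau_{L,D}^-}$. The strong Markov property of $Y$ identifies the bracketed quantity, after conditioning, with an expectation of the shifted process started afresh from $Y_{\tau_{L,D}^-}$. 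Taking the supremum over the post-activation portion of $\tau$ produces exactly $c_p(Y_{\tau_{L,D}^-})$, yielding the upper bound $C_{pi}(x) \leq \mathbb{E}_x\bigl[ e^{-r\tau_{L,D}^-} c_p(Y_{\tau_{L,D}^-}) 1_{\{\tau_{L,D}^- < \infty\}}\bigr]$.

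Next, for the matching lower bound, I would construct admissible stopping times $\tau$ by concatenating $\tau_{L,D}^-$ with a state-dependent stopping rule that is ($\varepsilon$-)optimal for $c_p(y)$. Since $\mathbb{S}$ is a finite set, a measurable selector of near-optimal post-activation stopping times indexed by $Y_{\tau_{L,D}^-}$ is immediate; standard perpetual optimal stopping theory for CTMCs (as invoked for $c_f$ above, and in \cite{eriksson2015american}) in fact guarantees an exactly optimal selector, so the two bounds match. This gives
\begin{align}
C_{pi}(x) = \mathbb{E}_x\bigl[ e^{-r\tau_{L,D}^-} c_p(Y_{\tau_{L,D}^-}) 1_{\{\tau_{L,D}^- < \infty\}}\bigr].
\end{align}
Finally, since $Y_{\tau_{L,D}^-}$ takes values in $\mathbb{S}$, decomposing the expectation by the value of $Y_{\tau_{L,D}^-}$ and pulling out the (deterministic) quantities $c_p(y)$ gives
\begin{align}
C_{pi}(x) = \sum_{y \in \mathbb{S}} \mathbb{E}_x\bigl[ e^{-r\tau_{L,D}^-} 1_{\{Y_{\tau_{L,D}^-}=y\}} \bigr] c_p(y) = \sum_{y \in \mathbb{S}} h_p(r,x;y)\, c_p(y),
\end{align}
which is the claim.

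The step I expect to be the main obstacle is the equality (rather than just inequality) in the supremum manipulation, i.e., justifying that the global optimizer can be written as $\tau_{L,D}^-$ followed by a post-activation rule chosen optimally conditional on $Y_{\tau_{L,D}^-}$. This is a dynamic-programming/measurable-selection argument; it is mild here because $\mathbb{S}$ is finite and $c_p$ is the value of a well-posed perpetual optimal stopping problem for a time-homogeneous CTMC, but it must be stated carefully to make sure the concatenated rule is indeed a valid $\mathcal{T}$-stopping time.
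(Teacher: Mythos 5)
Your proposal is correct and follows essentially the same route as the paper, which obtains the decomposition by conditioning on $(\tau_{L,D}^-, Y_{\tau_{L,D}^-})$ via the strong Markov property so that the inner problem becomes a vanilla perpetual American option, and then integrating against the distribution of the activation state. The paper does not spell out the two-sided supremum argument (upper bound by conditioning, lower bound by concatenating an optimal post-activation rule), so your more careful treatment of that step is a refinement of, not a departure from, the paper's argument.
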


    \subsubsection{Analysis of $c_p(y)$}
    
	We note that $c_p(y)$ is the price of a vanilla American option on $Y$ and it can be found by solving a system of variational inequalities. For all $y \in \mathbb{S}$,
	\begin{align}
		\begin{cases}
			r c_p(y) - {\mathcal{G}}^Y c_p(y) \ge 0,\\
			c_p(y) \ge f(y),\\
			\big( r c_p(y) - \mathcal{G}^Y c_p(y) \big) \big( c_p(y) - f(y) \big) = 0.
		\end{cases}	
	\end{align}
	Let ${\pmb c}_p = \big( c_p(y) \big)_{y \in \mathbb{S}}$. Then the system above can be written in a vectorized form:
	\begin{align}\label{eq:variational-ineq-perpetual-down-in}
		\min\big( (r{\pmb I} - {\pmb G}^Y) {\pmb c}_p, {\pmb c}_p - {\pmb f}\big) = {\pmb 0}.
	\end{align}
	The linear complementary problem can be formulated as
	\begin{align}\label{LCP:perpetual-down-in}
	(\operatorname{LCP}(A, \psi))\left\{\begin{array}{l}
	A \overline{C}+\psi \geq \mathbf{0}, \\
	\overline{C} \geq \mathbf{0}, \\
	\overline{C}^{\top}(A \overline{C}+\psi)=0,
	\end{array}\right.
	\end{align}
	with $A=r{\pmb I} - {\pmb G}^Y$, $\overline{C}={\pmb c}_p - {\pmb f}$, and $\psi=A{\pmb f}$.
	It can be solved by the Lemke's pivoting method.

    \subsubsection{Analysis of $h_p(r,x;y)$}
 
	We follow \cite{zhang2023general} to calculate $h_p(r,x;y)$. By \cite{zhang2023general} Theorem 2.2, $h_p(r,x;y)$ satisfies a linear system:
	\begin{align}
		h_p(r, x; y)& = 1_{\{ x < L \}} e^{-rD} v_p(D, x; y) + 1_{\{ x < L \}} \sum_{z \ge L}  h_p^+(r, D, x; z) h_p(r, z; y) \label{eq:hqxy}\\
		&\quad + 1_{\{x \ge L\}} \sum_{z < L}  h_p^-(r, x; z) h_p(r, z; y). \label{eq:hqxy-2}
	\end{align}	
	Here $v_p$, $h_p^+$ and $h_p^-$ are the first passage quantities of $Y$ defined as follows:
	\begin{align}
		&v_p(D, x; y) := \mathbb{E}_x\left[ 1_{\{ \tau_L^+ \ge D \}} 1_{\{ Y_D = y \}} \right],\\
		&h^+_p(r, D, x; z) := \mathbb{E}_x\left[ e^{-r \tau_L^+} 1_{\{ \tau_L^+ < D \}} 1_{\{ Y_{\tau_L^+} = z   \}}\right],\\
		&h^-_p(r, x; z) := \mathbb{E}_x\left[ e^{-r \tau_L^-} 1_{\{ Y_{\tau_L^-} = z \}} \right].
	\end{align}
	As shown in \cite{zhang2023general} Proposition 2.4, $v_p$, $h_p^+$ and $h_p^-$ admit closed-form solutions in terms of matrix-vector operations as follows:
	\begin{align}
		&{\pmb V}_p := \left( v_p(D, x; y) \right)_{x, y \in \mathbb{S}} = \exp\left( {\pmb I}_L^- {\pmb G}^Y D \right)\pmb{I}_L^-, \label{eq:v-matrix}\\
		&{\pmb U}^+_{1}(r) := (r{\pmb I}_L^- - {\pmb I}_L^- {\pmb G}^Y + {\pmb I}_L^+ )^{-1} {\pmb I}_L^+, \label{eq:u1-matrix}\\
		&{\pmb U}^+_{2}(r)  := e^{-rD}\exp\left( {\pmb I}_L^- {\pmb G}^Y  D \right) {\pmb I}_L^- {\pmb U}^+_1(r), \label{eq:u2-matrix} \\
		&{\pmb U}_p^+(r) := \left( h^+_p(r,D, x; z) \right)_{x, z \in \mathbb{S}} =  {\pmb U}^+_1(r) - {\pmb U}^+_2(r),\\
		&{\pmb U}^-_p(r) := \left( h^-_p(r, x; z) \right)_{x, z \in \mathbb{S}} = (r{\pmb I}_L^+ - {\pmb I}_L^+ {\pmb G}^Y + {\pmb I}_L^- )^{-1} {\pmb I}_L^-. \label{eq:um-matrix}
	\end{align}
	Let ${\pmb U}_p(r) = \pmb I^-_L {\pmb U}_p^+(r) + \pmb I_L^+ {\pmb U}_p^-(r)$. Then
	\begin{align}
		{\pmb H}_p(r) = \left( h_p(r, x; y) \right)_{x, y \in \mathbb{S}}= e^{-rD} ({\pmb I} - {\pmb U}_p(r))^{-1}\pmb{I}_L^-{\pmb V}_p. \label{eq:H-matrix}
	\end{align}

\subsubsection{The Algorithm and Complexity Analysis}

Combining the analysis above,  the option price can be found as,
	\begin{align}
		{\pmb C}_{pi} &= \big( C_{pi}(x) \big)_{x \in \mathbb{S}} = {\pmb H}_p(r) {\pmb c}_p = e^{-rD} ({\pmb I} - {\pmb U}_p(r))^{-1}\pmb{I}_L^-{\pmb V}_p {\pmb c}_p.
	\end{align}
We summarize the algorithm in Algorithm \ref{algo:perpetual-down-in-Amer-Pari-option-price}. The complexity analysis can be conducted similarly to the finite-maturity case except that we avoid the iteration over time. Therefore, the overall complexity of computing the price of a perpetual down-in American Parisian option is $\mathcal{O}(kn)$ for diffusion models if the approximation \eqref{eq:exp-approximation} is used and $\mathcal{O}(n^3)$ for jump models.  

\begin{algorithm}[htbp]
\caption{Calculate $C_{pi}(x) := \sup_{\tau \in \mathcal{T}}\mathbb{E}_x \big[ e^{-r\tau}  f( Y_\tau) 1_{\{ \tau \ge \tau_{L, D}^-, \tau < \infty \}} \big]$}
\label{algo:perpetual-down-in-Amer-Pari-option-price}
\begin{algorithmic}[1]
\Require $D, L, n, x, f(\cdot), r,  \pmb{G}^Y$ 
\State $\pmb{I}_{L}^+\gets \text{diag}((1_{x\ge L})_{x\in \mathbb{S}}), \pmb{I}_{L}^-\gets \text{diag}((1_{x< L})_{x\in \mathbb{S}})$
\State solve $\min\big( (r{\pmb I} - {\pmb G}^Y) {\pmb c}_p, {\pmb c}_p - {\pmb f}\big) = {\pmb 0}$
\State ${\pmb V}_p \gets  \exp\left( {\pmb I}_L^- {\pmb G}^Y D \right)\pmb{I}_L^-$
\State ${\pmb U}^+_{1}(r) \gets (r{\pmb I}_L^- - {\pmb I}_L^- {\pmb G}^Y + {\pmb I}_L^+ )^{-1} {\pmb I}_L^+$
\State ${\pmb U}^+_{2}(r)  \gets e^{-rD}\exp\left( {\pmb I}_L^- {\pmb G}^Y  D \right) {\pmb I}_L^- {\pmb U}^+_1(r)$
\State ${\pmb U}_p^+(r) \gets  {\pmb U}^+_1(r) - {\pmb U}^+_2(r)$
\State ${\pmb U}^-_p(r) \gets  (r{\pmb I}_L^+ - {\pmb I}_L^+ {\pmb G}^Y + {\pmb I}_L^- )^{-1} {\pmb I}_L^-$
\State ${\pmb U}_p(r) \gets \pmb I^-_L {\pmb U}_p^+(r) + \pmb I_L^+ {\pmb U}_p^-(r)$
\State ${\pmb C}_{pi} \gets  \exp{(-rD)} ({\pmb I} - {\pmb U}_p(r))^{-1}\pmb{I}_L^-{\pmb V}_p {\pmb c}_p$
    \State \textbf{return} $\pmb{C}_{pi}(x)$
\end{algorithmic}
\end{algorithm}

\section{Down-Out American Parisian Options under the CTMC Approximation}
        \label{sec:doctmc}
 
A down-out American Parisian option is canceled when the Parisian stopping time is triggered. In this case, the early exercise has a strong interaction with the down-out event, which makes the problem even more complicated than the down-in case. Therefore, we adopt the state augmentation approach to deal with this situation. To be more specific, we add an additional state $G_t = t - g_{L, t}^-$ to record the current duration of excursion of $Y$ below $L$ up to time $t$. The infinitesimal generator of $(\zeta_t,  G_t, Y_t)$  can be written as,
	\begin{align}
		&\lim_{\varepsilon \downarrow 0}\frac{\mathbb{E}[g(\zeta_\varepsilon, G_\varepsilon, Y_\varepsilon)|\zeta_0 = t, G_0 = d, Y_0 = x] - g(t, d, x)}{\varepsilon} \\
		&= \partial_t g(t, d, x) + \partial_d g(t, d, x) 1_{\{ x < L \}} +  \mathcal{G}_t^Y g(t, d, x),\qquad (t, d, x) \in \mathbb{S}^{\zeta, G , Y},
	\end{align}
	where the second term considers that $G_t$ would be reset to zero once $Y$ crosses $L$ from below, and 
    \begin{align}
       \mathbb{S}^{\zeta, G , Y} = \Big( \mathbb{T} \times [0, \infty) \times  \big( \mathbb{S} \cap (-\infty, L) \big) \Big) \cup \Big( \mathbb{T} \times  \{ 0 \} \times  \big( \mathbb{S} \cap [L, \infty) \big) \Big)
    \end{align}
    is the state space of the process $(\zeta_t,  G_t, Y_t)$.  Then we can approximate the joint process $(\zeta_t, G_t, Y_t)$ with a CTMC $(\zeta_t, D_t, Y_t)$. For $G_t$, we discretize $[0, \infty)$ with a uniform grid
	\begin{align}
		\mathbb{D} = \{ d_i = i\delta_d: i = 0, 1, 2, \ldots \},
	\end{align}
	where $\delta_d$ is the step size. The state space of $(\zeta_t, D_t,  Y_t)$ can be constructed as
	\begin{align}
		\mathbb{S}^{\zeta, D, Y} = \Big( \mathbb{T} \times \mathbb{D} \times  \big( \mathbb{S} \cap (-\infty, L) \big) \Big) \cup \Big( \mathbb{T} \times  \{ 0 \} \times  \big( \mathbb{S} \cap [L, \infty) \big) \Big).
	\end{align} 
	The generator of $(\zeta_t, D_t,  Y_t)$ approximates that of $(\xi_t,  G_t, X_t)$ and can be obtained by further approximate $\partial_d$ with a finite forward difference following the generator of $(\zeta_t, Y_t)$:
	\begin{align}
		\mathcal{G}^{\zeta, D, Y} g(t, d, x) &= \frac{g(t+\delta_t, d, x) - g(t, d, x)}{\delta_t} +  \widetilde{\mathcal{G}}_t^{D, Y} g(t, d, x),\ (t, d,  x) \in \mathbb{S}^{\zeta, D, Y},
	\end{align}
	where
	\begin{align}
		\widetilde{\mathcal{G}}_t^{D,Y} g(t, d, x) := \frac{g(t, d+\delta_d, x) - g(t, d, x)}{\delta_d} 1_{\{ x < L \}} + \sum_{y \in \mathbb{S}} {\pmb G}_t^Y(x,y) g(t, d1_{\{y < L\}}, y).
	\end{align}
 
\subsection{The Finite-Maturity Case}

Similar to the finite-maturity case for down-in American Parisian options, we first approximate the exercise time $\tau$ by $\zeta_\tau - \zeta_0 = \zeta_\tau$ with $\zeta_0 = 0$, and approximate $\tau_{L, D}^-$ with $\widetilde{\tau}_{L, D}^- := \inf\{s \ge 0: D_s \ge D\}$. The resulting approximate discounted payoff function is as follows:
	\begin{align}
		e^{-r\zeta_{\tau}} f(Y_\tau) 1_{\{ \tau \le \widetilde{\tau}_{L, D}^-, \zeta_\tau \le T \}}.
	\end{align}
In this way, the finite-maturity down-out American Parisian option price at time $t$ is approximated as
	\begin{align}
		C_{fo}(t, d, x) = \sup_{\tau \in \mathcal{T}} \mathbb{E}_{t, d, x} \big[ e^{-r(\zeta_{\tau}-t)}f(Y_\tau) 1_{\{ \tau \le \widetilde{\tau}_{L, D}^-, \zeta_\tau \le T \}} \big],
	\end{align}
	where $\mathbb{E}_{t, d, x}[\cdot] = \mathbb{E}[\cdot|\zeta_t = t, D_t = d, Y_t = x]$. We note $C_{fo}(t, d, x)$ is the price of a perpetual American option written on $(\zeta, D, Y)$ and hence it satisfies the system of variational inequalities:
	\begin{align}
		\begin{cases}
			\frac{C_{fo}(t+\delta_t, d, x) - C_{fo}(t, d, x)}{\delta_t} +
			\widetilde{\mathcal{G}}_t^{D,Y} C_{fo}(t, d, x) - r C_{fo}(t, d, x) \le 0,\\
			C_{fo}(t,x,d) \ge f(x) 1_{\{ t \le T, d \le D \}},\\
			\text{equality holds for either of the above.}
		\end{cases}
	\end{align}
	To construct a vectorized form of the above, we can order the joint states of $(D_t, Y_t)$ as
	\begin{align}
		\mathbb{S}^{D, Y} &= \{ (d_0, y_0), (d_0, y_1), \ldots, (d_0, y_n), (d_1, y_0), (d_1, y_1), \ldots, (d_1, L^-), \\
  &\qquad\qquad\qquad\qquad\qquad\ldots, (D^+, y_0 ), \ldots, (D^+, L^- )  \},
	\end{align}
	where $D^+ = \min\{ d \in \mathbb{D}: d > D \}$.
	Then the values of $C_{fo}(t, d, x)$ at these states are stacked together into a vector.
	\begin{align}
		{\pmb C}_{fo}(t) = \big( C_{fo}(t, d, x) 1_{\{ d \le D \}} \big)_{(d,x) \in \mathbb{S}^{D, Y}}.
	\end{align}
	Let $\widetilde{\pmb G}_t^{D,Y}$ be the matrix representing the effect of $\widetilde{\mathcal{G}}_t^{D, Y}$. 
    For $(d, x)$ such that $d< D^+$ and $x < L$, the corresponding row is as follows.
	\begin{align}
		& \widetilde{\pmb G}_t^{D,Y}(d, x; d, x) = -\frac{1}{\delta_d} + {\pmb G}_t^Y(x, x),\\
		& \widetilde{\pmb G}_t^{D,Y}(d, x; d+\delta_d, x) = \frac{1}{\delta_d}, \\
		& \widetilde{\pmb G}_t^{D,Y}(d, x; d, y) = {\pmb G}_t^Y(x, y),\ y < L,\ y \ne x,\\
		& \widetilde{\pmb G}_t^{D,Y}(d, x; 0, y) = {\pmb G}_t^Y(x, y),\ y \ge L.
	\end{align}
	For $(d, x)$ such that $d=0$ and $x \ge L$, the corresponding row is as follows.
	\begin{align}
		& \widetilde{\pmb G}_t^{D,Y}(0, x; 0, y) = {\pmb G}_t^Y(x, y).
	\end{align}
	For $(d, x)$ such that $d=D^+$ and $x < L$, the corresponding row is as follows.
	\begin{align}
		& \widetilde{\pmb G}_t^{D,Y}(D^+, x; d', y) = 0, \qquad (d', y) \in \mathbb{S}^{D, Y}.
	\end{align}
	This means the boundary at $d=D^+$ is set as absorbing, which is consistent with the down-out feature as the option is canceled once this boundary is crossed. All other elements of $\widetilde{\pmb G}_t^{D,Y}$ are set as zero.

    Let $\widetilde{\pmb I}$ be an identity matrix and $\widetilde{\pmb 0}$ be an all-one vector of the same dimension as the number of states in $\mathbb{S}^{D, Y}$. Then the variational inequalities can be written in a vectorized form:
	\begin{align}\label{eq:finite-down-out-Amer-Pari-variational-inequality}
		\min\Big( \big( (r\delta_t + 1) \widetilde{\pmb I} -  \widetilde{\pmb G}_t^{D,Y}\delta_t  \big) {\pmb C}_{fo}(t) - {\pmb C}_{fo}(t+\delta_t),  {\pmb C}_{fo}(t) - \widetilde{\pmb f} \Big) = \widetilde{\pmb 0}.
	\end{align}
	The linear complementary problem can be formulated as
	\begin{align}\label{LCP:finite-down-out-case}
	(\operatorname{LCP}(A, \psi))\left\{\begin{array}{l}
	A \overline{C}+\psi \geq \widetilde{\pmb 0}, \\
	\overline{C} \geq \widetilde{\pmb 0}, \\
	\overline{C}^{\top}(A \overline{C}+\psi)=0
	\end{array}\right.
	\end{align}
	where $A=(r\delta_t + 1) \widetilde{\pmb I} -  \widetilde{\pmb G}_t^{D,Y}\delta_t $, $\overline{C}={\pmb C}_{fo}(t) - \widetilde{\pmb f}$, and $\psi=A\widetilde{\pmb f}-{\pmb C}_{fo}(t+\delta_t)$ with $\widetilde{\pmb f} = (f(x))_{(d, x) \in \mathbb{S}^{D, Y}}$. The equations can be solved recursively from ${\pmb C}_{fo}(T^+) = \widetilde{\pmb 0}$ and for each $t = T^+-\delta_t, \ldots, 0$, the Lemke's pivoting method can be applied to solve for ${\pmb C}_{fo}(t)$; the largest number of iterations is $n_t=T^+/\delta_t$. 

\begin{remark}
    The matrix $\widetilde{\pmb G}_t^{D,Y}$ is an $(n_d n)\times (n_d n)$ matrix with $n_d = D^+/\delta_d+1$. We note that the transition rate matrix in the $D$ dimension is sparse, and hence the computational costs of calculating $\big((r\delta_t + 1) \widetilde{\pmb I} -  \widetilde{\pmb G}_t^{D,Y}\delta_t\big)$ times a vector are $\mathcal{O}(n_d n)$ and $\mathcal{O}(n_d n^2)$ for the diffusion and jump cases respectively in the LCP \eqref{LCP:finite-down-out-case}. So the overall complexity of calculating the price of a finite-maturity down-out American Parisian option is $\mathcal{O}(n_t n_d n)$ for diffusion models and $\mathcal{O}\big(n_t n_d n^2\big)$ for jump models.
\end{remark}
 	
	\subsection{The Perpetual and Time-Homogeneous Case}
	
	In the time-homogeneous case, $\widetilde{\mathcal{G}}_t^{D,Y} = \widetilde{\mathcal{G}}^{D,Y}$ and $\widetilde{\pmb G}_t^{D,Y} = \widetilde{\pmb G}^{D,Y}$ are independent of $t$. The price of a down-out perpetual American Parisian option can be approximated as:
	\begin{align}
		C_{po}(d, x) = \sup_{\tau \in \mathcal{T}}\mathbb{E}_{d,x}\big[ e^{-r\tau} f(Y_\tau) 1_{\{ \tau \le \widetilde{\tau}_{L, D}^- \}} \big], 
	\end{align}
	where $\mathbb{E}_{d, x}[\cdot] = \mathbb{E}[\cdot|D_0 = d, Y_0 = x]$. Then $C_{po}(d,x)$ satisfies a system of variational inequalities:
	\begin{align}
		\begin{cases}
			\widetilde{\mathcal{G}}^{D, Y} C_{po}(d, x) - rC_{po}(d,x) \le 0,\\
			C_{po}(d,x) \ge f(x) 1_{\{ d \le D \}},\\
			\text{equality holds for either of the above.}
		\end{cases}
	\end{align}
	Let
	\begin{align}
		{\pmb C}_{po} = \big( C_{po}(d, x) 1_{\{ d \le D \}} \big)_{(d,x) \in \mathbb{S}^{D, Y}}.
	\end{align}
	Then the variational inequalities can be written in a vectorized form:
	\begin{align}
		\min\big( (r \widetilde{\pmb I} - \widetilde{\pmb G}^{D, Y}) {\pmb C}_{po}, {\pmb C}_{po}  - \widetilde{\pmb f}  \big) = \widetilde{\pmb 0},
	\end{align}
	which can be formulated as a linear complementary problem:
	\begin{align}
	(\operatorname{LCP}(A, \psi))\left\{\begin{array}{l}
	A \overline{C}+\psi \geq \mathbf{0}, \\
	\overline{C} \geq \mathbf{0} \\
	\overline{C}^{\top}(A \overline{C}+\psi)=0
	\end{array}\right.
	\end{align}
	where $A=r \widetilde{\pmb I} - \widetilde{\pmb G}^{D, Y} $, $\overline{C}={\pmb C}_{po}- \widetilde{\pmb f}$, and $\psi=A\widetilde{\pmb f}$. It can be solved with Lemke's pivoting method.

\begin{remark}
    With the same arguments as the finite-maturity down-out case, the overall complexity of calculating the price of a perpetual down-out American Parisian option is $\mathcal{O}(n_d n)$ for diffusion models and $\mathcal{O}(n_d n^2)$ for jump models.
\end{remark}

\section{Convergence Analysis}
 \label{sec:ca}
In this section, we analyze the convergence of down-in/-out perpetual/finite-maturity American Parisian option prices as the approximating CTMC converges. To this end, we impose the following assumptions for both the down-in and -out cases, and additional assumptions of convergence for these two cases are presented in Sections \ref{sec:conv-down-in} and \ref{sec:conv-down-out} respectively.

\begin{assumption}\label{assump:payoff-lipschitz}
The payoff function $f(\cdot)$ is Lipschitz-continuous.
\end{assumption}

\begin{assumption}\label{assump:moment-quadratic-variation}
    The first moment of quadratic variations of $X$  and $Y$ as functions of $t \ge 0$,
    \[
    t\rightarrow \mathbb{E}\big[ \langle X  \rangle_t \big],\  t\rightarrow \mathbb{E}\big[\langle Y \rangle_t \big] 
    \]
    are Lipschitz-continuous with Lipschitz constants independent of $n$ and $\delta_t$.
\end{assumption}

\begin{assumption}\label{assump:payoff-growth}
There is a constant $C > 0$ independent of $n$ and $\delta_t$ such that,
\begin{align}
&\mathbb{E}_{x}\Big(\sup _{t \in[0, \infty)} e^{-r t}\big|f(X_{t})\big|\Big) \le C,\quad \lim _{t \rightarrow \infty} e^{-r t} f(X_{t})=0 \quad a.s.,\\
&\mathbb{E}_{x}\Big(\sup _{t \in[0, \infty)} e^{-r t}\big|f(Y_{t})\big|\Big) \le C,\quad \lim _{t \rightarrow \infty} e^{-r t} f(Y_{t})=0 \quad a.s..
\end{align}
\end{assumption}

\subsection{Down-In American Parisian Options}\label{sec:conv-down-in}

For the pricing of down-in American Parisian options, we reduce the problem to calculating the expectation of prices of a vanilla American option with an initial time-spatial state $(\tau_{L, D}^-, Y_{\tau_{L, D}^-})$. Therefore, we first establish the weak convergence of the joint state under $Y$ to that under $X$ following \cite{zhang2023general}. To differentiate the Parisian stopping times defined from these two processes $Y$ and $X$, we add superscripts $Y$ and $X$ to $\tau_{L, D}^-$ respectively.    

 Let $D(\mathbb{R})$ be the space of right-continuous function with left limits mapping $[0, \infty) \to \mathbb{R}$. And for any $\omega \in D(\mathbb{R})$, let $\sigma_0^+(\omega):=0$ and
		\[
		\begin{aligned}
		\sigma_{k}^{-}(\omega) & :=\inf \left\{t>\sigma_{k-1}^{+}(\omega): \omega_{t}<L\right\}, \\
		\sigma_{k}^{+}(\omega) & :=\inf \left\{t>\sigma_{k}^{-}(\omega): \omega_{t} \geq L\right\}, \quad k \geq 1 . \\
		\end{aligned}
		\] 
  We also construct the following subsets of $D(\mathbb{R})$ following \cite{zhang2023general}:
		\[
		\begin{aligned}
		V & :=\left\{\omega \in D(\mathbb{R}): \lim _{k \rightarrow \infty} \sigma_{k}^{ \pm}(\omega)=\infty\right\}, \\
		W & :=\left\{\omega \in D(\mathbb{R}): \sigma_{k}^{+}(\omega)-\sigma_{k}^{-}(\omega) \neq D \text { for all } k \geq 1\right\}, \\
		U^{+} & :=\left\{\omega \in D(\mathbb{R}): \inf \left\{t \geq u: \omega_{t} \geq L\right\}=\inf \left\{t \geq u: \omega_{t}>L\right\} \text { for all } u \geq 0\right\}, \\
		U^{-} & :=\left\{\omega \in D(\mathbb{R}): \inf \left\{t \geq u: \omega_{t} \leq L\right\}=\inf \left\{t \geq u: \omega_{t}<L\right\} \text { for all } u \geq 0\right\} .
		\end{aligned}
		\]
\begin{assumption}\label{assump:x-path-regularity}
    Suppose that the paths $X$ are right-continuous with left limits a.s., and $\mathbb{P}[X \in V]=\mathbb{P}[X \in W]=\mathbb{P}\left[X \in U^{+}\right]=\mathbb{P}\left[X \in U^{-}\right]=1$. Moreover, $\mathbb{P}[\tau_{L, D}^{X, -}=t]=0$ for any $t > 0$.
\end{assumption}
  
  \begin{proposition}[\cite{zhang2023general} Theorem 3.3]\label{prop:parisian-stopping-time-weak-convergence}
    Suppose Assumption \ref{assump:x-path-regularity} holds and $Y \Rightarrow X$ as $n \to \infty$ and $\delta_t \to 0$. Then for any continuous function $f(\cdot)$ such that for any $t \ge 0$ there exists constant $C_t$ independent of $n$ and $\delta_t$ satisfying that $\mathbb{E}[|f(Y_t)|] \le C_t$, we have,	   
	   \[
	   \mathbb{E}\left[1_{\left\{\tau_{L, D}^{Y,-} \leq t\right\}} f\left(Y_{t}\right)\right] \longrightarrow \mathbb{E}\left[1_{\left\{\tau_{L, D}^{X, -} \leq t\right\}} f\left(X_{t}\right)\right].
	   \] 
  \end{proposition}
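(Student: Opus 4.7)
The plan is a continuous mapping theorem argument in the Skorokhod space $D(\mathbb{R})$, followed by a uniform integrability upgrade to pass from weak convergence of the integrand to convergence of expectations. Define a functional $\Phi: D(\mathbb{R}) \to \mathbb{R}$ by
\[
\Phi(\omega) := 1_{\{\tau_{L, D}^{-}(\omega) \le t\}} f(\omega_t),
\]
where $\tau_{L, D}^{-}(\omega) := \inf\{s \ge 0: s - \sup\{u \le s: \omega_u \ge L\} \ge D\}$ extends the Parisian stopping time to a measurable functional on paths. The target is then to show $\Phi$ is continuous at $\mathbb{P}_X$-a.e.\ $\omega$, so that $Y \Rightarrow X$ and the continuous mapping theorem yield $\Phi(Y) \Rightarrow \Phi(X)$.

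First I would establish continuity of $\tau_{L, D}^{-}$ at every $\omega \in V \cap W \cap U^+ \cap U^-$, which by Assumption \ref{assump:x-path-regularity} is a set of $\mathbb{P}_X$-full measure. The key observation is that $\tau_{L, D}^{-}(\omega) = \sigma_{k^*}^{-}(\omega) + D$, where $k^* = \min\{k \ge 1: \sigma_k^{+}(\omega) - \sigma_k^{-}(\omega) \ge D\}$ is the index of the first excursion of $\omega$ below $L$ of duration at least $D$. On $V$, excursions cannot accumulate in finite time, so $k^*$ is finite and only finitely many excursions precede $\tau_{L, D}^{-}(\omega)$; on $U^+ \cap U^-$, each $\sigma_k^{\pm}$ is continuous in the Skorokhod topology at $\omega$ because the upper- and lower-semicontinuity bounds for hitting times of the open and closed half-lines at level $L$ coincide whenever strict and non-strict hitting times agree; on $W$, no excursion has duration exactly $D$, so a sufficiently small Skorokhod perturbation cannot toggle the inequality defining $k^*$, leaving $k^*$ stable. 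Chaining these three ingredients delivers continuity of $\tau_{L, D}^{-}$ on $V \cap W \cap U^+ \cap U^-$.

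Next, the indicator $1_{\{\tau_{L, D}^{-}(\cdot) \le t\}}$ is continuous at $\omega$ whenever $\tau_{L, D}^{-}$ is continuous at $\omega$ and $\tau_{L, D}^{-}(\omega) \ne t$; the hypothesis $\mathbb{P}[\tau_{L, D}^{X, -} = t] = 0$ makes the latter hold $\mathbb{P}_X$-a.s. The evaluation $\omega \mapsto \omega_t$ is continuous in the Skorokhod topology at paths continuous at $t$, and since $X$ has no fixed time of discontinuity under the standing hypotheses, $\mathbb{P}_X$-a.e.\ path is continuous at $t$; combined with continuity of $f$, this gives continuity of $\omega \mapsto f(\omega_t)$ at $\mathbb{P}_X$-a.e.\ $\omega$. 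Putting the pieces together, $\Phi$ is continuous at $\mathbb{P}_X$-a.e.\ $\omega$, so the continuous mapping theorem gives $\Phi(Y) \Rightarrow \Phi(X)$.

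Finally, to upgrade weak convergence to convergence of expectations I would verify uniform integrability of the family $\{\Phi(Y)\}$. Since $|\Phi(Y)| \le |f(Y_t)| \le e^{rt}\sup_{s \ge 0} e^{-rs}|f(Y_s)|$ and Assumption \ref{assump:payoff-growth} supplies a uniform-in-$(n,\delta_t)$ bound on the expectation of this envelope, uniform integrability follows at once, yielding $\mathbb{E}[\Phi(Y)] \to \mathbb{E}[\Phi(X)]$, which is the claim. The main obstacle is the continuity of the Parisian stopping time functional on $V \cap W \cap U^+ \cap U^-$: Skorokhod convergence is delicate near jump times, and one must simultaneously control the hitting times $\sigma_k^{\pm}$ (using $U^\pm$ to rule out paths that merely touch $L$ without crossing) and the selection of $k^*$ (using $W$ to exclude excursions whose duration sits exactly at the threshold $D$), before the finiteness supplied by $V$ can be leveraged to reduce the problem to a finite induction.
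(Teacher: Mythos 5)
The paper does not actually prove this proposition itself --- it is imported from the cited reference, and the surrounding machinery (the sets $V$, $W$, $U^{+}$, $U^{-}$ and Assumption \ref{assump:x-path-regularity}) is set up precisely so that the argument you describe goes through. Your route --- continuity of the Parisian stopping time functional at every path in $V\cap W\cap U^{+}\cap U^{-}$, continuity of the indicator off the null set $\{\tau^{X,-}_{L,D}=t\}$, continuity of the evaluation map at $t$, and then the continuous mapping theorem --- is the intended one, and your identification of the roles of the three path sets ($V$ for finiteness of the excursion count before any finite time, $U^{\pm}$ for agreement of strict and non-strict hitting times of the level $L$, $W$ for stability of the selected excursion index $k^*$ under perturbations smaller than the gap between excursion durations and $D$) is exactly how those sets are used.

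Two points deserve more care. First, the final step from weak convergence of $\Phi(Y)$ to convergence of $\mathbb{E}[\Phi(Y)]$ is not closed by your argument: Assumption \ref{assump:payoff-growth} (and likewise the hypothesis $\mathbb{E}[|f(Y_t)|]\le C_t$) only yields a uniform first-moment bound on the envelope $e^{rt}\sup_{s}e^{-rs}|f(Y_s)|$, and a uniform $L^{1}$ bound does not imply uniform integrability of the family $\{\Phi(Y)\}$ over $n$ and $\delta_t$. The assertion that ``uniform integrability follows at once'' is therefore an overstatement; one needs either a uniform bound on a strictly convex moment (de la Vall\'ee-Poussin), a truncation argument combining continuity of $f$ with tightness of $\{Y_t\}$, or an explicit dominating variable after a Skorokhod representation. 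As written this last step is a genuine gap, although it is partly inherited from the looseness of the stated integrability hypothesis. Second, continuity of the evaluation map $\omega\mapsto\omega_t$ at $\mathbb{P}_X$-almost every path requires that $t$ not be a fixed time of discontinuity of $X$; Assumption \ref{assump:x-path-regularity} does not state this explicitly, so you should either invoke stochastic continuity of $X$ (standard for the Markov models considered, but worth saying) or restrict $t$ to the complement of the at most countable set of fixed discontinuity times.
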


\subsubsection{The Perpetual and Time-Homogeneous Case}

We denote by $\mathrm{F}^X$ and $\mathrm{F}^Y$ the filtrations generated by the processes $X$ and $Y$, respectively. We introduce the following intermediate quantities:
\[
c^{\hat{\delta}}(x)=\sup_{\tau \in \mathcal{T}^X}\mathbb{E}_{x}\Big[e^{-r\tau^{\hat{\delta}}}f(X_{\tau^{\hat{\delta}}})\Big],
\]
\[
c(x)=\sup_{\tau \in \mathcal{T}^X}\mathbb{E}_{x}\Big[e^{-r\tau}f(X_{\tau})\Big],
\]
where $\mathcal{T}^X$ is the set of $\mathrm{F}^X$ stopping times taking values in $[0,\infty]$ (the payoff should be understood as zero when the stopping time takes value $\infty$ and this convention also applies elsewhere in this paper), $\tau^{\hat{\delta}}$ is the stopping time taking the form $\tau^{\hat{\delta}}=\inf \left \{t\ge \tau:t\in \mathbb{T}^{\hat{\delta}}  \right \}  $ with $\mathbb{T}^{\hat{\delta}}=\left \{i\hat{\delta}: i\in \mathbb{N}  \right \} $, $c^{\hat{\delta}} (x)$ and $c(x)$ are the value functions of a perpetual Bermudan option and that of a perpetual American option under original dynamic $X$, respectively. 

We consider the value function of a perpetual Bermudan option under $Y$, denoted by $c_p^{\hat{\delta}}(x)$:
\[
c_p^{\hat{\delta}}(x)=\sup_{\tau\in \mathcal{T}^Y}\mathbb{E}_x\Big[e^{-r\tau^{\hat{\delta}}}f( Y_{\tau^{\hat{\delta}}} )  \Big],
\]
where $\mathcal{T}^Y$ is the set of $\mathrm{F}^Y$ stopping times taking values in $[0,\infty]$. Besides, we introduce some new intermediate quantities with finite maturities:
\[
\widetilde{c}^{\hat{\delta}}_f(s,x)=\sup_{\tau\in \mathcal{T}^Y_{s,T}}\mathbb{E}_{s,x}\Big[ e^{-r(\tau^{\hat{\delta}}-s)}f(Y_{\tau^{\hat{\delta}}})  \Big],
\]
\[
c^{\hat{\delta}} (s,x)=\sup _{\tau \in \mathcal{T}^X_{s,T}} \mathbb{E}_{s, x}\left[e^{-r\left(\tau^{\hat{\delta}}-s\right)} f\left(X_{\tau^{\hat{\delta}}}\right) \right],
\]
where $\mathcal{T}^Y_{s,T}$ and $\mathcal{T}^X_{s,T}$ are the sets of $\mathrm{F}^Y$ and $\mathrm{F}^X$ stopping times taking values in $[s,T] \cup \{\infty\}$, respectively.

We introduce the following lemma.
\begin{lemma}\label{lem:perpetual-American-option-intermedia}
    Suppose Assumption \ref{assump:payoff-growth} holds. Then for any $\epsilon>0$, there exists a $T_\epsilon$, such that for any $T>T_\epsilon$ and $n\in \mathbb{N}$, we have
    \begin{equation} 
    \Big|c^{\hat{\delta}}(0,x)-c^{\hat{\delta}}(x)\Big|< \epsilon,
    \end{equation}
    \begin{equation} 
    \Big|\widetilde{c}^{\hat{\delta}}_f(0,x)-c_p^{\hat{\delta}}(x)\Big|< \epsilon.
    \end{equation}
\end{lemma}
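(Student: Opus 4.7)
The plan is to prove both inequalities by a truncation-plus-tail-estimate argument, using Assumption \ref{assump:payoff-growth} to control the cost of restricting admissible stopping times from $[0,\infty]$ to $[0,T]\cup\{\infty\}$.

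First I would observe the easy direction: since $\mathcal{T}^X_{0,T}\subset \mathcal{T}^X$ (with the convention that a stopping time equal to $\infty$ produces a zero payoff), the supremum defining $c^{\hat{\delta}}(0,x)$ is taken over a smaller set, hence $c^{\hat{\delta}}(0,x)\le c^{\hat{\delta}}(x)$; analogously $\widetilde{c}^{\hat{\delta}}_f(0,x)\le c^{\hat{\delta}}_p(x)$.

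For the reverse direction, given $\eta>0$, pick an $\eta$-optimal $\tau^*\in\mathcal{T}^X$ for $c^{\hat{\delta}}(x)$ and define the truncation $\tau^*_T:=\tau^*\,1_{\{\tau^*\le T\}}+\infty\cdot 1_{\{\tau^*>T\}}\in\mathcal{T}^X_{0,T}$. Since $(\tau^*_T)^{\hat{\delta}}=(\tau^*)^{\hat{\delta}}$ on $\{\tau^*\le T\}$ and the discounted payoff vanishes at $\infty$, a direct computation gives
\[
c^{\hat{\delta}}(x)-c^{\hat{\delta}}(0,x)\le \eta+\mathbb{E}_x\!\left[e^{-r(\tau^*)^{\hat{\delta}}}|f(X_{(\tau^*)^{\hat{\delta}}})|1_{\{\tau^*>T\}}\right]\le \eta+\mathbb{E}_x\!\left[\sup_{t\ge T}e^{-rt}|f(X_t)|\right].
\]
Under Assumption \ref{assump:payoff-growth}, $e^{-rt}|f(X_t)|\to 0$ a.s.\ as $t\to\infty$ and is dominated by the integrable random variable $\sup_{s\ge 0}e^{-rs}|f(X_s)|$, so the tail expectation vanishes as $T\to\infty$ by dominated convergence. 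Choosing $T_\epsilon$ large and $\eta$ small yields the first inequality.

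The second inequality follows by repeating the argument verbatim with $Y$, $\mathcal{T}^Y$, and $\mathcal{T}^Y_{0,T}$ replacing their $X$-counterparts. The main obstacle, and the one delicate point of the proof, is that the resulting $T_\epsilon$ must be independent of $n$. Assumption \ref{assump:payoff-growth} supplies the uniform $L^1$ bound $\mathbb{E}_x[\sup_{t\ge 0}e^{-rt}|f(Y_t)|]\le C$ with $C$ independent of $n$ and $\delta_t$, but naive dominated convergence only delivers tail-vanishing for each fixed $n$. To upgrade this to a uniform-in-$n$ tail estimate, I would exploit the factorization $\sup_{t\ge T}e^{-rt}|f(Y_t)|\le e^{-rT/2}\sup_{t\ge 0}e^{-rt/2}|f(Y_t)|$; combined with an $L^1$-bound of the same form as Assumption \ref{assump:payoff-growth} but with $r$ replaced by $r/2$ (which one would verify, or slightly strengthen, for the specific models under consideration), this yields an exponential-in-$T$ decay that is uniform in $n$, after which a single $T_\epsilon$ works for all $n$ simultaneously.
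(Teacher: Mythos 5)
Your proof is correct and rests on the same basic idea as the paper's: the gap between the perpetual value and the $T$-truncated value is the contribution of stopping beyond $T$, which Assumption \ref{assump:payoff-growth} makes small. Your implementation is in fact tighter. The paper splits the set of stopping times and bounds the gap by $\sup_{\tau\in \mathcal{T}^X \setminus \mathcal{T}^X_{0,T}} \mathbb{E}_{0,x}\big[\big|e^{-r\tau^{\hat{\delta}}}f(X_{\tau^{\hat{\delta}}})\big|\big]$; taken literally this quantity need not be small, since a stopping time outside $\mathcal{T}^X_{0,T}$ may still stop before $T$ with probability arbitrarily close to one. Your truncation $\tau^*_T$ of an $\eta$-optimal $\tau^*$ isolates exactly the event $\{\tau^*>T\}$ and bounds its contribution by $\mathbb{E}_x\big[\sup_{t\ge T}e^{-rt}|f(X_t)|\big]$, which is the estimate the argument actually needs. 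You are also right to flag uniformity in $n$ for the second inequality: the paper obtains a single $T_\epsilon$ by asserting that $e^{-r\tau^{\hat{\delta}}}|f(\cdot)|<\epsilon$ once $\tau^{\hat{\delta}}>T_\epsilon$, but Assumption \ref{assump:payoff-growth} only supplies almost-sure decay for each fixed chain $Y$ together with an $L^1$ bound uniform in $n$, which does not by itself yield an $n$-independent decay rate. Your factorization $\sup_{t\ge T}e^{-rt}|f(Y_t)|\le e^{-rT/2}\sup_{t\ge 0}e^{-rt/2}|f(Y_t)|$, combined with a uniform moment bound at rate $r/2$, is a clean way to close this, at the cost of a mild strengthening of the assumption that the paper leaves implicit.
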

With Lemma \ref{lem:perpetual-American-option-intermedia}, we have the following result.
\begin{proposition}\label{prop:perpetual-down-in-Amer-option-convergence}
    In the perpetual and time-homogeneous case, suppose that Assumptions \ref{assump:payoff-lipschitz}, \ref{assump:moment-quadratic-variation} and \ref{assump:payoff-growth} hold. Then, as $n\rightarrow \infty$,
    \[
    c_p(x)\rightarrow c(x). 
    \]
\end{proposition}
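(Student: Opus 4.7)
The plan is to prove $c_p(x)\to c(x)$ by inserting two intermediate families of quantities: a Bermudan discretization in time (indexed by $\hat{\delta}$) and a truncation of the horizon to $T<\infty$. Concretely, for any $\epsilon>0$ I will choose first $\hat{\delta}$ small, then $T$ large using Lemma \ref{lem:perpetual-American-option-intermedia}, and finally $n$ large using weak convergence, based on the triangle decomposition
\begin{align}
|c_p(x)-c(x)| &\le |c_p(x)-c_p^{\hat\delta}(x)| + |c_p^{\hat\delta}(x)-\widetilde c_f^{\hat\delta}(0,x)|\\
&\quad + |\widetilde c_f^{\hat\delta}(0,x)-c^{\hat\delta}(0,x)| + |c^{\hat\delta}(0,x)-c^{\hat\delta}(x)| + |c^{\hat\delta}(x)-c(x)|.
\end{align}
Terms two and four are directly controlled by Lemma \ref{lem:perpetual-American-option-intermedia} once $T>T_\epsilon$, and uniformly in $n$.

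For the outer terms $|c_p(x)-c_p^{\hat\delta}(x)|$ and $|c^{\hat\delta}(x)-c(x)|$, I would use the classical Bermudan-to-American sandwich: since $\mathcal{T}^{\hat\delta}\subset\mathcal{T}$, these differences are non-negative, and they can be bounded by $\mathbb{E}\bigl[\sup_{|t-s|\le\hat\delta} |e^{-rt}f(Y_t)-e^{-rs}f(Y_s)|\bigr]$ (respectively with $X$). Assumption \ref{assump:payoff-lipschitz} together with a modulus-of-continuity estimate implied by Assumption \ref{assump:moment-quadratic-variation} (Lipschitz control of $\mathbb{E}[\langle Y\rangle_t]$ uniformly in $n$) and Assumption \ref{assump:payoff-growth} will yield a bound that vanishes as $\hat\delta\to 0$ \emph{uniformly in $n$}. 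This uniformity is what makes the argument work.

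The central step is showing $\widetilde c_f^{\hat\delta}(0,x)\to c^{\hat\delta}(0,x)$ as $n\to\infty$, with $\hat\delta$ and $T$ fixed. Because the admissible stopping times take values in the finite grid $\mathbb{T}^{\hat\delta}\cap[0,T]$, the optimal stopping problem reduces to a finite-horizon backward induction: at $t_k=k\hat\delta$,
\begin{align}
V^{\hat\delta}(t_k,x)=\max\bigl\{f(x),\; e^{-r\hat\delta}\mathbb{E}[V^{\hat\delta}(t_{k+1},Z_{t_{k+1}})\mid Z_{t_k}=x]\bigr\},
\end{align}
with $Z=Y$ or $Z=X$. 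By $Y\Rightarrow X$ (Proposition \ref{prop:parisian-stopping-time-weak-convergence}'s underlying weak convergence) and the continuous mapping theorem applied inductively, together with uniform integrability from Assumption \ref{assump:payoff-growth}, the transition operators $\mathbb{E}[\,\cdot\,(Y_{t_{k+1}})\mid Y_{t_k}=x]\to\mathbb{E}[\,\cdot\,(X_{t_{k+1}})\mid X_{t_k}=x]$ converge on bounded-continuous test functions. A finite backward induction over the $\lceil T/\hat\delta\rceil$ time steps then propagates this into convergence of the value functions at $t=0$.

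The main obstacle will be establishing the Bermudan-to-American bound \emph{uniformly in $n$}. The sup-norm control of $|e^{-rt}f(Y_t)-e^{-rs}f(Y_s)|$ for $|t-s|\le\hat\delta$ requires a modulus-of-continuity estimate for $Y$ that does not degrade as $n\to\infty$; this is precisely what Assumption \ref{assump:moment-quadratic-variation} is designed to deliver, but verifying the Aldous-type tightness criterion and turning it into an $L^1$ bound on the payoff increments requires care, especially in the presence of jumps. Once that uniformity is secured, each $\epsilon$-term can be forced below $\epsilon$ by first picking $\hat\delta$, then $T$, then $n$, giving $|c_p(x)-c(x)|<5\epsilon$ for large $n$, which completes the proof.
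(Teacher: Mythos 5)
Your proposal is correct in outline and matches the paper's architecture: the same triangle decomposition through the Bermudan quantities $c_p^{\hat\delta}, c^{\hat\delta}$ and the truncated quantities $\widetilde c_f^{\hat\delta}(0,x), c^{\hat\delta}(0,x)$, with terms two and four killed by Lemma \ref{lem:perpetual-American-option-intermedia} uniformly in $n$, and the outer Bermudan-to-American terms bounded by $O(\hat\delta^{1/2})$ uniformly in $n$ via Assumptions \ref{assump:payoff-lipschitz}--\ref{assump:payoff-growth} (the paper does not reprove these either; it cites the analogue of Theorem 3 in \cite{zhang2021pricing}, whose mechanism is exactly the $\mathbb{E}[\langle Y\rangle_{\hat\delta}]\le K\hat\delta$ bound you describe, so your mention of an Aldous-type tightness criterion is an unnecessary detour). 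The one genuinely different step is the middle term $|\widetilde c_f^{\hat\delta}(0,x)-c^{\hat\delta}(0,x)|$: the paper couples $X$ and $Y$ on the joint filtration $\mathrm{H}$ generated by $(X,Y)$, takes the supremum over $\mathrm{H}$-stopping times, and bounds the difference by $K\,\mathbb{E}_{s,x}\bigl[\sup_{t\in\mathbb{T}^{\hat\delta}\cap[s,T+\hat\delta]}|X_t-Y_t|\bigr]$, which vanishes by weak convergence plus uniform integrability over the finite grid; you instead run a backward dynamic-programming induction over the $\lceil T/\hat\delta\rceil$ grid points and pass to the limit in the one-step transition operators. Both routes work, and yours is arguably more self-contained, but note that at each induction step the test function is the $n$-dependent continuation value $V^{\hat\delta}(t_{k+1},\cdot)$ under $Y$, not its limit, so you need locally uniform convergence (or a uniform-in-$n$ Lipschitz/equicontinuity estimate for the value functions, which Assumption \ref{assump:payoff-lipschitz} and the Feller property of the chains do supply) before the continuous mapping theorem applies; the paper's coupling bound sidesteps this because the supremum over the joint grid controls all stopping rules at once. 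With that point addressed, your argument closes the same way as the paper's: choose $\hat\delta$, then $T$, then $n$.
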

The proofs of Lemma \ref{lem:perpetual-American-option-intermedia} and Proposition \ref{prop:perpetual-down-in-Amer-option-convergence} can be found in Appendix \ref{app:proof}. With Proposition \ref{prop:parisian-stopping-time-weak-convergence} and Proposition \ref{prop:perpetual-down-in-Amer-option-convergence}, we have the following result:
\begin{theorem}\label{thm:perpetual-down-in-Amer-Pari-option-convergence}
    In the perpetual and time-homogeneous case, suppose that Assumptions \ref{assump:payoff-lipschitz}, \ref{assump:moment-quadratic-variation}, \ref{assump:payoff-growth} and \ref{assump:x-path-regularity} hold. Then, as $n\rightarrow \infty$,
    \[
    C_{pi}(x)\rightarrow \overline{C}_{pi}(x),
    \]
    where $\overline{C}_{pi}(x)$ is the value function of the perpetual down-in American Parisian option under the original dynamic $X$: 
    \[
    \overline{C}_{pi}(x)=\sup_{\tau \in \mathcal{T}^X}\mathbb{E}_x\big[e^{-r\tau}f(X_{\tau})1_{\{ \tau\ge \tau_{L, D}^{X,-} \}}\big].
    \]
\end{theorem}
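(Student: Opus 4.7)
The plan is to decompose both $C_{pi}(x)$ and $\overline{C}_{pi}(x)$ by conditioning on the Parisian stopping time, thereby reducing the theorem to a combination of Proposition \ref{prop:perpetual-down-in-Amer-option-convergence} (convergence of the embedded vanilla American price) and Proposition \ref{prop:parisian-stopping-time-weak-convergence} (weak convergence of the Parisian stopping time together with the state at that time). By Theorem \ref{pdi} applied to $Y$, and by the strong Markov property of $X$ at the $\mathrm{F}^X$-stopping time $\tau_{L,D}^{X,-}$,
\[
C_{pi}(x) = \mathbb{E}_x\bigl[e^{-r\tau_{L,D}^{Y,-}} c_p(Y_{\tau_{L,D}^{Y,-}}) 1_{\{\tau_{L,D}^{Y,-} < \infty\}}\bigr], \quad \overline{C}_{pi}(x) = \mathbb{E}_x\bigl[e^{-r\tau_{L,D}^{X,-}} c(X_{\tau_{L,D}^{X,-}}) 1_{\{\tau_{L,D}^{X,-} < \infty\}}\bigr].
\]

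I would then split the difference via $|C_{pi}(x) - \overline{C}_{pi}(x)| \le E_1 + E_2$, where
\[
E_1 := \mathbb{E}_x\bigl[e^{-r\tau_{L,D}^{Y,-}}\bigl|c_p(Y_{\tau_{L,D}^{Y,-}}) - c(Y_{\tau_{L,D}^{Y,-}})\bigr| 1_{\{\tau_{L,D}^{Y,-} < \infty\}}\bigr]
\]
and $E_2$ is the difference of the two expectations above with the common integrand $e^{-rt} c(y)$. For $E_2$, Assumption \ref{assump:payoff-lipschitz} and the nonexpansive property of the optimal stopping operator yield Lipschitz continuity of $c(\cdot)$, so $(t,y) \mapsto e^{-rt} c(y)$ is continuous. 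Joint weak convergence of $(\tau_{L,D}^{Y,-}, Y_{\tau_{L,D}^{Y,-}}) \Rightarrow (\tau_{L,D}^{X,-}, X_{\tau_{L,D}^{X,-}})$ can be extracted from the proof of Proposition \ref{prop:parisian-stopping-time-weak-convergence} (cf. \cite{zhang2023general}); combined with the uniform integrability provided by Assumption \ref{assump:payoff-growth}, which bounds $|c(y)|$ uniformly by the constant $C$, this gives $E_2 \to 0$.

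For $E_1$, the pointwise convergence $c_p(y) \to c(y)$ from Proposition \ref{prop:perpetual-down-in-Amer-option-convergence} together with the uniform envelope $\max(|c_p|,|c|) \le C$ suggests a dominated-convergence argument, but the evaluation point $Y_{\tau_{L,D}^{Y,-}}$ also depends on $n$, so the two layers of approximation must be handled simultaneously; this is the main obstacle. I would address it by invoking the Skorokhod representation theorem on an enlarged probability space to upgrade the weak convergence to $(\tau_{L,D}^{Y,-}, Y_{\tau_{L,D}^{Y,-}}) \to (\tau_{L,D}^{X,-}, X_{\tau_{L,D}^{X,-}})$ almost surely. Uniform Lipschitz continuity of $c_p$ in $y$ (with constant bounded in $n$, inherited from $f$ through the Snell envelope) upgrades the pointwise convergence to locally uniform convergence on compact sets, which together with tightness of $Y_{\tau_{L,D}^{Y,-}}$ gives $c_p(Y_{\tau_{L,D}^{Y,-}}) \to c(X_{\tau_{L,D}^{X,-}})$ a.s. A final dominated-convergence step under the envelope $C$ closes the argument.
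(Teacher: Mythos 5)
Your proposal follows essentially the same route as the paper: it uses the decomposition of Theorem \ref{pdi} (conditioning on the Parisian stopping time so that the inner quantity is a vanilla perpetual American price) and then combines Proposition \ref{prop:perpetual-down-in-Amer-option-convergence} with Proposition \ref{prop:parisian-stopping-time-weak-convergence}. In fact your splitting into $E_1$ and $E_2$ (with the Skorokhod/uniform-integrability argument to handle the $n$-dependence of the evaluation point $Y_{\tau_{L,D}^{Y,-}}$) is more explicit than the paper's two-line proof, which simply cites the two propositions without detailing how the two layers of convergence are combined.
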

The proof of Theorem \ref{thm:perpetual-down-in-Amer-Pari-option-convergence} is provided in Appendix \ref{app:proof}.
  
\subsubsection{The Finite-Maturity Case}
Let $\mathrm{F}^{\zeta, Y}$ be the filtration generated by the process $(\zeta, Y)$. We introduce the value function of a finite-maturity American option under the original dynamic $X$:
\[
c(s,x)=\sup _{\tau \in \mathcal{T}^X_{s,T}} \mathbb{E}_{s, x}\left[e^{-r\left(\tau-s\right)} f\left(X_{\tau}\right) \right],
\]
where $\mathcal{T}^X_{s,T}$ is the set of $\mathrm{F}^X$ stopping times taking values in $[s,T] \cup \{\infty
\}$. We also consider the value function of a finite-maturity Bermudan option under $Y$: 
\[
c_f^{\hat{\delta}} (s,y)=\sup_{\tau \in \mathcal{T}^{\zeta,Y}_{s}} \mathbb{E}_{s,y} \Big[ e^{-r(\zeta_{\tau^{\hat{\delta}}}- s)} f(Y_{\tau^{\hat{\delta}}})1_{\{ \zeta_{\tau}\le T \}} \Big],
\]
where $\mathcal{T}^{\zeta,Y}_{s}$ is the set of $\mathrm{F}^{\zeta, Y}$ stopping times taking values in $[s,\infty]$.

These intermediate quantities are used to prove the following result.  
\begin{proposition}\label{prop:finite-American-option-converge}
    Suppose Assumptions \ref{assump:payoff-lipschitz}, \ref{assump:moment-quadratic-variation} and \ref{assump:payoff-growth} hold. Then as $\delta_t \rightarrow 0$ and $n\rightarrow \infty$,
    \[
    c_f(s,x)\rightarrow c(s,x).
    \]
\end{proposition}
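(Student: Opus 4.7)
The plan is to mimic the strategy of Proposition \ref{prop:perpetual-down-in-Amer-option-convergence} but on the finite horizon, by sandwiching the error through an intermediate Bermudan-exercise quantity. Using the Bermudan value $c^{\hat{\delta}}(s,x)$ under $X$ and $c_f^{\hat{\delta}}(s,x)$ under $(\zeta,Y)$ already introduced in the excerpt, the triangle inequality gives
\begin{align}
|c_f(s,x) - c(s,x)| \le \bigl|c_f(s,x) - c_f^{\hat{\delta}}(s,x)\bigr| + \bigl|c_f^{\hat{\delta}}(s,x) - c^{\hat{\delta}}(s,x)\bigr| + \bigl|c^{\hat{\delta}}(s,x) - c(s,x)\bigr|,
\end{align}
and the task reduces to controlling each piece separately.

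For the third term I would invoke the classical Bermudan-to-American convergence on $[s,T]$: the uniform integrability guaranteed by Assumption \ref{assump:payoff-growth} together with Lipschitz continuity of $f$ (Assumption \ref{assump:payoff-lipschitz}) forces $c^{\hat{\delta}}(s,x) \to c(s,x)$ as $\hat{\delta}\to 0$. The first term is analogous but on the CTMC; here the crucial point is that the bound must be made uniform in $(n,\delta_t)$, and this is exactly what Assumption \ref{assump:moment-quadratic-variation} provides, since a common Lipschitz constant for $t\mapsto\mathbb{E}\bigl[\langle Y\rangle_t\bigr]$ yields a uniform modulus of continuity for the discounted payoff sampled along the chain, making the gain from the additional exercise opportunities between consecutive grid points uniformly small.

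For the middle term, restricting exercise to the countable grid $\mathbb{T}^{\hat{\delta}}$ turns both value functions into Snell envelopes of skeleton chains $\{X_{i\hat{\delta}}\}$ and $\{Y_{i\hat{\delta}}\}$. By Lipschitz continuity of $f$, the weak convergence $Y \Rightarrow X$, and the uniform integrability from Assumption \ref{assump:payoff-growth}, a backward-induction argument (truncating the horizon at some $T' \ge T$, applying the continuous mapping theorem to each layer of the Snell envelope, and then letting the truncation go to infinity using the uniform growth bound to dominate the tail) yields $c_f^{\hat{\delta}}(s,x) \to c^{\hat{\delta}}(s,x)$ as $n\to\infty$ and $\delta_t\to 0$, for each fixed $\hat{\delta}$. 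The replacement of $\tau$ by $\zeta_\tau$ in the discount factor and in the constraint $\{\zeta_\tau\le T\}$ is absorbed into this limit because $\sup_{t\le T'}|\zeta_t-t|\to 0$ in probability as $\delta_t\to 0$, and Lipschitz continuity of $u\mapsto e^{-ru}$ on bounded intervals transfers this into a negligible error.

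The main obstacle is the order of limits: given $\epsilon>0$ I must first choose $\hat{\delta}$ small enough to bound the first and third terms by $\epsilon/3$ uniformly in $(n,\delta_t)$, and only then send $n\to\infty$ and $\delta_t\to 0$ in the middle term. The delicate step is thus the uniform-in-$(n,\delta_t)$ Bermudan-to-American estimate for the CTMC, which leans essentially on Assumption \ref{assump:moment-quadratic-variation}. Once this uniformity is in hand, combining the three bounds and finally sending $\hat{\delta}\to 0$ completes the argument.
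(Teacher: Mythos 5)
Your proposal follows essentially the same route as the paper: the identical three-term decomposition through the Bermudan quantities $c^{\hat{\delta}}$ and $c_f^{\hat{\delta}}$, the same use of Assumption \ref{assump:moment-quadratic-variation} to get a Bermudan-to-American bound of order $\hat{\delta}^{1/2}$ uniform in $(n,\delta_t)$, and the same order of limits (fix $\hat{\delta}$ first, then send $n\to\infty$, $\delta_t\to 0$ in the middle term). The paper fills in the middle term with an explicit chain of intermediate value functions (horizon truncation at $\overline{T}$, swapping $\zeta_{\tau^{\hat{\delta}}}$ for $\tau^{\hat{\delta}}$ in the discount factor, and a Gamma-distributed random maturity $T^{\delta_t}$ for the constraint $\{\zeta_\tau\le T\}$) where you sketch a Snell-envelope backward induction, but the underlying ideas coincide.
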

The proof of Proposition \ref{prop:finite-American-option-converge} is provided in Appendix \ref{app:proof}. With Propositions \ref{prop:parisian-stopping-time-weak-convergence} and \ref{prop:finite-American-option-converge}, we have the following result.
\begin{theorem}\label{thm:finite-down-in-Amer-Pari-option-converge}
    Suppose Assumptions \ref{assump:payoff-lipschitz}, \ref{assump:moment-quadratic-variation}, \ref{assump:payoff-growth} and \ref{assump:x-path-regularity} hold. Then as $\delta_t \rightarrow 0$ and $n\rightarrow \infty$,
    \[
    C_{fi}(t,x)\rightarrow \overline{C}_{fi}(t,x),
    \]
    where $\overline{C}_{fi}(t,x)$ is the value function of the finite maturity down-in American Parisian option under the original dynamic $X$ with maturity $T$:
    \[
    \overline{C}_{fi}(t,x)=\sup_{\tau \in \mathcal{T}_{t,T}^X}\mathbb{E}_{t,x} \big[ e^{-r(\tau - t)}  f( X_{\tau}) 1_{\{\tau\ge \tau_{L, D}^{X,-} \}} \big].
    \]
\end{theorem}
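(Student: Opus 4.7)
The plan is to derive two matching representations of $C_{fi}(t,x)$ and $\overline{C}_{fi}(t,x)$ as expectations of $e^{-r(\cdot-t)}$ times a finite-maturity vanilla American option value evaluated at the Parisian stopping time and state, and then combine the weak convergence of the Parisian stopping-time state (Proposition \ref{prop:parisian-stopping-time-weak-convergence}) with the convergence of vanilla American option values (Proposition \ref{prop:finite-American-option-converge}). By Theorem \ref{fdi},
\[
C_{fi}(t,x) = \mathbb{E}_{t,x}\bigl[e^{-r(\zeta_{\tau_{L,D}^{Y,-}}-t)}\, c_f(\zeta_{\tau_{L,D}^{Y,-}}, Y_{\tau_{L,D}^{Y,-}})\bigr],
\]
and by applying the strong Markov property of $X$ at $\tau_{L,D}^{X,-}$,
\[
\overline{C}_{fi}(t,x) = \mathbb{E}_{t,x}\bigl[e^{-r(\tau_{L,D}^{X,-}-t)}\, c(\tau_{L,D}^{X,-}, X_{\tau_{L,D}^{X,-}})\, 1_{\{\tau_{L,D}^{X,-}\le T\}}\bigr].
\]
The indicator in the second identity reflects that $\tau_{L,D}^{X,-}>T$ leaves no admissible $\tau\in \mathcal{T}_{t,T}^X$ that can activate the option, and on the CTMC side $c_f(s,y)=0$ for $s>T$ enforces the analogous restriction automatically.

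With the two representations in hand, I would split the difference as $C_{fi}(t,x)-\overline{C}_{fi}(t,x)=I_1+I_2$ with
\[
I_1 = \mathbb{E}_{t,x}\bigl[e^{-r(\zeta_{\tau_{L,D}^{Y,-}}-t)}\bigl(c_f-c\bigr)(\zeta_{\tau_{L,D}^{Y,-}}, Y_{\tau_{L,D}^{Y,-}})\bigr],
\]
\[
I_2 = \mathbb{E}_{t,x}\bigl[\Phi(\zeta_{\tau_{L,D}^{Y,-}}, Y_{\tau_{L,D}^{Y,-}})\bigr] - \mathbb{E}_{t,x}\bigl[\Phi(\tau_{L,D}^{X,-}, X_{\tau_{L,D}^{X,-}})\bigr],
\]
where $\Phi(s,y):=e^{-r(s-t)}c(s,y)1_{\{s\le T\}}$. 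For $I_2$, I would extend Proposition \ref{prop:parisian-stopping-time-weak-convergence} from a univariate functional of $Y_t$ at a fixed $t$ to the bivariate functional $\Phi$ evaluated at the Parisian stopping-time state; the extension should follow the same continuity-of-the-Parisian-functional argument used in the proof of that proposition, provided $\Phi$ is bounded (which follows from Assumption \ref{assump:payoff-growth}) and continuous in $y$ (inherited from Lipschitz continuity of $f$ in Assumption \ref{assump:payoff-lipschitz} by classical American option theory). For $I_1$, Proposition \ref{prop:finite-American-option-converge} gives pointwise convergence $c_f\to c$, which combined with dominated convergence (again using Assumption \ref{assump:payoff-growth}) and a Skorokhod representation of the weak convergence of $(\zeta_{\tau_{L,D}^{Y,-}}, Y_{\tau_{L,D}^{Y,-}})$ yields $I_1\to 0$.

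The main obstacle is that the approximation error $c_f-c$ is evaluated at a random point whose law depends on $n$ and $\delta_t$, so pointwise convergence alone does not suffice. The clean way to reconcile this is to upgrade the pointwise convergence to locally uniform convergence of $c_f$, which is possible because the family $\{c_f\}_n$ inherits a uniform Lipschitz constant (in $y$) and a uniform sup-norm bound from $f$ via the Bellman-type recursion underlying Proposition \ref{prop:cfi-linear-system}; an Arzel\`a-Ascoli extraction then yields equicontinuity and hence locally uniform convergence on the spatial grid. Coupled with the Skorokhod a.s.\ representation $(\zeta_{\tau_{L,D}^{Y,-}}, Y_{\tau_{L,D}^{Y,-}})\to (\tau_{L,D}^{X,-}, X_{\tau_{L,D}^{X,-}})$ on a common probability space and dominated convergence, both $I_1$ and $I_2$ vanish, establishing $C_{fi}(t,x)\to \overline{C}_{fi}(t,x)$.
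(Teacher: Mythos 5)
Your proposal is correct and follows essentially the same route as the paper: both represent $C_{fi}$ and $\overline{C}_{fi}$ as expectations, over the law of the Parisian stopping time and state, of a finite-maturity vanilla American option value, and then combine Proposition \ref{prop:finite-American-option-converge} with Proposition \ref{prop:parisian-stopping-time-weak-convergence}. In fact you are more careful than the paper at the one delicate step — the paper simply cites the two propositions to pass to the limit, whereas you explicitly flag that $c_f-c$ is evaluated at a random point whose law varies with $n$ and $\delta_t$ and supply the uniform-convergence/Skorokhod argument needed to close that gap.
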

The proof of Theorem \ref{thm:finite-down-in-Amer-Pari-option-converge} is provided in Appendix \ref{app:proof}.

\subsection{Down-Out American Parisian Options}\label{sec:conv-down-out}

In the down-out case, we need to analyze the weak convergence of $\widetilde{\tau}_{L, D}^{Y,-}$, which is the Parisian stopping time under the CTMC process $(\zeta, D, Y)$. $\widetilde{\tau}_{L, D}^{X,-}$ represents the Parisian stopping time under $(\xi, G, X)$. 

\begin{assumption}\label{assump:g-regularity}
    Suppose that $(\zeta, D, Y)$ weakly converges to $(\xi, G, X)$ as $\delta_t\rightarrow 0$, $\delta_d\rightarrow 0$ and $n\rightarrow \infty$, $\inf\{ t \ge 0: G_t > D \} = \inf\{ t \ge 0: G_t \ge D \}$ a.s., and $\mathbb{P}[\widetilde{\tau}^{X,-}_{L, D} = t] = 0$ for any $t > 0$. 
\end{assumption}

\begin{proposition}\label{prop:down-out-parisian-time-weak-convergence}
    Suppose Assumption \ref{assump:g-regularity} holds. Then for any continuous function $f(\cdot)$ such that for any $t \ge 0$ there exists constant $C_t$ independent of $\delta_t, \delta_d$ and $n$ satisfying that $\mathbb{E}[|f(Y_t)|] \le C_t$, we have,	   
	   \[
	   \mathbb{E}\left[1_{\left\{\widetilde{\tau}_{L, D}^{Y,-} \leq t\right\}} f\left(Y_{t}\right)\right] \longrightarrow \mathbb{E}\left[1_{\left\{\widetilde{\tau}_{L, D}^{X,-} \leq t\right\}} f\left(X_{t}\right)\right].
	   \] 
\end{proposition}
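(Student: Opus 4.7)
The plan mirrors the proof strategy of Proposition \ref{prop:parisian-stopping-time-weak-convergence} from \cite{zhang2023general} but carried out on the state-augmented processes $(\zeta, D, Y)$ and $(\xi, G, X)$ rather than on $Y$ and $X$ alone. The essential simplification gained by state augmentation is that the Parisian stopping times reduce to first hitting times of the level $D$ by the excursion-clock coordinate:
\begin{align}
\widetilde{\tau}_{L,D}^{Y,-} = \inf\{s \ge 0 : D_s \ge D\}, \qquad \widetilde{\tau}_{L,D}^{X,-} = \inf\{s \ge 0 : G_s \ge D\}.
\end{align}
In this formulation, we no longer need to track excursions of the spatial coordinate below $L$; we only need to verify continuity of a hitting-time functional for a level-based threshold.

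First, I would introduce the path functional
\begin{align}
\Phi_t(\omega^1, \omega^2, \omega^3) := 1_{\{\inf\{s \ge 0 : \omega^2_s \ge D\}\,\le\, t\}}\, f(\omega^3_t)
\end{align}
on the Skorokhod space $D([0,\infty), \mathbb{R}^3)$, so that the two expectations in the statement are exactly $\mathbb{E}[\Phi_t(\zeta, D, Y)]$ and $\mathbb{E}[\Phi_t(\xi, G, X)]$. Next, I would verify that $\Phi_t$ is continuous (in the $J_1$ topology) at paths $\omega$ satisfying three conditions: (a) $\inf\{s : \omega^2_s > D\} = \inf\{s : \omega^2_s \ge D\}$, which guarantees that the hitting-time map is continuous at $\omega$; (b) this common hitting time is not equal to $t$, so that the indicator is continuous at $\omega$; and (c) $\omega^3$ is continuous at $t$, so that the evaluation map is continuous. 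Under Assumption \ref{assump:g-regularity}, (a) and (b) hold almost surely under the law of $(\xi, G, X)$ (using $\mathbb{P}[\widetilde{\tau}^{X,-}_{L,D}=t]=0$), and (c) holds for the fixed $t$ outside the at-most-countable set of fixed discontinuities of $X$, which can be handled by choosing $t$ generically. Combining with the weak convergence assumed in Assumption \ref{assump:g-regularity}, the continuous mapping theorem yields $\Phi_t(\zeta, D, Y) \Rightarrow \Phi_t(\xi, G, X)$.

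The main obstacle is the hitting-time continuity step. Hitting times of a fixed level are generally only lower-semicontinuous in the $J_1$ topology; upper-semicontinuity fails precisely on "creeping" paths that touch the level without strictly crossing it. The condition $\inf\{s: G_s > D\} = \inf\{s: G_s \ge D\}$ almost surely in Assumption \ref{assump:g-regularity} is inserted exactly to exclude such pathological paths under the limit law, which is what makes the functional almost surely continuous at the limiting paths. Care is also needed when paths in the approximating sequence jump across $D$ at times close to $\widetilde{\tau}_{L,D}^{X,-}$; the Skorokhod-topology convergence together with the non-creeping property ensures these nearby jump times accumulate at the correct limiting hitting time.

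Finally, to upgrade weak convergence of $\Phi_t(\zeta, D, Y)$ to convergence of expectations, note $|\Phi_t(\omega)| \le |f(\omega^3_t)|$. For bounded continuous $f$, the passage is immediate by bounded convergence. For unbounded $f$, I would truncate via $f_M = (f \wedge M) \vee (-M)$, apply the bounded case to obtain convergence of the truncated expectations, and control the truncation error uniformly in $n$ using the moment bound $\mathbb{E}[|f(Y_t)|] \le C_t$ combined with the tightness of $\{\mathcal{L}(f(Y_t))\}$ implied by the weak convergence and continuity of $f$; letting $M \to \infty$ after $n \to \infty$ closes the argument.
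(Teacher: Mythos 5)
The paper omits the proof of this proposition entirely, remarking only that it is similar to the proof of Theorem 3.3 in \cite{zhang2023general}; your argument is precisely the natural adaptation of that proof to the state-augmented setting, in which the Parisian time becomes a first-passage time of the duration coordinate and Assumption \ref{assump:g-regularity} supplies exactly the non-creeping and no-atom conditions needed for almost-sure continuity of the path functional before applying the continuous mapping theorem. Your proposal therefore follows essentially the same route the paper intends (the one delicate point, upgrading weak convergence to convergence of expectations from the stated $L^1$ bound via uniform integrability, is inherited verbatim from the hypotheses of the cited theorem rather than being a defect of your argument).
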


The proof of Proposition \ref{prop:down-out-parisian-time-weak-convergence} is similar to the proof of Theorem 3.3 in \cite{zhang2023general}, so we neglect the proof here.

\subsubsection{The Finite-Maturity Case}

Let $\widetilde{\mathrm{F}}^X$ be the filtration generated by the process $(G, X)$. We introduce the following quantity under the original dynamic $X$:
\[
\overline{C}^{\hat{\delta}}_{fo} (s,d,x)=\sup _{\tau \in \widetilde{\mathcal{T}}^X_{s,T}} \mathbb{E}_{s,d, x}\left[e^{-r\left(\tau^{\hat{\delta}}-s\right)} f\left(X_{\tau^{\hat{\delta}}}\right) 
 1_{\{ \tau\le \widetilde{\tau}_{L, D}^{X, -} \}} \right],
\]
where $\widetilde{\mathcal{T}}^X_{s,T}$ is the set of $\widetilde{\mathrm{F}}^X$ stopping times taking value in $[s,T] \cup \{\infty\}$, and $\tau^{\hat{\delta}}$ is the stopping time taking the form $\tau^{\hat{\delta}}=\inf \left \{t\ge \tau:t\in \mathbb{T}^{\hat{\delta}}  \right \}$ with $\mathbb{T}^{\hat{\delta}}=\{i\hat{\delta}:i\in\mathbb{N}\}$ for some stopping time $\tau \in \widetilde{\mathcal{T}}^X_{s,T}$.

Let $\widetilde{\mathrm{F}}^{\zeta,Y}$ be the filtration generated by the process $(\zeta, D, Y)$. We consider the following intermediate quantity under $(\zeta, D, Y)$: 
\[
C_{fo}^{\hat{\delta}} (s,d,x)=\sup_{\tau \in \widetilde{\mathcal{T}}^{\zeta, Y}_{s}} \mathbb{E}_{s,d,x} \big[ e^{-r(\zeta_{\tau^{\hat{\delta}}} - s)} f(Y_{\tau^{\hat{\delta}}})1_{\{ \zeta_{\tau}\le T \}}1_{\{ \tau\le \widetilde{\tau}_{L, D}^{Y, -} \}} \big],
\]
where $\widetilde{\mathcal{T}}^{\zeta, Y}_{s}$ is the set of $\widetilde{\mathrm{F}}^{\zeta,Y}$ stopping times taking values in $[s,\infty]$. With Proposition \ref{prop:down-out-parisian-time-weak-convergence}, we have the following result.

\begin{theorem}\label{thm:finite-down-out-Amer-Pari-option-converge}
    Suppose Assumptions \ref{assump:payoff-lipschitz}, \ref{assump:moment-quadratic-variation}, \ref{assump:payoff-growth} and \ref{assump:g-regularity} hold. Then as $n\rightarrow \infty$, $\delta_t\rightarrow 0$ and $\delta_d\rightarrow 0$,
    \[
    C_{fo}(s,d,x)\rightarrow \overline{C}_{fo}(s,d,x),
    \]
    where $\overline{C}_{fo}(s,d,x)$ is the value function of the finite-maturity down-out American Parisian option under the original joint process $(\xi,G,X)$:
    \[
\overline{C}_{fo}(s,d,x)=\sup _{\tau \in \widetilde{\mathcal{T}}^X_{s,T}} \mathbb{E}_{s, d, x}\left[e^{-r\left(\tau-s\right)} f\left(X_{\tau}\right) 1_{\{ \tau\le \widetilde{\tau}_{L, D}^{X,-} \}} \right].
\]
\end{theorem}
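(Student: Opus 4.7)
The plan is to follow the same sandwich strategy used in the down-in case (cf.\ the proof of Proposition \ref{prop:finite-American-option-converge} and Theorem \ref{thm:finite-down-in-Amer-Pari-option-converge}), but here the Parisian stopping constraint enters directly into the payoff functional rather than as a restriction on the initial state, so one cannot decompose the option through its activation time. Writing
\begin{align*}
|C_{fo}(s,d,x) - \overline{C}_{fo}(s,d,x)| &\le |C_{fo}(s,d,x) - C_{fo}^{\hat{\delta}}(s,d,x)| \\
&\quad + |C_{fo}^{\hat{\delta}}(s,d,x) - \overline{C}_{fo}^{\hat{\delta}}(s,d,x)| \\
&\quad + |\overline{C}_{fo}^{\hat{\delta}}(s,d,x) - \overline{C}_{fo}(s,d,x)|,
\end{align*}
I would first pick $\hat{\delta}$ small enough so that the first and third ``Bermudan-vs-American'' terms are each below $\epsilon/3$, uniformly in $n,\delta_t,\delta_d$, and then send $n\to\infty$ and $\delta_t,\delta_d\to 0$ for the middle ``CTMC-vs-original'' term.

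For the first and third differences, standard optimal-stopping arguments show that for any admissible stopping time $\tau$, replacing $\tau$ by $\tau^{\hat{\delta}}$ changes the discounted payoff $e^{-r\tau}f(Y_\tau)1_{\{\tau\le\widetilde{\tau}_{L,D}^{-}\}}$ by an $L^1$-small amount: the discount factor is Lipschitz in $\tau$, $f$ is Lipschitz by Assumption \ref{assump:payoff-lipschitz}, and the increments of $Y$ (resp.\ $X$) between $\tau$ and $\tau^{\hat{\delta}}$ are controlled via Assumption \ref{assump:moment-quadratic-variation}. Care is needed because $\tau^{\hat{\delta}}\ge\tau$ may violate the down-out constraint $\tau\le\widetilde{\tau}_{L,D}^{-}$; a natural remedy is to compare against $\tau\wedge(\widetilde{\tau}_{L,D}^{-})^{\hat{\delta}}$ and absorb the resulting error using the no-atom condition in Assumption \ref{assump:g-regularity}, while Assumption \ref{assump:payoff-growth} takes care of the tail contribution if one also truncates at the maturity $T$.

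For the middle term, both $C_{fo}^{\hat{\delta}}$ and $\overline{C}_{fo}^{\hat{\delta}}$ are value functions of Bermudan-type problems whose stopping times live on the finite grid $\mathbb{T}^{\hat{\delta}}\cap[0,T]$. I would represent them through backward dynamic programming, so that each becomes a finite composition of expectations of bounded functionals of the underlying process at grid points, the relevant functionals being of the form $g(Y_{t_1},\ldots,Y_{t_k})1_{\{\widetilde{\tau}_{L,D}^{-}>t_k\}}$. The weak convergence $(\zeta,D,Y)\Rightarrow(\xi,G,X)$ in Assumption \ref{assump:g-regularity}, together with Proposition \ref{prop:down-out-parisian-time-weak-convergence} which turns the Parisian indicator into a continuity-point functional in the limit, then delivers the convergence of each such expectation; uniform integrability coming from Assumption \ref{assump:payoff-growth} upgrades convergence in distribution to convergence of expectations. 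A recursive application over the finitely many grid points yields that the middle term tends to zero.

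The hard part will be the first and third terms rather than the middle one. Because the indicator $1_{\{\tau\le\widetilde{\tau}_{L,D}^{-}\}}$ couples the exercise time $\tau$ with the Parisian stopping time, one must construct, for each admissible stopping time in the American problem, a nearby admissible stopping time in the Bermudan problem that still satisfies the down-out constraint and achieves an almost-equal expected payoff; the right-continuity-of-hitting-time and no-atom conditions in Assumption \ref{assump:g-regularity} are precisely what is needed to make this matching work uniformly in the approximation parameters.
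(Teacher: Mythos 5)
Your overall sandwich $C_{fo} \to C_{fo}^{\hat\delta} \to \overline{C}_{fo}^{\hat\delta} \to \overline{C}_{fo}$ matches the skeleton of the paper's proof, and the two outer Bermudan-vs-American bounds of order $\hat\delta^{1/2}$, uniform in $n,\delta_t,\delta_d$ (the paper's claims (1)--(2)), are obtained exactly as you suggest from Assumptions \ref{assump:payoff-lipschitz}, \ref{assump:moment-quadratic-variation} and \ref{assump:payoff-growth}. Note, however, that these outer terms are \emph{not} where the difficulty lies: in the paper's definitions of $\overline{C}^{\hat\delta}_{fo}$ and $C^{\hat\delta}_{fo}$ the Parisian constraint is still evaluated at $\tau$ itself (only the payoff is collected at $\tau^{\hat\delta}$), so the issue you raise about $\tau^{\hat\delta}\ge\tau$ violating $\tau\le\widetilde\tau^{-}_{L,D}$ does not arise, and your proposed remedy $\tau\wedge(\widetilde\tau^{-}_{L,D})^{\hat\delta}$ addresses a non-problem.

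The genuine gap is in your middle term. You propose to prove $|C^{\hat\delta}_{fo}-\overline{C}^{\hat\delta}_{fo}|\to 0$ by backward dynamic programming plus Proposition \ref{prop:down-out-parisian-time-weak-convergence}, but that proposition only gives convergence of $\mathbb{P}\big(t\le\widetilde\tau^{Y,-}_{L,D}\big)$ at \emph{deterministic} times $t$, whereas the constraint $1_{\{\tau\le\widetilde\tau^{-}_{L,D}\}}$ involves an arbitrary stopping time $\tau$ sitting inside a supremum; the supremum does not commute with weak convergence, and the indicator is not a functional of the path at finitely many grid points (the admissible $\tau$ are not grid-valued, since only the payoff, not the constraint, is discretized). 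The paper's resolution --- absent from your proposal --- is a \emph{second} discretization: replace the constraint by $1_{\{\tau^{\bar\delta,-}\le\widetilde\tau^{-}_{L,D}\}}$ with $\tau^{\bar\delta,-}$ the last point of an auxiliary grid of mesh $\bar\delta$ before $\tau$. Monotonicity of $t\mapsto\mathbb{P}\big(t\le\widetilde\tau^{X,-}_{L,D}\big)$ together with the no-atom condition in Assumption \ref{assump:g-regularity} bounds the replacement error by $\sup_{t}\mathbb{P}\big(t-\bar\delta\le\widetilde\tau^{X,-}_{L,D}<t\big)$ \emph{uniformly over stopping times}, and for fixed $\bar\delta$ the discretized constraint reduces to finitely many deterministic-time comparisons to which Proposition \ref{prop:down-out-parisian-time-weak-convergence} does apply. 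In addition, your three-term split hides a step the paper must carry out separately (its claim (4)): converting the random clock $\zeta_\tau$ and the indicator $1_{\{\zeta_\tau\le T\}}$ into $\tau$ and $1_{\{\tau\le T\}}$, which requires conditioning on $T^{\delta_t}=\inf\{t\ge0:\zeta_t=T\}$ and on $\widetilde\tau^{Y,-}_{L,D}$, and introducing the mixed quantity $\widehat{C}^{\hat\delta}_{fo}$ with $X$ in the payoff but $\widetilde\tau^{Y,-}_{L,D}$ in the constraint. Without these two devices the middle term of your decomposition cannot be closed.
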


The proof of Theorem \ref{thm:finite-down-out-Amer-Pari-option-converge} is provided in Appendix \ref{app:proof}.

\subsubsection{The Perpetual and Time-Homogeneous Case}
We introduce the following intermediate quantity under the process $(G,X)$:
\[
\overline{C}_{po}^{\hat{\delta}}(d,x)=\sup _{\tau \in \widetilde{\mathcal{T}}^X}\mathbb{E}_{d, x}\big[e^{-r\tau^{\hat{\delta}}} f(X_{\tau^{\hat{\delta}}})1_{\{ \tau\le \widetilde{\tau}_{L, D}^{X, -} \}} \big],
\]
where $\widetilde{\mathcal{T}}^X$ is the set of $\widetilde{\mathrm{F}}^X$ stopping times taking values in $[0,\infty]$, and $\tau^{\hat{\delta}}$ is the stopping time taking the form $\tau^{\hat{\delta}}=\inf \left \{s\ge \tau:s\in \mathbb{T}^{\hat{\delta}}  \right \}  $ with $\mathbb{T}^{\hat{\delta}}=\left \{i{\hat{\delta}}: i\in \mathbb{N}  \right \} $ for some stopping time $\tau \in \widetilde{\mathcal{T}}^X$. Let $\widetilde{\mathrm{F}}^Y$ be the filtration generated by the process $(D,Y)$. We also consider the following intermediate quantities under the process $(D,Y)$: 
\[
C_{po}^{\hat{\delta}}(d,x)=\sup_{\tau \in \widetilde{\mathcal{T}}^Y} \mathbb{E}_{d,x} \big[ e^{-r\tau^{\hat{\delta}}} f(Y_{\tau^{\hat{\delta}}})1_{\{ \tau\le \widetilde{\tau}_{L, D}^{Y, -} \}} \big],
\]
\[
\widetilde{C}^{\hat{\delta}}_{fo}(s,d,x)=\sup_{\tau\in \widetilde{\mathcal{T}}^Y_{s,T}}\mathbb{E}_{s,d,x}\big[e^{-r(\tau^{\hat{\delta}}-s)}f(Y_{\tau^{\hat{\delta}}})1_{\{\tau\le \widetilde{\tau}^{Y, -}_{L,D}\}}  \big],
\]
where $\widetilde{\mathcal{T}}^Y$ and $\widetilde{\mathcal{T}}^Y_{s,T}$ are the sets of $\widetilde{\mathrm{F}}^Y$ stopping times taking values in $[0,\infty]$ and $[s,T] \cup \{ \infty \}$ respectively.

We first establish the following lemma.
\begin{lemma}\label{lem:perpetual-down-out-American-option-intermedia}
    Suppose Assumption \ref{assump:payoff-growth} holds. Then for any $\epsilon>0$, there exists a $T_\epsilon$, such that for any $T>T_\epsilon$, $n$, $\delta_d$ and $\hat{\delta}$, we have
    \[
    \big| \overline{C}_{fo}(0,d,x)-\overline{C}^{\hat{\delta}}_{po}(d,x)\big|<\epsilon,
    \]
    \[
    \big| \widetilde{C}^{\hat{\delta}}_{fo}(0,d,x)-C^{\hat{\delta}}_{po}(d,x)\big|<\epsilon.
    \]
\end{lemma}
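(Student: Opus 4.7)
The plan is to adapt the proof of Lemma \ref{lem:perpetual-American-option-intermedia} to the down-out setting, exploiting that the Parisian knock-out indicator $1_{\{\tau \le \widetilde{\tau}_{L,D}^{\cdot,-}\}}$ only reduces the absolute payoff and therefore does not interfere with the tail estimate at the core of the argument. Both claims share the same structure: compare a finite-maturity Bermudan down-out value ($\overline{C}^{\hat{\delta}}_{fo}(0,d,x)$ or $\widetilde{C}^{\hat{\delta}}_{fo}(0,d,x)$) with its perpetual counterpart ($\overline{C}^{\hat{\delta}}_{po}(d,x)$ or $C^{\hat{\delta}}_{po}(d,x)$) under the same underlying dynamic. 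The direction ``finite $\le$ perpetual'' is immediate: any $\tau$ in a finite-maturity stopping-time class identifies with an element of the perpetual class by extending with value $\infty$ outside $[0,T]$, contributing zero to the payoff, so only the reverse bound requires work.

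Fix $\epsilon > 0$. By Assumption \ref{assump:payoff-growth}, the dominators $\sup_{t \ge 0} e^{-rt}|f(X_t)|$ and $\sup_{t \ge 0} e^{-rt}|f(Y_t)|$ are uniformly $L^1$-bounded by a constant $C$ independent of $n$ and $\delta_t$, and each decays to $0$ almost surely. A dominated-convergence argument with this uniform dominator produces a single threshold $T_\epsilon$ such that for every $T > T_\epsilon$ and every admissible $n, \delta_t$,
\begin{align}
\mathbb{E}_x\Bigl[\sup_{s > T} e^{-rs}|f(X_s)|\Bigr] < \epsilon, \qquad \mathbb{E}_x\Bigl[\sup_{s > T} e^{-rs}|f(Y_s)|\Bigr] < \epsilon.
\end{align}

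Given any $\tau \in \widetilde{\mathcal{T}}^X$, I truncate it by setting $\tau' := \tau \cdot 1_{\{\tau^{\hat{\delta}} \le T\}} + \infty \cdot 1_{\{\tau^{\hat{\delta}} > T\}}$, which lies in $\widetilde{\mathcal{T}}^X_{0,T}$ and satisfies $\tau'^{\hat{\delta}} = \tau^{\hat{\delta}}$ on $\{\tau^{\hat{\delta}} \le T\}$ with payoff zero elsewhere. The payoff difference between $\tau$ and $\tau'$ is then
\begin{align}
\mathbb{E}_{d,x}\bigl[e^{-r\tau^{\hat{\delta}}} f(X_{\tau^{\hat{\delta}}}) 1_{\{\tau \le \widetilde{\tau}_{L,D}^{X,-},\, \tau^{\hat{\delta}} > T\}}\bigr] \;\le\; \mathbb{E}_x\Bigl[\sup_{s > T} e^{-rs}|f(X_s)|\Bigr] \;<\; \epsilon,
\end{align}
where the knock-out indicator is dropped by absolute value. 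Taking the supremum over $\tau$ yields $\overline{C}^{\hat{\delta}}_{po}(d,x) - \overline{C}^{\hat{\delta}}_{fo}(0,d,x) < \epsilon$, giving the first inequality. The second inequality follows by the identical argument with $X$ replaced by $Y$ and $\widetilde{\tau}_{L,D}^{X,-}$ replaced by $\widetilde{\tau}_{L,D}^{Y,-}$.

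The main obstacle, as in Lemma \ref{lem:perpetual-American-option-intermedia}, is guaranteeing that the tail estimate is uniform in all CTMC discretization parameters $n, \delta_t, \delta_d, \hat{\delta}$. This uniformity is delivered precisely by the common $L^1$ bound $C$ in Assumption \ref{assump:payoff-growth} together with the a.s.\ tail decay: a single $T_\epsilon$ then works for all discretizations, and the parameters $\delta_d$ and $\hat{\delta}$ do not enter the dominating random variable at all.
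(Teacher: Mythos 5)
Your proposal is correct and follows essentially the same route as the paper: both bound the gap by the supremum of the discounted payoff over stopping times exceeding $T$ and control that tail uniformly in the discretization parameters via Assumption \ref{assump:payoff-growth}. Your version, which truncates each stopping time and dominates the tail by $\mathbb{E}\bigl[\sup_{s>T}e^{-rs}|f(\cdot_s)|\bigr]$, is in fact slightly more careful than the paper's pointwise almost-sure bound, but it is the same argument.
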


With Lemma \ref{lem:perpetual-down-out-American-option-intermedia}, we have the following result.
\begin{theorem}\label{thm:perpetual-down-out-Amer-Pari-option-converge}
    In the perpetual and time-homogeneous case, suppose Assumptions \ref{assump:payoff-lipschitz}, \ref{assump:moment-quadratic-variation}, \ref{assump:payoff-growth} and \ref{assump:g-regularity} hold. Then as $\delta_d\rightarrow 0$ and $n\rightarrow \infty$, 
    \[
    C_{po}(d, x)\rightarrow \overline{C}_{po}(d, x)
    \]
    where $\overline{C}_{po}(d, x)$ is the value function of the perpetual down-out American Parisian option under the original process $(G, X)$:
    \[
\overline{C}_{po}(d,x)=\sup _{\tau \in \widetilde{\mathcal{T}}^X} \mathbb{E}_{d, x}\big[e^{-r\tau} f(X_{\tau})1_{\{ \tau\le \widetilde{\tau}_{L, D}^{X, -} \}} \big].
\]
\end{theorem}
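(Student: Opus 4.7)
The plan is to use a six-term triangle inequality that reduces the perpetual down-out convergence to (i)~the finite-maturity convergence of Theorem~\ref{thm:finite-down-out-Amer-Pari-option-converge}, (ii)~the uniform perpetual-vs.-finite-maturity bounds from Lemma~\ref{lem:perpetual-down-out-American-option-intermedia}, and (iii)~the standard Bermudan-to-American convergence as the exercise-grid size $\hat{\delta}$ shrinks. Concretely, for any $T>0$ and any $\hat{\delta}>0$, I would write
\begin{align}
|C_{po}(d,x) - \overline{C}_{po}(d,x)|
&\le |C_{po}(d,x) - C^{\hat\delta}_{po}(d,x)|
   + |C^{\hat\delta}_{po}(d,x) - \widetilde{C}^{\hat\delta}_{fo}(0,d,x)| \\
&\quad + |\widetilde{C}^{\hat\delta}_{fo}(0,d,x) - C_{fo}(0,d,x)|
   + |C_{fo}(0,d,x) - \overline{C}_{fo}(0,d,x)| \\
&\quad + |\overline{C}_{fo}(0,d,x) - \overline{C}^{\hat\delta}_{po}(d,x)|
   + |\overline{C}^{\hat\delta}_{po}(d,x) - \overline{C}_{po}(d,x)|,
\end{align}
so that the Bermudan intermediates $C^{\hat\delta}_{po}$, $\overline{C}^{\hat\delta}_{po}$, $\widetilde{C}^{\hat\delta}_{fo}$ serve as bridges between the perpetual quantities we care about and their finite-$T$ analogues.

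\medskip
\noindent Given $\epsilon>0$, I would first pick $T>T_\epsilon$ from Lemma~\ref{lem:perpetual-down-out-American-option-intermedia} so that the second and fifth summands are each at most $\epsilon/6$, uniformly in $\hat\delta$, $n$, $\delta_t$ and $\delta_d$. This fixes a common finite horizon to which both the CTMC- and the original-dynamics-based prices can be compared. Since the left-hand side does not depend on $\hat\delta$, I can then send $\hat\delta\to 0$ while keeping $n,\delta_t,\delta_d$ fixed; the three Bermudan-to-American terms $|C_{po}-C^{\hat\delta}_{po}|$, $|\widetilde{C}^{\hat\delta}_{fo}(0)-C_{fo}(0)|$ and $|\overline{C}^{\hat\delta}_{po}-\overline{C}_{po}|$ all vanish by a standard Snell-envelope / dominated-convergence argument: $\tau^{\hat\delta}\downarrow\tau$ pathwise and the discounted payoffs $e^{-r\tau^{\hat\delta}}f(\cdot)$ are uniformly integrable thanks to Assumption~\ref{assump:payoff-growth}. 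This yields
\begin{align}
|C_{po}(d,x) - \overline{C}_{po}(d,x)| \le \tfrac{\epsilon}{3} + |C_{fo}(0,d,x) - \overline{C}_{fo}(0,d,x)|.
\end{align}

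\medskip
\noindent Finally, I would apply Theorem~\ref{thm:finite-down-out-Amer-Pari-option-converge} at $s=0$ with the chosen maturity $T$: as $n\to\infty$, $\delta_t\to 0$ and $\delta_d\to 0$, the residual $|C_{fo}(0,d,x) - \overline{C}_{fo}(0,d,x)|\to 0$, giving $\limsup |C_{po}(d,x) - \overline{C}_{po}(d,x)| \le \epsilon/3$. Since $\epsilon$ is arbitrary, the claimed convergence $C_{po}(d,x) \to \overline{C}_{po}(d,x)$ follows. Note that Proposition~\ref{prop:down-out-parisian-time-weak-convergence} is not invoked directly here; it enters implicitly through Theorem~\ref{thm:finite-down-out-Amer-Pari-option-converge}.

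\medskip
\noindent The main obstacle I anticipate is the rigorous justification that the Bermudan-to-American differences involving the CTMC $Y$, namely $|C_{po}-C^{\hat\delta}_{po}|$ and $|\widetilde{C}^{\hat\delta}_{fo}(0)-C_{fo}(0)|$, vanish as $\hat\delta\to 0$ in a manner that commutes with the supremum over stopping times. Under the down-out constraint the near-optimal stopping region interacts nontrivially with the excursion clock $D$, and the rounding $\tau\mapsto \tau^{\hat\delta}$ can overshoot the Parisian time $\widetilde\tau^{Y,-}_{L,D}$; fortunately the indicator in $\widetilde{C}^{\hat\delta}_{fo}$ is written in terms of the unrounded $\tau$, so the argument reduces to a continuity statement for the Snell envelope, controlled by the Lipschitz payoff (Assumption~\ref{assump:payoff-lipschitz}) together with the moment bounds (Assumptions~\ref{assump:moment-quadratic-variation} and~\ref{assump:payoff-growth}). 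Making this uniform in $n$ and $\delta_d$ is the delicate step, but it is handled harmlessly by my order of limits (take $\hat\delta\to 0$ first for each fixed discretization).
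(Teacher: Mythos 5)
Your architecture is close to the paper's: both proofs sandwich the perpetual prices between Bermudan intermediates, use Lemma \ref{lem:perpetual-down-out-American-option-intermedia} to pass to a finite horizon $T>T_\epsilon$, and insert a CTMC-to-original convergence step in the middle; your outer terms $|C_{po}-C^{\hat\delta}_{po}|$ and $|\overline{C}^{\hat\delta}_{po}-\overline{C}_{po}|$ correspond to the paper's claims (1) and (2), which are proved quantitatively as $O(\hat\delta^{1/2})$ uniformly in $n$ and $\delta_d$ via the Lipschitz payoff and Assumption \ref{assump:moment-quadratic-variation}.

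The genuine gap is your middle link $|\widetilde{C}^{\hat{\delta}}_{fo}(0,d,x)-C_{fo}(0,d,x)|$, which you classify as a ``Bermudan-to-American'' term vanishing as $\hat{\delta}\to 0$ with $n,\delta_t,\delta_d$ fixed. It is not: $\widetilde{C}^{\hat{\delta}}_{fo}$ is defined with the real clock (discount $e^{-r\tau^{\hat{\delta}}}$, stopping times of $\widetilde{\mathrm{F}}^Y$ valued in $[0,T]\cup\{\infty\}$), whereas $C_{fo}$ from Section \ref{sec:doctmc} uses the CTMC clock $\zeta$ (discount $e^{-r\zeta_\tau}$, maturity enforced via $1_{\{\zeta_\tau\le T\}}$, stopping times of $\widetilde{\mathrm{F}}^{\zeta,Y}$ valued in $[0,\infty]$). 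The gap between $\zeta_\tau$ and $\tau$ does not shrink as $\hat{\delta}\to 0$; it shrinks only as $\delta_t\to 0$, and reconciling the two stopping-time classes and maturity constraints is precisely the nontrivial content of claim (4) in the paper's proof of Theorem \ref{thm:finite-down-out-Amer-Pari-option-converge} (the chain through $\breve{C}^{\hat{\delta}}_{fo}$, $\overrightarrow{C}^{\hat{\delta}}_{fo}$, $\widehat{C}^{\hat{\delta}}_{fo}$ and the randomized maturity $T^{\delta_t}$), not a Snell-envelope continuity statement. Worse, the perpetual time-homogeneous problem involves neither $\zeta$ nor $\delta_t$ (the theorem's limits are only $\delta_d\to 0$, $n\to\infty$), so routing through $C_{fo}$ and Theorem \ref{thm:finite-down-out-Amer-Pari-option-converge} imports an extraneous parameter and an extra limit. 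The paper sidesteps all of this: its claim (3) compares the two \emph{Bermudan, real-clock, finite-maturity} quantities $\widetilde{C}^{\hat{\delta}}_{fo}(0,d,x)$ and $\overline{C}^{\hat{\delta}}_{fo}(0,d,x)$ directly, showing their difference tends to $0$ as $n\to\infty$ and $\delta_d\to 0$ by the arguments (based on Proposition \ref{prop:down-out-parisian-time-weak-convergence}) used for claim (3) in the proof of Theorem \ref{thm:finite-down-out-Amer-Pari-option-converge}, and then closes the loop with Lemma \ref{lem:perpetual-down-out-American-option-intermedia}. To repair your proof, replace the three terms $|\widetilde{C}^{\hat{\delta}}_{fo}(0)-C_{fo}(0)|+|C_{fo}(0)-\overline{C}_{fo}(0)|+|\overline{C}_{fo}(0)-\overline{C}^{\hat{\delta}}_{po}|$ by $|\widetilde{C}^{\hat{\delta}}_{fo}(0)-\overline{C}^{\hat{\delta}}_{fo}(0)|+|\overline{C}^{\hat{\delta}}_{fo}(0)-\overline{C}^{\hat{\delta}}_{po}|$, handling the first by the weak-convergence argument and the second by Lemma \ref{lem:perpetual-down-out-American-option-intermedia}.
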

The proofs of Lemma \ref{lem:perpetual-down-out-American-option-intermedia} and Theorem \ref{thm:perpetual-down-out-Amer-Pari-option-converge} are provided in Appendix \ref{app:proof}.

\section{Numerical Experiments}
 \label{sec:ne}

	We consider the pricing of perpetual/finite-maturity American Parisian down-in/-out call options under the following models to assess the efficiency and accuracy of our algorithms.
	
	\begin{itemize}
		\item The Black-Scholes (BS) model: $dX_t = (r_f - d) X_t dt + \sigma X_t dW_t,\ t \ge 0$. The parameters considered are shown in Table \ref{tab:BS-parameter-numerical-experiment}.
		
		\item The Kou's double-exponential jump-diffusion model (\cite{kou2004option}):
		\begin{equation}
			\frac{dX_t}{X_{t-}} = (r_f - d - \lambda\zeta) dt + \sigma dW_t + d\left( \sum_{i = 1}^{N_t} (V_i - 1) \right),
		\end{equation}
		where $N_t$ is a Poisson process with intensity $\lambda$, $\{V_i, i\ge1\}$ is a sequence of i.i.d. random variables with the density of $\ln V_i$ given by $f_{\ln V_i}(y) = p^+ \eta^+ e^{-\eta^+ y} 1_{\{y \ge 0\}} + p^- \eta^- e^{\eta^- y} 1_{\{ y < 0\}}$, and
		$\zeta = \mathbb{E}[V_i] - 1 = \frac{p^+\eta^+}{\eta^+ - 1} + \frac{p^-\eta^-}{\eta^-+1} - 1$. We set $r_f=0.05$, $d=0$, $\sigma = 0.30$, $\lambda = 3.0$, $\eta^+ = \eta^- = 10$, and $p^+ = p^- = 0.5$.
		The Kou model is a popular jump-diffusion model with finite jump activity.
		
		\item The Variance Gamma (VG) model  (\cite{madan1998variance}): 
		\begin{equation}
			X_t = X_0 \exp\left( (r_f - d)t + Z_t + \omega t \right),
		\end{equation}
		where $\omega = \ln(1 - \theta\nu - \sigma^2\nu/2)/\nu$, $Z_t = \theta \gamma_t(1; \nu) + \sigma W_{\gamma_t(1; \nu)}$ and $\gamma_t(1; \nu)$ is a gamma process with variance rate $\nu$ and $\mathbb{E}[\gamma_1(1; \nu)]=1$. We set $r_f=0.05$, $d=0$,  $\sigma = 0.1213$, $\nu = 0.1686$, and $\theta = -0.1436$. 
		The VG model is a popular pure-jump model with infinite jump activity. 
	\end{itemize}

    To remove the oscillations in convergence and enable the application of Richardson extrapolation to accelerate convergence, we adopt the piecewise uniform (PU) grid construction of \cite{zhang2023general} to place $L$ on the grid and strike $K$ exactly in the midway between two neighbouring grid points: suppose that $L<K$, we let
	\begin{align}\label{eq:grid-construction}
	\begin{aligned}
	\mathbb{S}_{P U}= & \{y_0+ih_{1}: 0 \leq i \leq n_{1}\} \cup \{L+(1+i) h_{2}:0 \leq i \leq n_{2}-2 \} \\
    	& \cup \{K+h_2+ih_3: 0 \leq i \leq n_{3}\},
	\end{aligned}
	\end{align}
	where $h_1=(L-y_0)/n_1$, $h_2=(K-L)/n_2$, $h_3=(y_n-K-h_2)/n_3$ and $n_1+n_2+n_3=n$. The construction is similar if $K<L$. In the numerical experiments, we fix small values of $\delta_t$ and $\delta_d$, and mainly explore the convergence against $n$. For the perpetual down-in American Parisian call options under the Black-Scholes model, we use the closed-form pricing formula derived by \cite{chesney2006american} as the benchmark to calculate the pricing errors of CTMC approximation; while in other cases, we obtain the benchmarks from CTMC approximation with very fine grids, which are accurate to the fourth decimal place and match the results from Monte Carlo simulation with a large number of replications or the numerical results from existing literature.

 \begin{table}[htbp]
\begin{tabular}{ccccccccccc}
\hline
American Parisian option & \multicolumn{10}{c}{Parameter}                                      \\ \cline{2-11} 
                         & $S_0$ & $K$ & $L$ & $D$            & $\sigma$ & $r_f$ & $d$ & $\delta_t$ & $\delta_d$ & $T$ \\ \hline
Perpetual down-in        & 90    & 95  & 90  & $1/12$ & 0.3      & 0.1   & 0.05 &  &  &  \\
Perpetual down-out       & 90    & 95  & 90  & $1/12$ & 0.3      & 0.1   & 0.05 &  &1/120 &   \\
Finite-maturity down-in  & 90    & 95  & 90  & $1/12$ & 0.3      & 0.05  & 0 &1/60&   & 1   \\
Finite-maturity down-out & 105   & 100 & 95  & $1/15$ & 0.4      & 0.06  & 0.1 &1/60 &1/150  & 1   \\ \hline
\end{tabular}
\centering
\caption{Parameters for the numerical experiments under the Black-Scholes model.}
\label{tab:BS-parameter-numerical-experiment}
\end{table}

 Table \ref{tab:error-BS-model} and Figure \ref{fig:error-BS-model} show the convergence behaviour of perpetual and finite-maturity American Parisian down-in and -out call options under the BS model. The model and contract parameters are summarized in Table \ref{tab:BS-parameter-numerical-experiment}. We observe convergence as the number of grid points increases and the Richardson extrapolation results in a substantial error reduction. Overall, the down-in cases take much less time than the down-out ones and we achieve a satisfactory level of relative error in less than $1$ minute for all the four types of options under the BS model.

\begin{table}[htbp]
\resizebox{\linewidth}{!}{
\begin{tabular}{cccccccccc}
\hline
American Parisian option type                  & Grid & Benchmark & CTMC & Abs.Err. & Rel.Err. & Time/s & Extra. & Abs.Err. & Rel.Err. \\ \hline
\multirow{5}{*}{Perpetual down-in}        &      129  &26.3239 &25.9747  & 0.3492  & 1.33$\%$&0.01 & &   &     \\
				&161  &26.3239 &26.0946  & 0.2293 &0.87$\%$& 0.02 &26.3096 &0.0142 &0.05$\%$ \\
				&193  &26.3239 &26.1658  & 0.1581  & 0.60$\%$&0.02 &26.3287 &0.0048 &0.02$\%$         \\
				&225 &26.3239 &26.2087  & 0.1152 & 0.43$\%$&0.04 &26.3282 &0.0043&0.02$\%$      \\ 
				&257 &26.3239 &26.2346  & 0.0893 & 0.34$\%$&0.08 &26.3196 &0.0043&0.02$\%$          \\ \hline
\multirow{5}{*}{Perpetual down-out}       &   661  &10.3882 &10.4574  & 0.0692  & 0.67$\%$&0.30 & &   &     \\
    			&793  &10.3882 &10.4341  & 0.0459 &0.44$\%$& 0.48 &10.3809 &0.0072 &0.07$\%$ \\
    			&925  &10.3882 &10.4217  & 0.0335  & 0.32$\%$&0.60 &10.3872 &0.0009 &0.01$\%$         \\
    			&1057 &10.3882 &10.4137  & 0.0255 & 0.24$\%$&0.93 &10.3875 &0.0007&0.00$\%$      \\ 
    			&1189 &10.3882 &10.4083  & 0.0201 &0.19$\%$&1.24 &10.3879 &0.0002&0.00$\%$         \\ \hline
\multirow{5}{*}{Finite-maturity down-in}  &   177  &3.3483 &3.3169  & 0.0314  & 0.93$\%$&5.62 & &   &     \\
				&193  &3.3483 &3.3230  & 0.0253 &0.75$\%$& 6.24 &3.3552 &0.0069 &0.20$\%$ \\
				&209  &3.3483 &3.3275  & 0.0208  & 0.62$\%$&8.83 &3.3535 &0.0052 &0.15$\%$         \\
				&225 &3.3483 &3.3309  & 0.0174 & 0.52$\%$&11.38 &3.3522 &0.0039&0.11$\%$      \\ 
				&241 &3.3483 &3.3333 & 0.0150 &0.44$\%$&13.04 &3.3495 &0.0012&0.03$\%$       \\ \hline
\multirow{5}{*}{Finite-maturity down-out} &   265  &13.5126 &13.6015  & 0.0889  & 0.66$\%$&1.41 & &   &     \\
    			&397  &13.5126 &13.5501  & 0.0375 &0.28$\%$& 3.41 &13.5088 &0.0038 &0.03$\%$ \\
    			&529  &13.5126 &13.5332  & 0.0206  & 0.15$\%$&7.18 &13.5114 &0.0012 &0.00$\%$         \\
    			&661 &13.5126 &13.5256  & 0.0130 & 0.09$\%$&14.85 &13.5121 &0.0005&0.00$\%$      \\ 
    			&793 &13.5126 &13.5216 & 0.0090 &0.06$\%$&24.04 &13.5125 &0.0001&0.00$\%$       \\ \hline
\end{tabular}}
\caption{The absolute and relative errors of CTMC approximation for American Parisian option pricing under the BS model.}
\label{tab:error-BS-model}
\centering
\end{table}

\begin{figure}[htbp]
	\small
	\centering
	\subfigure{
		\begin{minipage}[t]{0.5\textwidth}
			\centering
			\includegraphics[width=\linewidth]{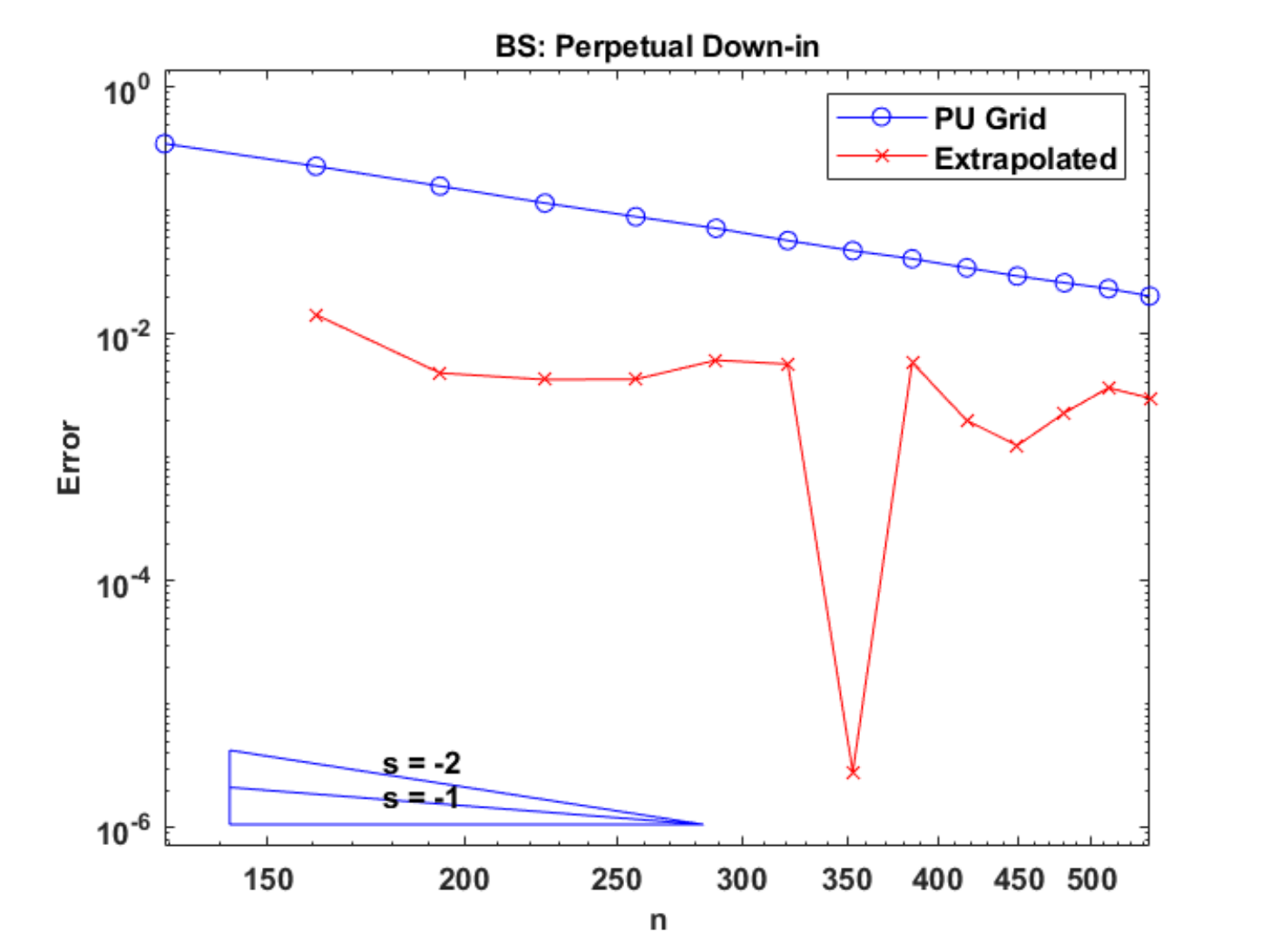}
		\end{minipage}%
	}%
	\subfigure{
		\begin{minipage}[t]{0.5\textwidth}
			\centering
			\includegraphics[width=\linewidth]{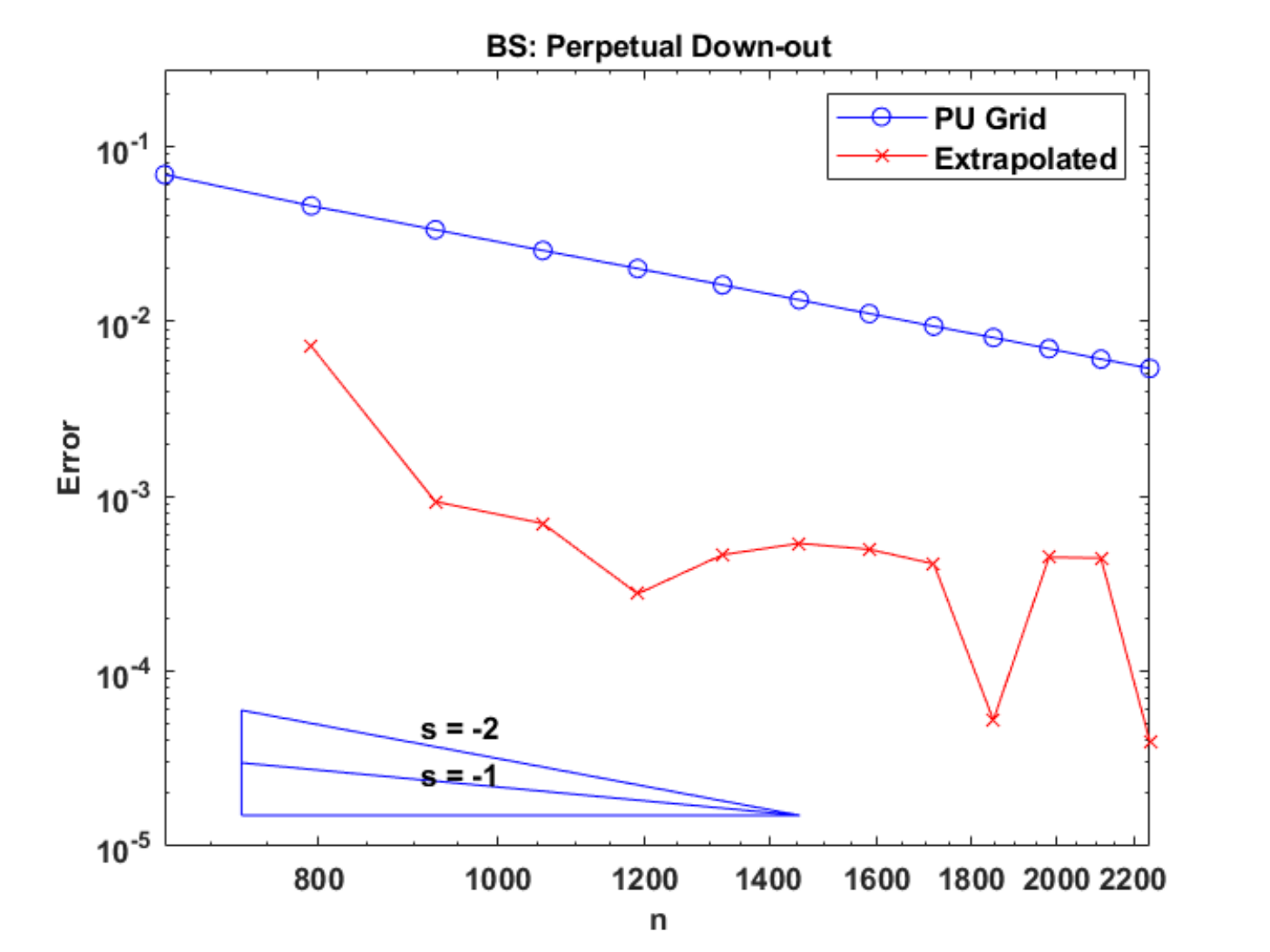}
		\end{minipage}%
	}%
	
	\subfigure{
		\begin{minipage}[t]{0.5\textwidth}
			\centering
			\includegraphics[width=\linewidth]{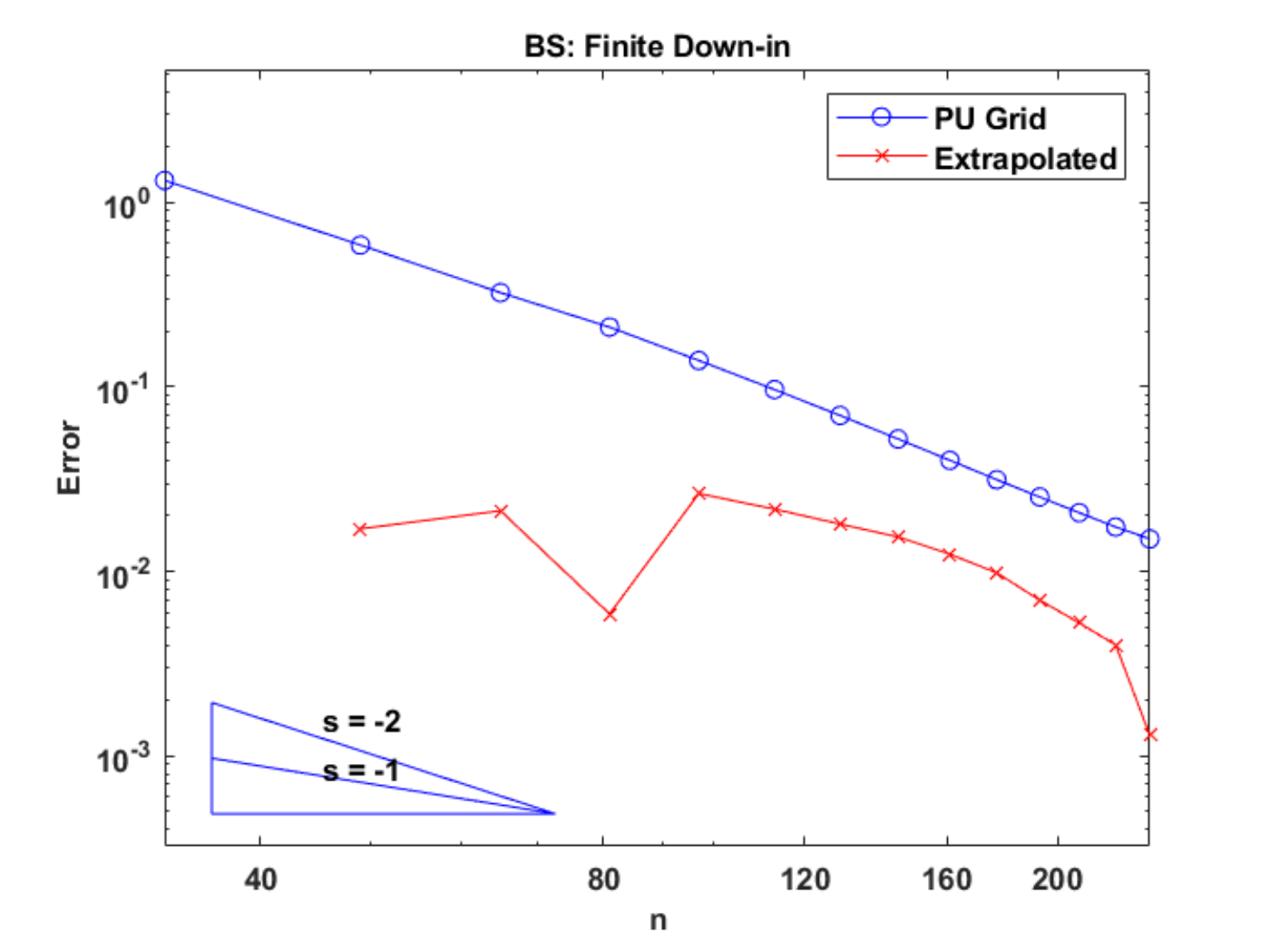}
		\end{minipage}%
	}%
	\subfigure{
		\begin{minipage}[t]{0.5\textwidth}
			\centering
			\includegraphics[width=\linewidth]{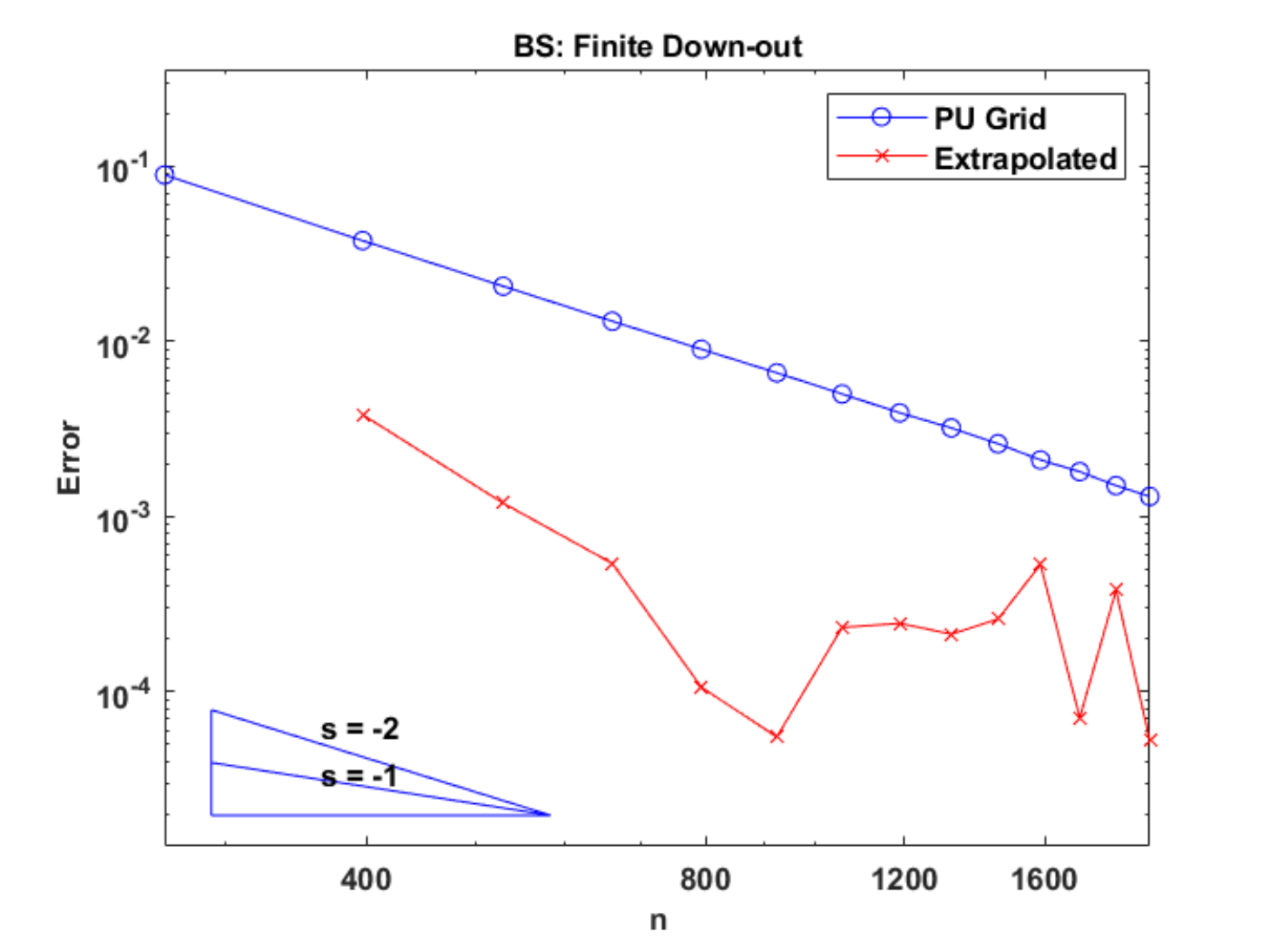}
		\end{minipage}%
	}%
	\centering
	\caption{Convergence of the CTMC approximation for perpetual/finite-maturity down-in/-out American Parisian options under the BS model.}
	\label{fig:error-BS-model}
\end{figure}

   For the Kou's and VG models, we set $S_0=90$, $K=95$, $L=90$ and $D=1/12$ for the perpetual cases, and $S_0=90$, $K=95$, $L=90$, $D=1/12$ and $T=1$ for the finite maturity cases in our numerical investigations. We set $\delta_t=1/60$ for the finite maturity cases and $\delta_d=1/120$ for the down-out cases. Tables \ref{tab:error-Kou-model}, \ref{tab:error-VG-model} and Figures \ref{fig:error-Kou-model}, \ref{fig:error-VG-model} present the results, from which we see that our approach still performs well under these models, and the Richardson extrapolation is still effective in reducing errors.

\begin{table}[htbp]
\resizebox{\linewidth}{!}{
\begin{tabular}{cccccccccc}
\hline
American Parisian option type                 & Grid & Benchmark & CTMC & Abs.Err. & Rel.Err. & Time/s & Extra. & Abs.Err. & Rel.Err. \\ \hline
\multirow{5}{*}{Perpetual down-in}        &   97  &65.0695 &64.7315  & 0.3380  & 0.52$\%$&0.04 & &   &     \\
   			&129  &65.0695 &64.8809  & 0.1886 &0.29$\%$& 0.05 &65.0752 &0.0057 &0.01$\%$ \\
   			&161  &65.0695 &64.9492  & 0.1203  & 0.18$\%$&0.06 &65.0716 &0.0021 &0.00$\%$         \\
   			&193 &65.0695 &64.9862  & 0.0833 & 0.13$\%$&0.08 &65.0708 &0.0013&0.00$\%$      \\ 
   			&225 &65.0695 &65.0085 & 0.0610 &0.09$\%$&0.10 &65.0706 &0.0011&0.00$\%$      \\ \hline
\multirow{5}{*}{Perpetual down-out}       &    397  &15.3456 &15.6182  & 0.2726  & 1.78$\%$&0.30 & &   &     \\
			 &529  &15.3456 &15.4965  & 0.1509 &0.98$\%$& 0.63 &15.3395 &0.0060 &0.04$\%$ \\
			 &661  &15.3456 &15.4415  & 0.0959  & 0.62$\%$&1.39 &15.3435 &0.0021 &0.01$\%$         \\
			 &793 &15.3456 &15.4119  & 0.0663& 0.43$\%$&2.49 &15.3445 &0.0011&0.01$\%$      \\ 
			 &925 &15.3456 &15.3941 & 0.0485&0.32$\%$&3.79 &15.3447 &0.0008&0.00$\%$       \\ \hline
\multirow{5}{*}{Finite-maturity down-in}  &    161  &4.7907 &4.7502  & 0.0405  & 0.84$\%$&4.96 & &   &     \\
			&177  &4.7907 &4.7583  & 0.0324 &0.67$\%$&5.36 &4.7971 &0.0064 &0.13$\%$ \\
			&193 &4.7907 &4.7642  & 0.0265  & 0.55$\%$&5.91 &4.7954 &0.0047 &0.09$\%$         \\
			&209 &4.7907 &4.7685  & 0.0222& 0.46$\%$&7.38 &4.7934 &0.0027&0.05$\%$      \\ 
			&225 &4.7907 &4.7716 & 0.0191&0.40$\%$&8.34 &4.7911 &0.0004&0.01$\%$      \\ \hline
\multirow{5}{*}{Finite-maturity down-out} &   529  &9.0537 &9.1261  & 0.0724 &0.79$\%$& 8.62 & & & \\
			&595  &9.0537 &9.1118  & 0.0581  & 0.64$\%$&10.50 &9.0578 &0.0041 &0.04$\%$         \\
			&661 &9.0537 &9.1013  & 0.0476& 0.52$\%$&13.85 &9.0564 &0.0027&0.03$\%$      \\ 
			&727 &9.0537 &9.0934 & 0.0397&0.43$\%$&17.65 &9.0557 &0.0020&0.02$\%$\\
			&793  &9.0537 &9.0873  & 0.0336  & 0.37$\%$&28.02 & 9.0551&0.0014   &  0.01$\%$       \\ \hline
\end{tabular}}
\centering
\caption{The absolute and relative errors of CTMC approximation for American Parisian option pricing under the Kou's model.}
\label{tab:error-Kou-model}
\end{table}

\begin{figure}[htbp]
	\small
	\centering
	\subfigure{
		\begin{minipage}[t]{0.5\textwidth}
			\centering
			\includegraphics[width=\linewidth]{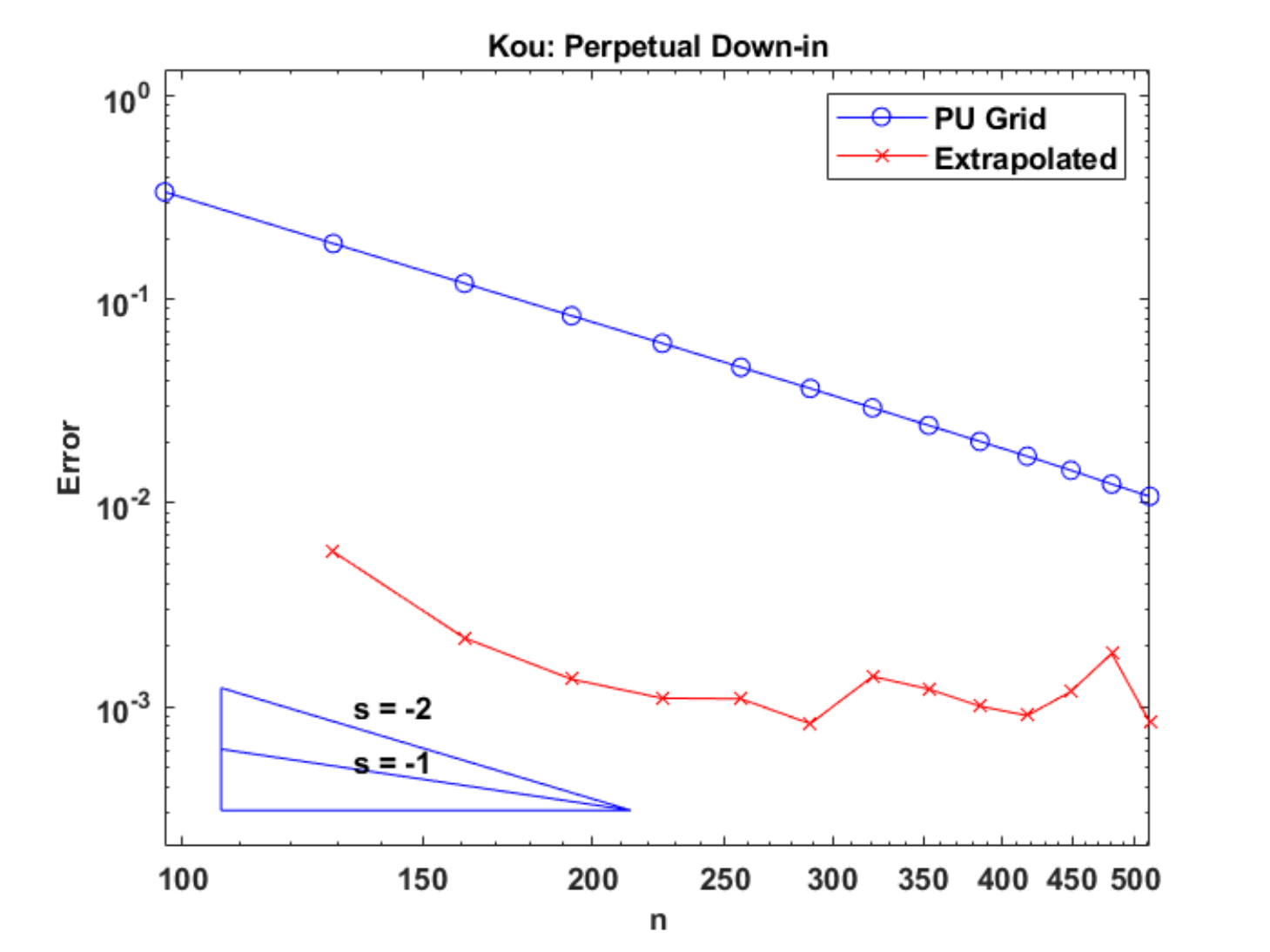}
		\end{minipage}%
	}%
	\subfigure{
		\begin{minipage}[t]{0.5\textwidth}
			\centering
			\includegraphics[width=\linewidth]{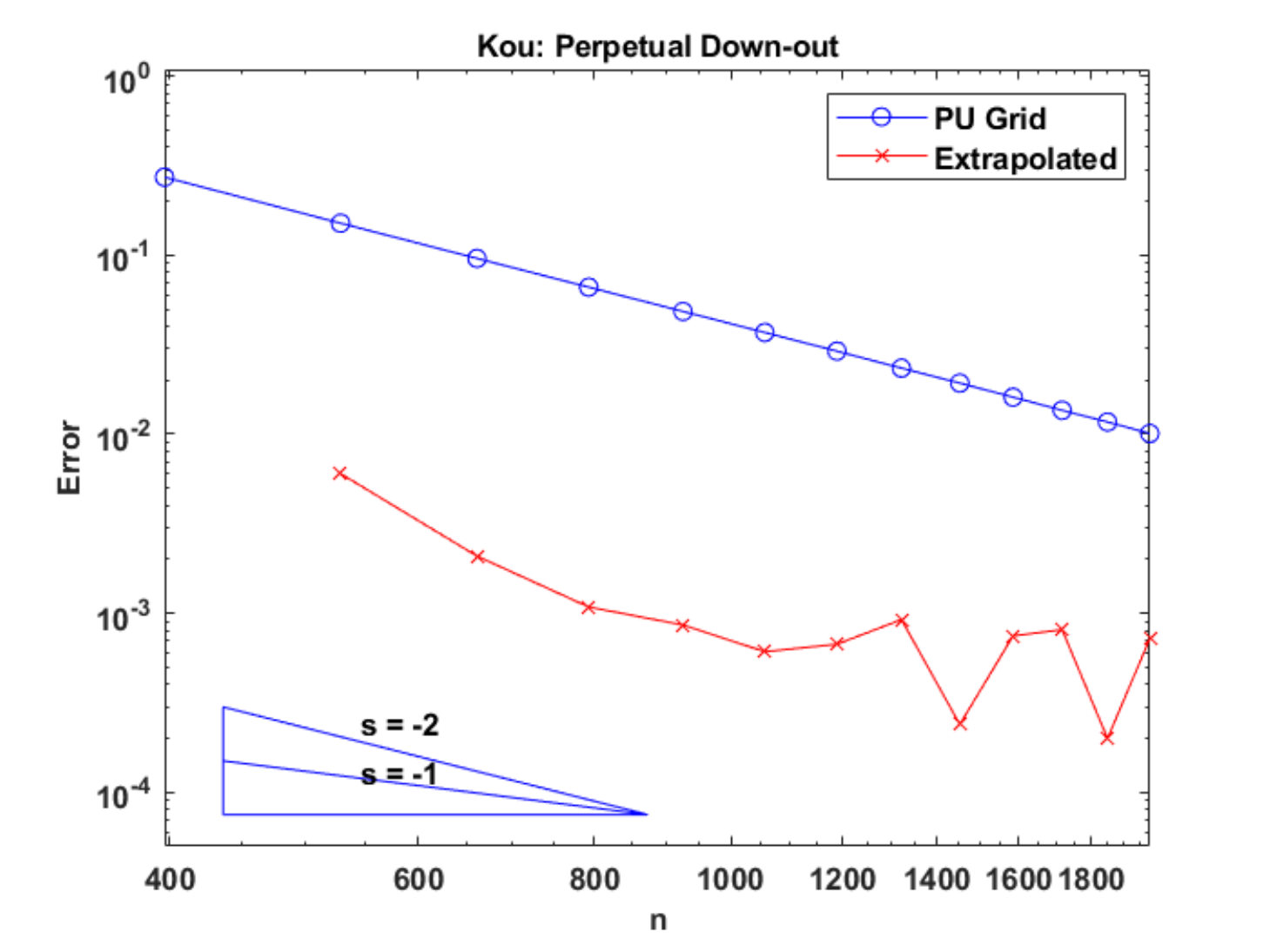}
		\end{minipage}%
	}%
	
	\subfigure{
		\begin{minipage}[t]{0.5\textwidth}
			\centering
			\includegraphics[width=\linewidth]{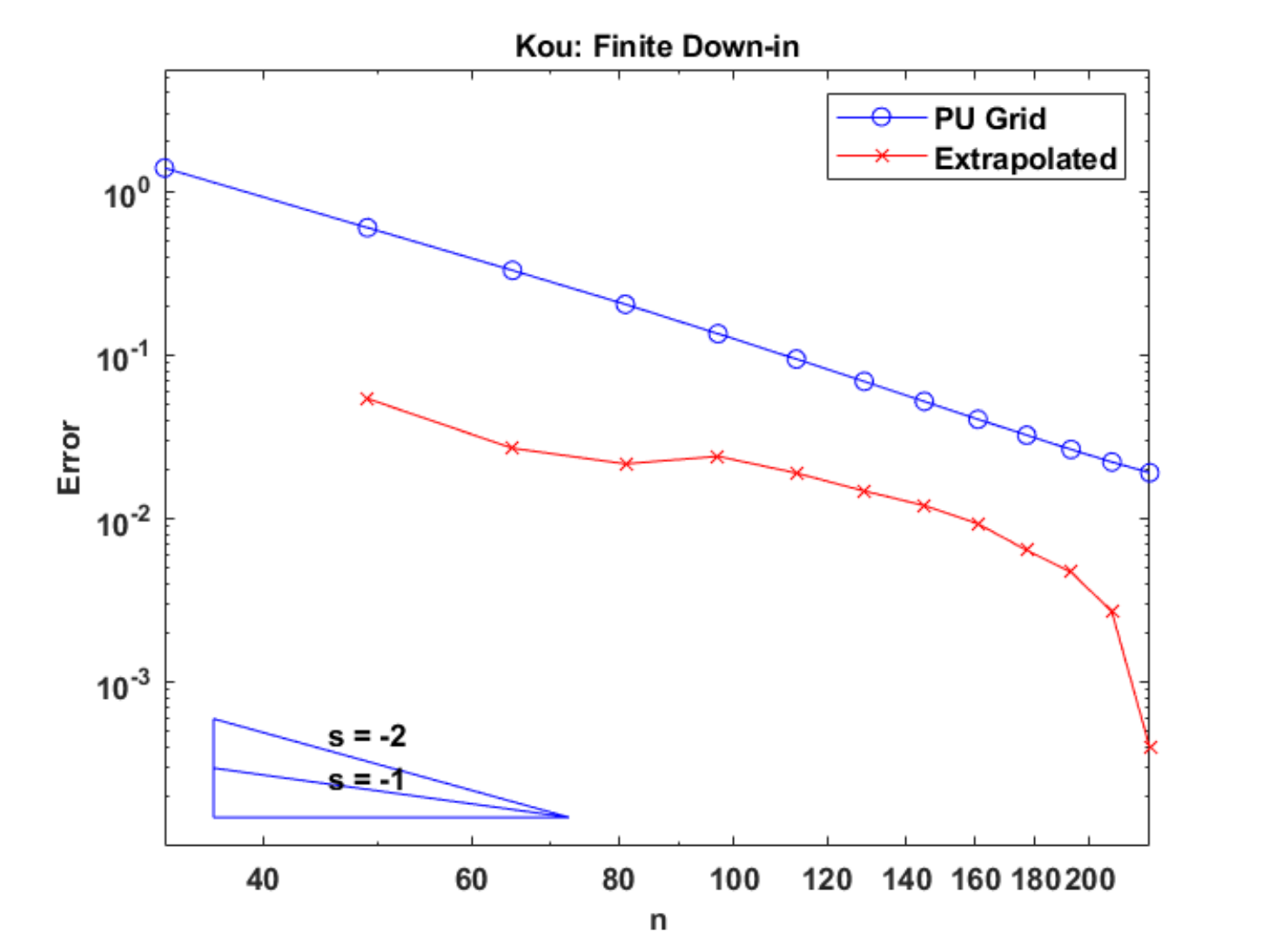}
		\end{minipage}%
	}%
	\subfigure{
		\begin{minipage}[t]{0.5\textwidth}
			\centering
			\includegraphics[width=\linewidth]{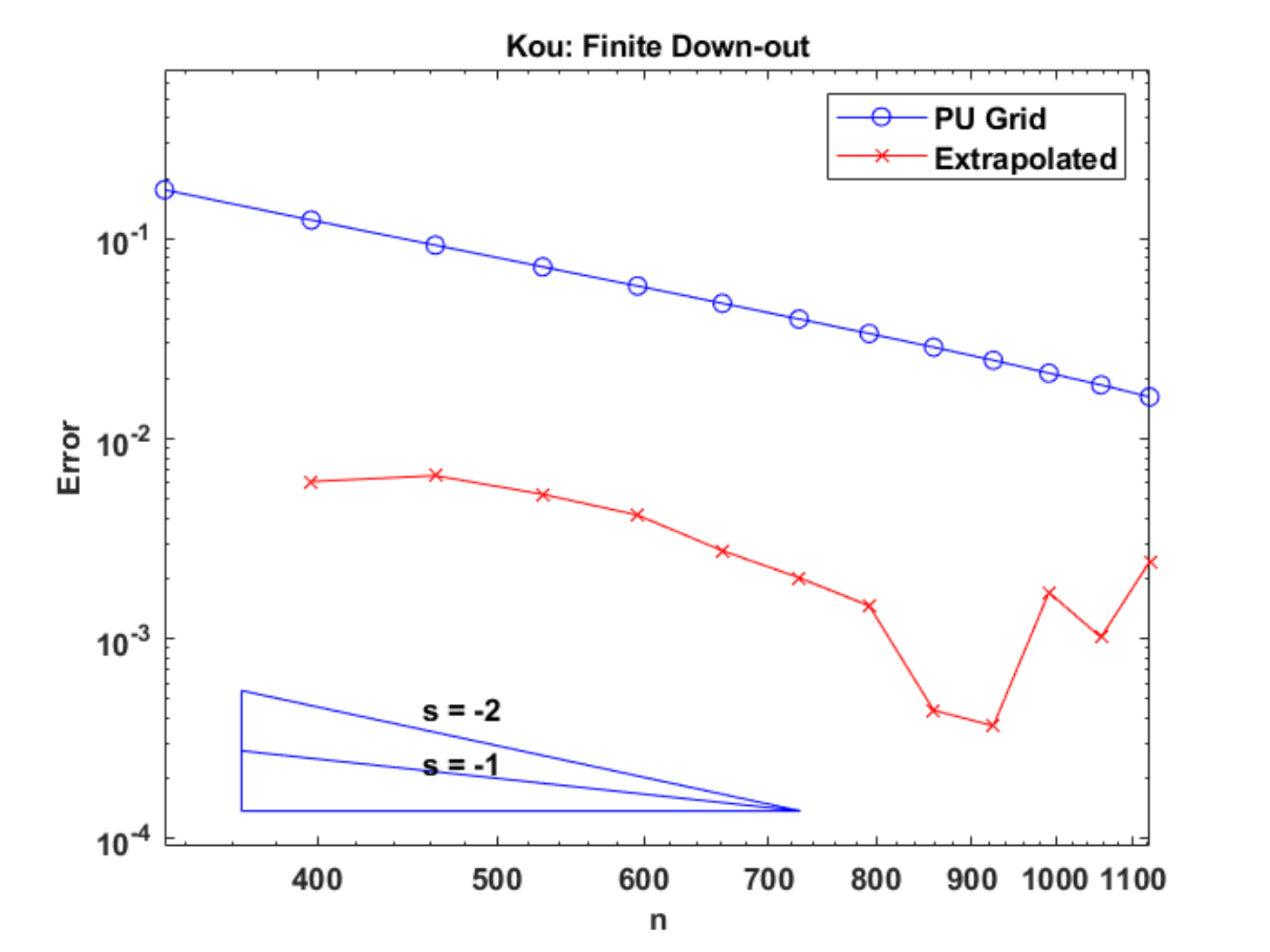}
		\end{minipage}%
	}%
	\centering
	\caption{Convergence of the CTMC approximation for perpetual/finite-maturity down-in/-out American Parisian option under the Kou's model.}
	\label{fig:error-Kou-model}
\end{figure}

\begin{table}[htbp]
\resizebox{\linewidth}{!}{
\begin{tabular}{cccccccccc}
\hline
American Parisian option type                & Grid & Benchmark & CTMC & Abs.Err. & Rel.Err. & Time/s & Extra. & Abs.Err. & Rel.Err. \\ \hline
\multirow{5}{*}{Perpetual down-in}        &   353  &52.4163 &52.5464  & 0.1301  & 0.24$\%$&7.04 & &   &     \\
			&385  &52.4163 &52.5314  & 0.1151 &0.22$\%$& 7.28 &52.4522 &0.0359 &0.07$\%$ \\
			&417  &52.4163 &52.5183  & 0.1020  & 0.19$\%$&8.93 &52.4426 &0.0263 &0.05$\%$         \\
			&449 &52.4163 &52.5070  & 0.0907 & 0.17$\%$&10.44 &52.4361 &0.0198&0.03$\%$      \\ 
			&481 &52.4163 &52.4972 & 0.0809 &0.15$\%$&11.61 &52.4308 &0.0145&0.02$\%$      \\ \hline
\multirow{5}{*}{Perpetual down-out}       &    1849  &20.7958 &20.7064  & 0.0894  & 0.43$\%$&37.05 & &   &     \\
				&1915  &20.7958 &20.7114  & 0.0844 &0.40$\%$& 40.47 &20.7802 &0.0156 &0.07$\%$ \\
				&1981 &20.7958 &20.7160  & 0.0798  & 0.38$\%$&45.73 &20.7816 &0.0142 &0.07$\%$         \\
				&2047 &20.7958 &20.7203  & 0.0755 & 0.36$\%$&49.45 &20.7837 &0.0120&0.06$\%$      \\ 
				&2113 &20.7958 &20.7244 & 0.0714 &0.34$\%$&50.98 &20.7869 &0.0088&0.04$\%$       \\ \hline
\multirow{5}{*}{Finite-maturity down-in}  &   353 &1.1137 &1.0847 & 0.029 &2.60$\%$&33.13 & & &\\
			&385  &1.1137 &1.0877  & 0.0260  &2.33$\%$&36.13 &1.1150 & 0.0013  &  0.12$\%$  \\
			&417  &1.1137 &1.0901  & 0.0236 &2.12$\%$& 41.65 &1.1139&0.0002 &0.02$\%$ \\
			&449 &1.1137 &1.0921  & 0.0216  & 1.94$\%$&44.74 &1.1136 &0.0001 &0.01$\%$         \\
			&481 &1.1137 &1.0938  & 0.0199 & 1.78$\%$&53.20 &1.1136 &0.0001&0.01$\%$        \\ \hline
\multirow{5}{*}{Finite-maturity down-out} &   1123  &3.5011 &3.8158  & 0.3147  &8.99$\%$&78.94 & &   &     \\
			&1189  &3.5011 &3.7957  & 0.2946 &8.41$\%$& 86.71 &3.5123 &0.0112 &0.32$\%$ \\
			&1255 &3.5011 &3.7777  & 0.2766  & 7.90$\%$&97.16 &3.5089 &0.0078 &0.22$\%$         \\
			&1321 &3.5011 &3.7614  & 0.2603& 7.43$\%$&111.58 &3.5044 &0.0033&0.09$\%$      \\ 
			&1387 &3.5011 &3.7467 & 0.2456 &7.01$\%$&136.32 &3.5027 &0.0016&0.04$\%$      \\ \hline
\end{tabular}}
\centering
\caption{The absolute errors and relative errors of CTMC approximation for American Parisian option pricing under the VG model.}
\label{tab:error-VG-model}
\end{table}

\begin{figure}[!h]
	\small
	\centering
	\subfigure{
		\begin{minipage}[t]{0.5\textwidth}
			\centering
			\includegraphics[width=\linewidth]{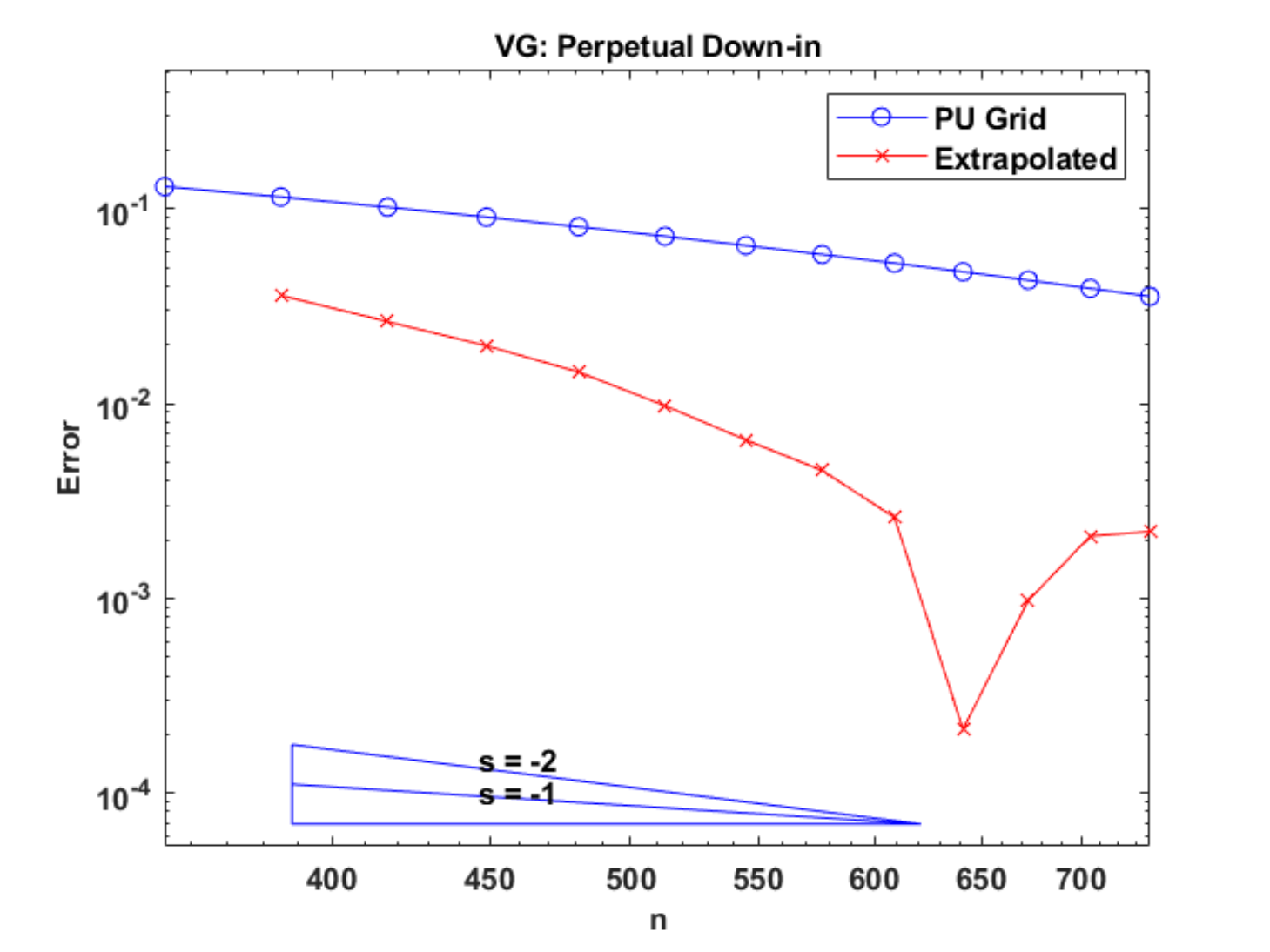}
		\end{minipage}%
	}%
	\subfigure{
		\begin{minipage}[t]{0.5\textwidth}
			\centering
			\includegraphics[width=\linewidth]{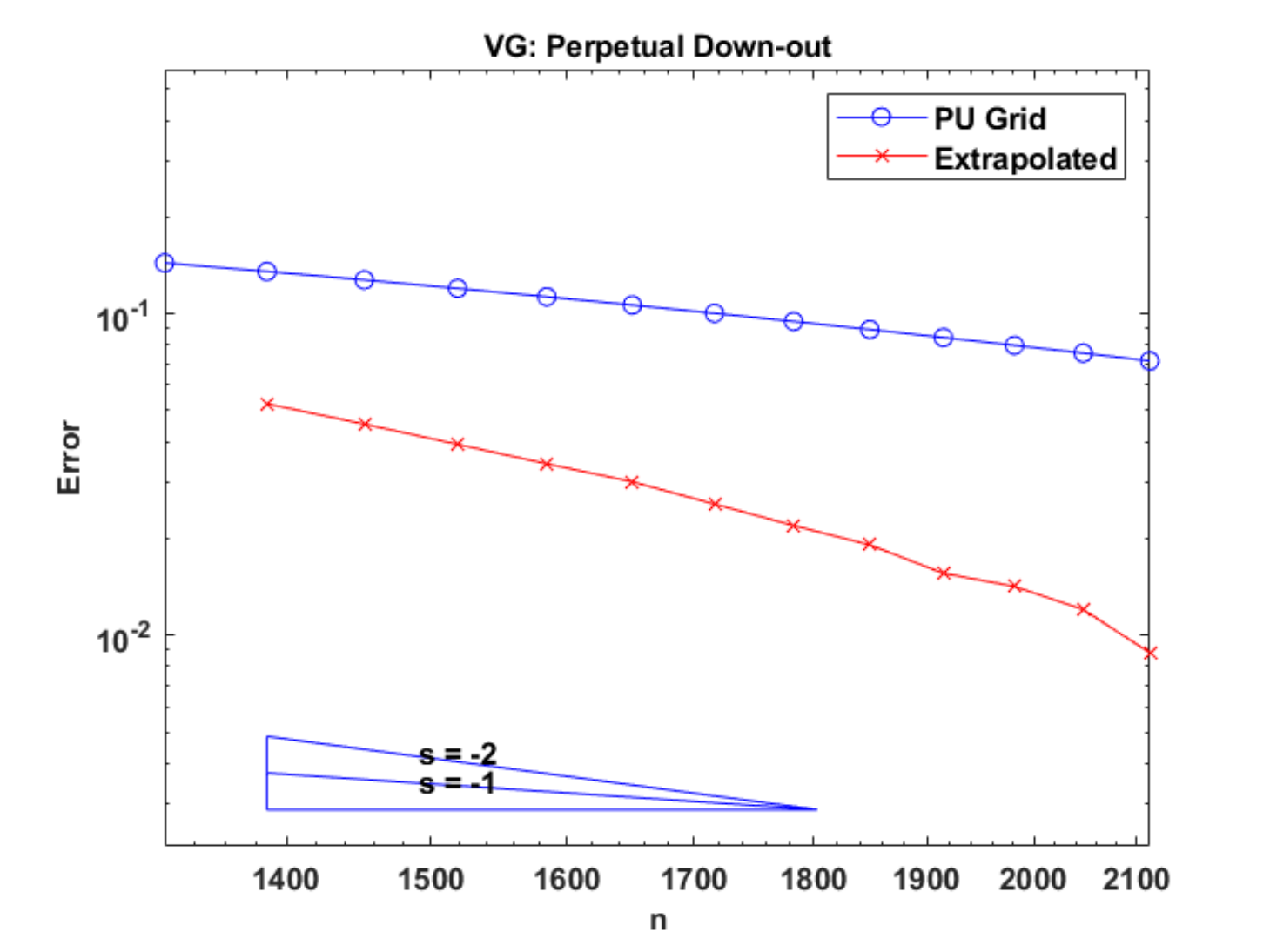}
		\end{minipage}%
	}%
	
	\subfigure{
		\begin{minipage}[t]{0.5\textwidth}
			\centering
			\includegraphics[width=\linewidth]{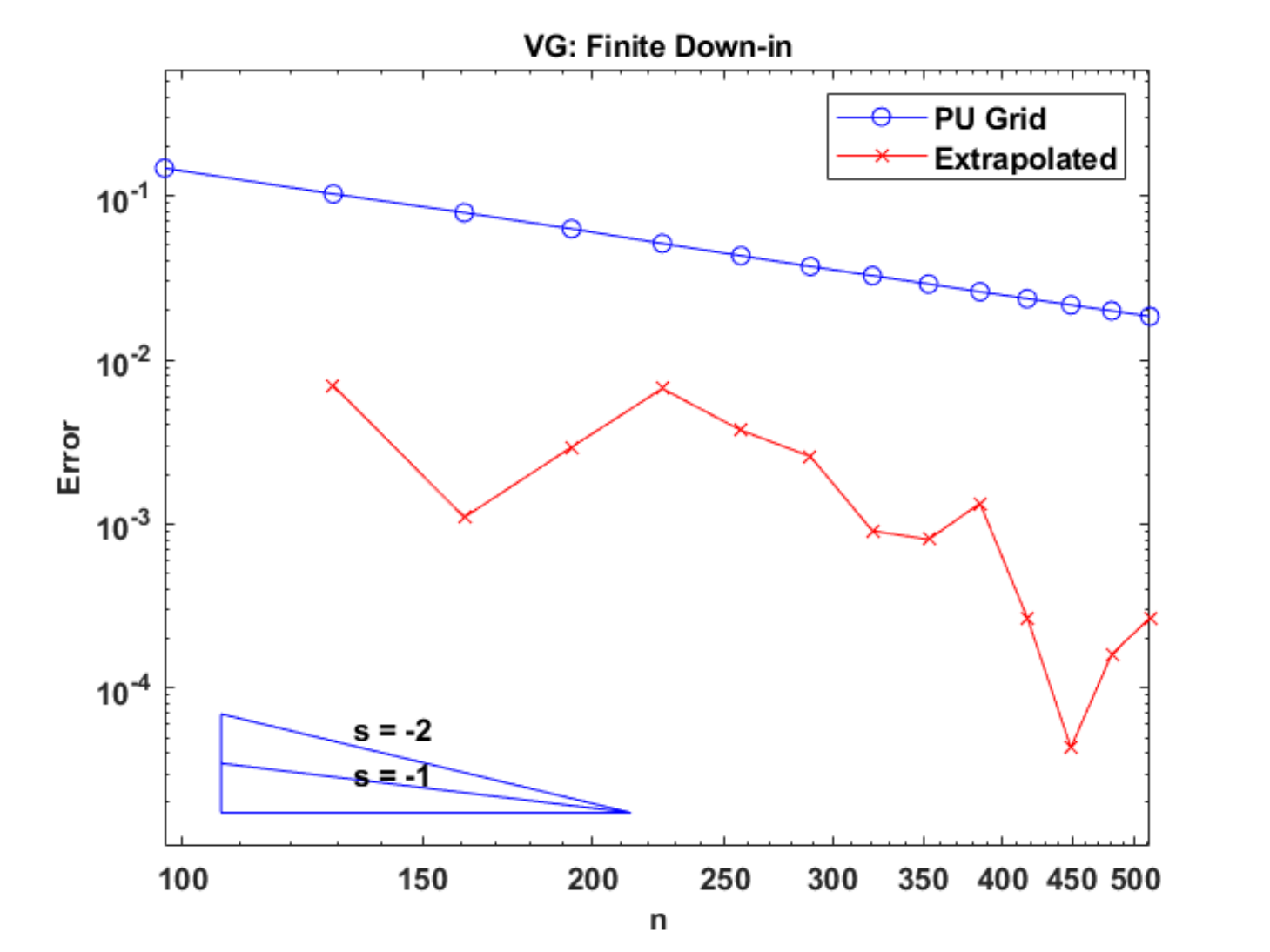}
		\end{minipage}%
	}%
	\subfigure{
		\begin{minipage}[t]{0.5\textwidth}
			\centering
			\includegraphics[width=\linewidth]{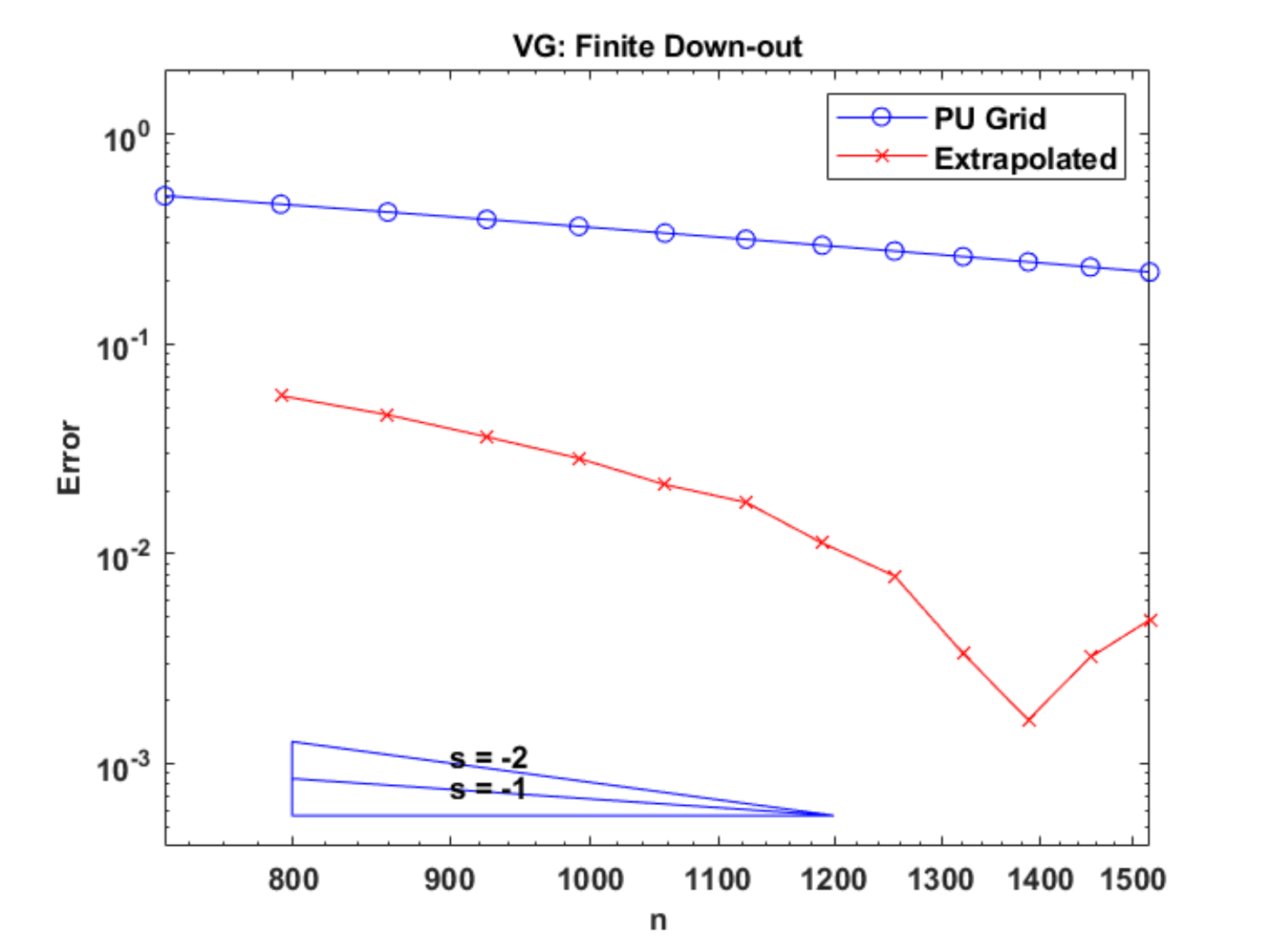}
		\end{minipage}%
	}%
	\centering
	\caption{Convergence of the CTMC approximation for perpetual/finite-maturity down-in/-out American Parisian option under the VG model.}
	\label{fig:error-VG-model}
\end{figure}

\section{Conclusions}
\label{sec:con}

In this paper, we have developed general methods for pricing perpetual/finite-maturity American Parisian down-in/-out options under time-inhomogeneous jump-diffusion models based on a temporal-spatial CTMC approximation that embeds the natural clock. For the down-in cases, we decompose the approximate option price under the CTMC model as the integration of vanilla American option prices w.r.t. the distribution function of the Parisian stopping time; the former is calculated by solving a linear complementary problem and we derive a series of linear and ODE systems to compute the latter. The down-out cases are more complicated due to the complex interaction between early-exercise and the Parisian stopping time. We tackle this challenge by incorporating an additional dimension that records the duration of current excursion and formulating a large linear complementary problem. For the finite-maturity American Parisian options in both the down-in/-out cases, the calculations can be done recursively from the maturity, while the recursion can be avoided leading to significant complexity reduction when the option is perpetual and the model is time-homogeneous. We proved the convergence of CTMC approximation under quite general assumptions on the path properties of the original model and continuity of the payoff function. Numerical experiments covering several representative diffusion, jump-diffusion and pure-jump models demonstrate the efficiency and accuracy of the proposed methods for all the four types of American Parisian options under consideration.

\section*{Competing Interests}

The authors declare no competing interests.

\section*{Acknowledgements}

Gongqiu Zhang was supported by National Natural Science Foundation of China Grant 12171408. Nian Yang was supported by National Natural Science Foundation of China Grant 72171109.
 
\appendices

\section{Proofs}
\label{app:proof}

\begin{proof}[Proof of Proposition \ref{prop:htx-linear-system}]
    We first consider the case $x<L$ and note that,
    \begin{align}\label{eq:htxsy-two-cases}
h(t,x; s,y) = \mathbb{P}_{t,x}\big[  \zeta_{\tau_{L,D}^-} =s, Y_{\tau_{L,D}^-} = y, \tau_L^+ < D \big]+\mathbb{P}_{t,x}\big[  \zeta_{\tau_{L,D}^-} =s, Y_{\tau_{L,D}^-} = y, \tau_L^+ \ge D \big].
		\end{align}
  For the first term, it represents the scenario where $Y$ crosses $L$ from below before time $D$. By the strong Markov property, the process restarts at $\tau_L^+$ from $Y_{\tau_L^+}$ and $\zeta_{\tau_L^+}$, and the clock $t-g^-_{L,t}$ is reset to zero. Therefore,
  \begin{align}
      &\mathbb{P}_{t,x}\big[  \zeta_{\tau_{L,D}^-} =s, Y_{\tau_{L,D}^-} = y, \tau_L^+ < D \big]\\
      &=1_{\{ x < L \}} \sum_{z \ge L, s' \ge t} \mathbb{P}_{t,x} \big[  \tau_L^+ < D, Y_{\tau_L^+} = z,\zeta_{\tau_L^+} = s' \big] h(s',z; s,y).
  \end{align}
For the second term in \eqref{eq:htxsy-two-cases}, $\tau_{L,D}^-=D$ when $\tau_L^+ \ge D$, and we have that,
\begin{align}
    \mathbb{P}_{t,x}\big[  \zeta_{\tau_{L,D}^-} =s, Y_{\tau_{L,D}^-} = y, \tau_L^+ \ge D \big]=1_{\{ x < L \}}  \mathbb{P}_{t,x} \big[  \tau_L^+ \ge D, Y_D = y, \zeta_D = s \big].
\end{align}
For the case when $x\ge L$, the length of excursion will not start to accumulate until $\tau_L^-$ and we restart the process at that time:
\begin{align}
    h(t,x;s,y)=1_{\{ x \ge L \}} \sum_{z < L, s' \ge t} \mathbb{P}_{t,x} \big[ \zeta_{\tau_L^-} = s', Y_{\tau_L^-} = z \big] h(s', z; s, y).
\end{align}
Putting the above results together, we obtain the linear system for $h(t,x;s,y)$. 
\end{proof}

\begin{proof}[Proof of Proposition \ref{prop:cfi-linear-system}]
     By the linear system for $h(t,x; s,y)$ in Proposition \ref{prop:htx-linear-system}, we have that:
     \begin{align}
         \widetilde{C}_{fi}(t,x)&=\sum_{(s,y) \in \mathbb{T} \times \mathbb{S}} e^{-rs} h(t,x; s,y) c_f(s,y) \\
         &=1_{\{ x < L \}} \sum_{z \ge L, s' \ge t} \mathbb{P}_{t,x} \big[  \tau_L^+ < D, Y_{\tau_L^+} = z,\zeta_{\tau_L^+} = s' \big] h(s',z; s,y)e^{-rs}c_f(s,y) \\
			&\quad + 1_{\{ x < L \}}  \mathbb{P}_{t,x} \big[  \tau_L^+ \ge D, Y_D = y, \zeta_D = s \big]e^{-rs}c_f(s,y) \\
			&\quad + 1_{\{ x \ge L \}} \sum_{z < L, s' \ge t} \mathbb{P}_{t,x} \big[ \zeta_{\tau_L^-} = s', Y_{\tau_L^-} = z \big] h(s', z; s, y)e^{-rs}c_f(s,y)\\
       &=1_{\{ x < L \}} \sum_{z \ge L, s' \ge t} \mathbb{P}_{t,x} \big[  \tau_L^+ < D, Y_{\tau_L^+} = z,\zeta_{\tau_L^+} = s' \big] \widetilde{C}_{fi}(s',z) \\
			&\quad + 1_{\{ x < L \}} \sum_{(s,y) \in \mathbb{T}\times \mathbb{S}}  \mathbb{P}_{t,x} \big[  \tau_L^+ \ge D, Y_D = y,\zeta_D = s \big] e^{-rs} c_f(s, y) \\
			&\quad + 1_{\{ x \ge L \}} \sum_{z < L, s' \ge t} \mathbb{P}_{t,x} \big[ \zeta_{\tau_L^-} = s', Y_{\tau_L^-} = z \big] \widetilde{C}_{fi}(s', z).
     \end{align}
 \end{proof}

\begin{proof}[Proof of Proposition \ref{prop:v-system}]
    For $t\le T$, $D>0$ and $x<L$, following the same arguments in the proof of Proposition 2.3 in \cite{zhang2023general}, we have 
    \begin{align}
        \partial_D v(D, t, x) &= \mathcal{G}^{\zeta, Y}v(D,t,x)\\
        &=\frac{v(D, t+\delta_t, x) - v(D, t, x)}{\delta_t} + \mathcal{G}_t^Y v(D, t, x).
    \end{align}
    We recall that $v(D, t, x) = \sum_{(s,y) \in \mathbb{T}\times \mathbb{S}}  \mathbb{P}_{t,x} \big[  \tau_L^+ \ge D, Y_D = y,\zeta_D = s \big] e^{-rs} c_f(s, y)$ and $\mathbb{P}_{t,x} \big[  \tau_L^+ \ge D, Y_D = y,\zeta_D = s \big] = 0$ for $D>0$, $x\ge L$ or $t>T$. Then $v(D,t,x)=0$ in these cases. And for $D=0$, $\mathbb{P}_{t,x} \big[  \tau_L^+ \ge D, Y_D = y,\zeta_D = s \big] = 1_{\{ x = y, s = t \}}$. Then $v(0,t,x)=e^{-rt}c_f(t,x)$.
\end{proof}

\begin{proof}[Proof of Proposition \ref{prop:h^+-system}]
For $x <L$, $t \le T$ and $D > 0$,
	\begin{align}
		\partial_D h^+(D,t,x, z) &= \mathcal{G}^{\zeta, Y} \mathbb{P}_{t,x} \big[  \tau_L^+ < D, Y_{\tau_L^+} = z,\zeta_{\tau_L^+} = t \big]\\
		&= \frac{\mathbb{P}_{t+\delta_t,x} \big[  \tau_L^+ < D, \widetilde{Y}_{\tau_L^+} = z,\zeta_{\tau_L^+} = t \big] - \mathbb{P}_{t,x} \big[ \tau_L^+ < D, Y_{\tau_L^+} = z,\zeta_{\tau_L^+} = t \big]}{\delta_t} \\
		&\quad + \mathcal{G}^{Y}_t \mathbb{P}_{t,x} \big[ \tau_L^+ < D, Y_{\tau_L^+} = z,\zeta_{\tau_L^+} = t \big] \\
		&= (\mathcal{G}_t^Y - 1/\delta_t) h^+(D,t,x, z).
	\end{align}
 The equations for other cases can be easily obtained by the definition of $h^+(D,t,x, z)$. By combining each pair of corresponding terms in the equations for $h_1^+(t, x, z)$ and $h_2^+(D, t, x, z)$, we see that $h_1^+(t, x, z)-h_2^+(D, t, x, z)$ satisfies the equations for $h^+(D,t,x, z)$. 
\end{proof}

\begin{proof}[Proof of Proposition \ref{prop:h^-system}]
    For $x \ge L$ and $t \le T$,
	\begin{align}
		&\mathcal{G}^{\zeta, Y}\mathbb{P}_{t,x} \big[ \zeta_{\tau_L^-} = t, Y_{\tau_L^-} = z \big] \\
		&= \frac{\mathbb{P}_{t+\delta_t,x} \big[\zeta_{\tau_L^-} = t, Y_{\tau_L^-} = z \big] - \mathbb{P}_{t,x} \big[ \zeta_{\tau_L^-} = t, Y_{\tau_L^-} = z \big]}{\delta_t} + \mathcal{G}_t^Y \mathbb{P}_{t,x} \big[ \zeta_{\tau_L^-} = t, Y_{\tau_L^-} = z \big]  \\
		&= -\frac{h^-(t,x,z)}{\delta_t} + \mathcal{G}_t^Y h^-(t,x,z) = 0.
	\end{align}
	For $x < L$, $\zeta_{\tau_L^-} = 0$ and hence $h^-(t, x, z) = 1_{\{ x = z \}}$ in this case.
\end{proof}

\begin{proof}[Proof of Proposition \ref{prop:u^+-system}]
    For $x < L$, $t \le T$ and $D>0$,
	\begin{align}
		&\partial_D u^+(D,t,x) \\
		&= \sum_{z \ge L, s' > t} \partial_D \mathbb{P}_{t,x} \big[ \tau_L^+ < D, Y_{\tau_L^+} = z,\zeta_{\tau_L^+} = s' \big] \widetilde{C}_{fi}(s',z) \\
		&= \sum_{z \ge L, s' > t}  \frac{ \mathbb{P}_{t+\delta_t,x} \big[ \tau_L^+ < D, Y_{\tau_L^+} = z,\zeta_{\tau_L^+} = s' \big] -  \mathbb{P}_{t,x} \big[ \tau_L^+ < D, Y_{\tau_L^+} = z,\zeta_{\tau_L^+} = s' \big]}{\delta_t}  \widetilde{C}_{fi}(s',z) \\
		&\quad + \sum_{z \ge L, s' > t} \mathcal{G}_t^{Y} \mathbb{P}_{t,x} \big[ \tau_L^+ < D, Y_{\tau_L^+} = z,\zeta_{\tau_L^+} = s' \big] \widetilde{C}_{fi}(s',z) \\
		&= \frac{1}{\delta_t} \sum_{z \ge L} h^+(D, t+\delta_t, x, z) \widetilde{C}_{fi}(t+\delta_t, z) + \frac{u^+(D,t+\delta_t,x) - u^+(D,t,x)}{\delta_t} + \mathcal{G}_t^Y u^+(D,t,x).
	\end{align}
	
	For $t >T$ or $D=0$, $1_{\{ \tau_L^+ < D \}} = 0$ and then	$u^+(D, t, x) = 0$. And for $x \ge L$ and $D >0$, $\tau_L^+ = 0$ and $\zeta_{\tau_L^+} = t$. Then $u^+(D, t, x) = 0$ in this case.
\end{proof}

\begin{proof}[Proof of Proposition \ref{prop:u^-system}]
    For $t \le T$ and $x \ge L$,
	\begin{align}
		&\sum_{z < L, s' > t} \mathcal{G}^{\zeta, Y} \mathbb{P}_{t,x} \big[ \zeta_{\tau_L^-} = s', Y_{\tau_L^-} = z \big] \widetilde{C}_{fi}(s', z) \\
		&= \sum_{z < L, s' > t} \frac{\mathbb{P}_{t+\delta_t,x} \big[ \zeta_{\tau_L^-} = s', Y_{\tau_L^-} = z \big] - \mathbb{P}_{t,x} \big[ \zeta_{\tau_L^-} = s', Y_{\tau_L^-} = z \big]}{\delta_t} \widetilde{C}_{fi}(s', z) \\
		&\quad + \sum_{z < L, s' > t} \mathcal{G}^Y_t \mathbb{P}_{t,x} \big[ \zeta_{\tau_L^-} = s', Y_{\tau_L^-} = z \big] \widetilde{C}_{fi}(s', z) \\
		&= \frac{1}{\delta_t}\sum_{z < L}\mathbb{P}_{t+\delta_t,x} \big[ \zeta_{\tau_L^-} = t+\delta_t, Y_{\tau_L^-} = z \big] \widetilde{C}_{fi}(s', z) + \frac{u^-(t+\delta_t,x)-u^-(t,x)}{\delta_t} + \mathcal{G}^Y_t u^-(t,x) \\
		&= \frac{1}{\delta_t}\sum_{z < L} h^-(t+\delta_t, x, z) \widetilde{C}_{fi}(s', z) + \frac{u^-(t+\delta_t,x)-u^-(t,x)}{\delta_t} + \mathcal{G}^Y_t u^-(t,x) \\
		&= 0.
	\end{align}
	We note that $\mathbb{P}_{t,x} \big[ \zeta_{\tau_L^-} = s', Y_{\tau_L^-} = z \big] = 1_{\{t = s', x = z\}}$ when $x < L $ and $\mathbb{P}_{t,x} \big[ \zeta_{\tau_L^-} = s', Y_{\tau_L^-} = z \big] = 0$ for all $s' > t$ when $t = T+\delta$. Therefore, $u^-(t,x) = 0$ in these cases.
\end{proof}

\begin{proof}[Proof of Lemma \ref{lem:perpetual-American-option-intermedia}] We have
    \[
    \begin{aligned}
        &\Big|c^{\hat{\delta}}(0,x)-c^{\hat{\delta}}(x)\Big|\\
        &=\sup_{\tau\in\mathcal{T}^X}\mathbb{E}_{0,x}\Big[e^{-r\tau^{\hat{\delta}}}f(X_{\tau^{\hat{\delta}}})\Big]-\sup_{\tau\in\mathcal{T}^X_{0,T}}\mathbb{E}_{0,x}\Big[e^{-r\tau^{\hat{\delta}}}f(X_{\tau^{\hat{\delta}}})\Big]\\
        &\leq  \sup_{\tau\in\mathcal{T}^X_{0,T}}\mathbb{E}_{0,x}\Big[\Big|e^{-r\tau^{\hat{\delta}}}f(X_{\tau^{\hat{\delta}}})\Big|\Big]+\sup_{\tau\in \mathcal{T}^X \setminus \mathcal{T}_{0,T}^X} \mathbb{E}_{0,x}\Big[\Big|e^{-r\tau^{\hat{\delta}}}f(X_{\tau^{\hat{\delta}}})\Big|\Big]-\sup_{\tau\in\mathcal{T}^X_{0,T}}\mathbb{E}_{0,x}\Big[\Big|e^{-r\tau^{\hat{\delta}}}f(X_{\tau^{\hat{\delta}}})\Big|\Big]\\
       &=  \sup_{\tau\in \mathcal{T}^X \setminus \mathcal{T}^X_{0,T}} \mathbb{E}_{0,x}\Big[\Big|e^{-r\tau^{\hat{\delta}}}f(X_{\tau^{\hat{\delta}}})\Big|\Big]
    \end{aligned}
    \]
    By Assumption \ref{assump:payoff-growth}, there exists a $T_\epsilon$, such that for any $\tau^{\hat{\delta}}>T>T_\epsilon$, $|e^{-r\tau^{\hat{\delta}}}f(X_{\tau^{\hat{\delta}}})|<\epsilon$, and
    \[
    \begin{aligned}
        &\Big|c^{\hat{\delta}}(0,x)-c^{\hat{\delta}}(x)\Big|
        \leq  \sup_{\tau\in \mathcal{T}^X \setminus \mathcal{T}^X_{0,T}} \mathbb{E}_{0,x}\Big[\Big|e^{-r\tau^{\hat{\delta}}}f(X_{\tau^{\hat{\delta}}})\Big|\Big]
        <  \epsilon.
    \end{aligned}
    \]
    Similarly, we can show that $\Big|\widetilde{c}^{\hat{\delta}}_f(0,x)-c_p^{\hat{\delta}}(x)\Big|< \epsilon$. So the lemma is proved.
\end{proof}

\begin{proof}[Proof of Proposition \ref{prop:perpetual-down-in-Amer-option-convergence}]
    We first prove the following claims:

    \begin{itemize}
    \item[(1)] $c^{\hat{\delta}}_p(x)\rightarrow c^{\hat{\delta}}(x)$ as $n\rightarrow \infty$ for any $\hat{\delta}$. 

    \item[(2)] There exists some constant $\widetilde{C}$ independent of $n$ and $\hat{\delta}$ such that 
    \[
    |c^{\hat{\delta}}(x)-c(x)|\leq \widetilde{C}{\hat{\delta}}^{\frac{1}{2}}.
    \]

    \item[(3)] There exists some constant $\widetilde{D}$ independent of $n$ and $\hat{\delta}$ such that
    \[
    |c_p^{\hat{\delta}}(x)-c_p(x)|\leq \widetilde{D}{\hat{\delta}}^{\frac{1}{2}}.
    \]
    \end{itemize}
 
    We will only show the proof for claim (1). The proofs of claim (2) and (3) are analogous to the proof of Theorem 3 in \cite{zhang2021pricing}. 

    Proof of (1): Let $\mathrm{H}$ be the filtration generated by the process $(X,Y)$. $\widetilde{\mathcal{T}}_{s,T}$ is the set of $\mathrm{H}$ stopping times taking values in $[s,T] \cup \{\infty\}$. $\tau^{\hat{\delta}}$ is the stopping time taking the form $\tau^{\hat{\delta}}=\inf\{ t\ge \tau: t\in \mathbb{T}^{\hat{\delta}}\}$ with $\mathbb{T}^{\hat{\delta}}=\{i\Hat{\delta}: i\in \mathbb{N}\}$  for some stopping time $\tau \in \widetilde{\mathcal{T}}_{s,T}$. We first show that $\Big|\widetilde{c}^{\hat{\delta}}_f(s,x)-c^{\hat{\delta}}(s,x)\Big|\rightarrow 0$ as $n\rightarrow \infty$.  
    \begin{align}
        \Big|\widetilde{c}^{\hat{\delta}}_f(s,x)-c^{\hat{\delta}}(s,x)\Big|
        &\leq  \bigg| \sup_{\tau\in \widetilde{\mathcal{T}}_{s,T}}\mathbb{E}_{s,x}\Big[ e^{-r(\tau^{\hat{\delta}}-s)}f(Y_{\tau^{\hat{\delta}}}) - e^{-r(\tau^{\hat{\delta}}-s)} f(X_{\tau^{\hat{\delta}}}) \Big]  \bigg|\\
        &\leq  \mathbb{E}_{s,x}\bigg[\sup_{t\in \mathbb{T}^{\hat{\delta}}\cap [s,T+\hat{\delta}]}  \Big|e^{-r(t - s)} f(Y_{t})-e^{-r(t-s)}f(X_{t})\Big| \bigg]\\
        &\leq  K\mathbb{E}_{s,x}\bigg[\sup_{t\in \mathbb{T}^{\hat{\delta}}\cap [s,T+\hat{\delta}]} \Big|X_{t}-Y_{t}\Big|\bigg]
    \end{align}
    By the assumption that $Y$ converges weakly to $X$ as $n\rightarrow \infty$, $Y_t$ converges to $X_t$ in distribution for any fixed $t\in \mathbb{T}^{\hat{\delta}}\cap [s,T+\hat{\delta}]$. By \cite{eriksson2015american}. Further with the uniform integrability of the collection $(Y_t, X_t, t\in \mathbb{T}^{\hat{\delta}}\cap [s,T+\hat{\delta}], n\in \mathbb{N})$, we have
    \[
    \mathbb{E}_{s,x}\Big[\big|Y_t-X_t\big|\Big]\rightarrow 0,
    \]
    as $n\rightarrow \infty$ for any $t\in \mathbb{T}^{\hat{\delta}}\cap [s,T+\hat{\delta}]$. Noting that $\mathbb{T}^{\hat{\delta}}\cap [s,T+\hat{\delta}]$ contains finite number of elements, we have
    \[
    K\mathbb{E}_{s,x}\Big[\sup_{t\in \mathbb{T}^{\hat{\delta}}\cap [s,T+\hat{\delta}]}\big|X_{t}-Y_{t}\big|\Big]\rightarrow 0
    \]
    as $n\rightarrow \infty$ which implies that $\Big|\widetilde{c}^{\hat{\delta}}_f(s,x)-c^{\hat{\delta}}(s,x)\Big|\rightarrow 0$.
    Then,
    \[
    \begin{aligned}
        &\Big|c^{\hat{\delta}}_p(x)-c^{\hat{\delta}}(x)\Big|\\
        &=\Big|c^{\hat{\delta}}_p(x)-c^{\hat{\delta}}(x)+\widetilde{c}^{\hat{\delta}}_f(0,x)-\widetilde{c}^{\hat{\delta}}_f(0,x)+c^{\hat{\delta}}(0,x)-c^{\hat{\delta}}(0,x)\Big|\\
        &\leq\Big|c^{\hat{\delta}}(0,x)-c^{\hat{\delta}}(x)\Big|+\Big|\widetilde{c}_f^{\hat{\delta}}(0,x)-c_p^{\hat{\delta}}(x)\Big|+\Big|\widetilde{c}_f^{\hat{\delta}}(0,x)-c^{\hat{\delta}}(0,x)\Big|.
    \end{aligned} 
    \]
    By Lemma \ref{lem:perpetual-American-option-intermedia}, for any $\epsilon>0$, there exists a $T_\epsilon$ independent of $n$, such that for any $T>T_\epsilon$,
    \[
    \text{max}\left \{ \Big|c^{\hat{\delta}}(0,x)-c^{\hat{\delta}}(x)\Big|, \Big|\widetilde{c}^{\hat{\delta}}_f(0,x)-c^{\hat{\delta}}_p(x)\Big| \right \}<\epsilon. 
    \]
    And for any fixed $T>T_\epsilon$, there exists an $N_\epsilon$ such that for $n>N_\epsilon$, 
    \[
    \Big|\widetilde{c}^{\hat{\delta}}_f(0,x)-c^{\hat{\delta}}(0,x)\Big|<\epsilon.
    \]
    Combining the results, we have $|c^{\hat{\delta}}_p(x)-c^{\hat{\delta}}(x)|<3\epsilon$. Since $\epsilon$ can be arbitrarily small, (1) is proved. 

    Combining the claims (1), (2) and (3), we get that,
    \[
    \begin{aligned}
    &\Big|c_p(x)-c(x)\Big| \\
    &\leq \Big|c_p(x)-c^{\hat{\delta}}_p(x) \Big|+\Big|c^{\hat{\delta}}_p(x) -c^{\hat{\delta}}(x)  \Big|+\Big|c^{\hat{\delta}}(x)-c(x)   \Big| \\
    &\le (\widetilde{C} + \widetilde{D}){\hat{\delta}}^{\frac{1}{2}} + \Big|c^{\hat{\delta}}_p(x) -c^{\hat{\delta}}(x)  \Big|.
    \end{aligned}
    \]
    We note that $\hat{\delta}$ can be arbitrarily small and for each $\hat{\delta}$, we can select a sufficiently large $N$ such that for all $n > N$, $\big|c^{\hat{\delta}}_p(x) -c^{\hat{\delta}}(x)\big| < \epsilon$ for any $\epsilon>0$. This concludes the proof.
\end{proof}

\begin{proof}[Proof of Theorem \ref{thm:perpetual-down-in-Amer-Pari-option-convergence}]
    Firstly, we note that:
    \[
    \begin{aligned}
     \sup_{\tau \in \mathcal{T}^Y}\mathbb{E}_x \big[ e^{-r\tau}  f( Y_\tau) 1_{\{ \tau \ge \tau_{L, D}^{Y,-} \}} \big]=\mathbb{E}_x\big[\sup_{\tau\in\mathcal{T}^Y}\mathbb{E}_x\big[e^{-r\tau}f(Y_{\tau})1_{\{ \tau\ge \tau_{L, D}^{Y,-} \}}|\tau_{L, D}^{Y,-}\big]  \big].
    \end{aligned}
    \]
For the quantity inside the expectation, $\sup_{\tau\in\mathcal{T}^Y}\mathbb{E}_x\big[e^{-r\tau}f(Y_{\tau})1_{\{ \tau\ge \tau_{L, D}^{Y,-} \}}|\tau_{L, D}^{Y,-}\big]$ is the approximate value function of a perpetual American option with initial time $\tau_{L, D}^{Y,-}$. Therefore, by Proposition \ref{prop:perpetual-down-in-Amer-option-convergence} and Proposition \ref{prop:parisian-stopping-time-weak-convergence}, as $n \to \infty$,
    \[
    \begin{aligned}
\mathbb{E}_x\big[\sup_{\tau\in\mathcal{T}^Y}\mathbb{E}_x\big[e^{-r\tau}f(Y_{\tau})1_{\{ \tau\ge \tau_{L, D}^{Y,-} \}}|\tau_{L, D}^{Y,-}\big]  \big]
        &\rightarrow \mathbb{E}_x\big[\sup_{\tau\in\mathcal{T}^X}\mathbb{E}_x\big[e^{-r\tau}f(X_{\tau})1_{\{ \tau\ge \tau_{L, D}^{X,-} \}}|\tau_{L, D}^{X,-}\big]\big] \\
        &=\sup_{\tau \in \mathcal{T}^X}\mathbb{E}_x\big[e^{-r\tau}f(X_{\tau})1_{\{ \tau\ge \tau_{L, D}^{X,-} \}}\big].
    \end{aligned}
    \]
\end{proof}

\begin{proof}[Proof of Proposition \ref{prop:finite-American-option-converge}]
To prove Proposition \ref{prop:finite-American-option-converge}, we first establish the following results: 
\begin{itemize}
    \item[(1)] There exists some constant $\widetilde{C}$ independent of $n$, $\delta_t$ and $\hat{\delta}$ such that
\[
\Big|c(s,x)-c^{\hat{\delta}} (s,x)\Big|\le \widetilde{C}\Hat{\delta}^{\frac{1}{2}}.
\]

    \item[(2)] There exists some constant $\widetilde{D}$ independent of $n$, $\delta_t$ and $\hat{\delta}$ such that
\[
\Big|c_f(s,x)-c_f^{\hat{\delta}} (s,x)\Big|\le \widetilde{D}\Hat{\delta}^{\frac{1}{2}}.
\]

    \item[(3)] For any fixed $\hat{\delta}$, as $\delta_t\rightarrow 0$ and $n\rightarrow \infty$,
\[
c_{f}^{\hat{\delta}} (s,x)\rightarrow c^{\hat{\delta}} (s,x).
\]
\end{itemize}

We will show the details for the proofs of (2) and (3). The proof of (1) is similar to that of (2). 

Proof of (2): We note that $\zeta$ is  $\delta_t$ times a Poisson process with intensity $1/\delta_t$ and $\zeta_t\rightarrow t$ almost surely as $\delta_t\rightarrow 0$. By definition, $0 \le \tau^{\hat{\delta}} - \tau \le \hat{\delta}$ and then it is not difficult to see that for any $\tau\in \mathcal{T}_s$, $\mathbb{E}_{s, x}\big[\zeta_{\tau^{\hat{\delta}}}-\zeta_{\tau} \big]\le \hat{\delta}$. Next, we analyze $c_f(s,x)-c_f^{\hat{\delta}} (s,x)$. By the triangle inequality, we have that
    \[
    \begin{aligned}
     & \left|c_f(s,x)-c_f^{\hat{\delta}} (s,x)\right| \\
&\leq  \sup _{\tau \in \mathcal{T}^{\zeta,Y}_{s}} \mathbb{E}_{s, x}\bigg[\Big|e^{-r (\zeta_{\tau}-s)} f\big(Y_{\tau}\big)-e^{-r (\zeta_{\tau^{\hat{\delta}}}-s)} f\left(Y_{\tau^{\hat{\delta}}}\right)\Big|\bigg] \\
&\leq  \sup _{\tau \in \mathcal{T}^{\zeta,Y}_{s}} \mathbb{E}_{s, x}\bigg[\Big|\left(e^{-r (\zeta_{\tau}-s)}-e^{-r (\zeta_{\tau^{\hat{\delta}}}-s)}\right) f\left(Y_{\tau}\right)\Big|+\Big|e^{-r (\zeta_{\tau^{\hat{\delta}}}-s)}\left(f\left(Y_{\tau}\right)-f\left(Y_{\tau^{\hat{\delta}}}\right)\right)\Big|\bigg] \\
&\leq  r \sup _{\tau \in \mathcal{T}^{\zeta, Y}_{s}} \mathbb{E}_{s, x}\big[\zeta_{\tau^{\hat{\delta}}}-\zeta_{\tau}  \big]\sup _{\tau \in \mathcal{T}^{\zeta,Y}_{s}} \mathbb{E}_{s, x}\Big[e^{-r (\zeta_{\tau}-s)}\big|f\left(Y_{\tau}\right)\big|\Big]\\
&\;\;\;\;+K\sup _{\tau \in \mathcal{T}^{Y}_{s}} \mathbb{E}_{s, x}\Big[\big|Y_{\tau}-Y_{\tau^{\hat{\delta}}}\big|\Big] \quad \text { (by Lipschitz continuity) } \\
&\leq  \Hat{\delta} r C+K\sup _{\tau \in \mathcal{T}^Y_{s}}\bigg(\mathbb{E}_{s, x}\Big[\big|Y_{\tau}-Y_{\tau^{\hat{\delta}}}\big|^2\Big]\bigg)^{\frac{1}{2}} \quad \text {(by Assumption \ref{assump:payoff-growth} and the Cauchy-Schwartz inequality) }\\
&=  \Hat{\delta} r C+K\sup _{\tau \in \mathcal{T}^Y_{s}}\bigg(\mathbb{E}_{s, x}\Big[\mathbb{E}_{\tau, Y_{\tau}}\Big[\big|Y_{\tau}-Y_{\tau^{\hat{\delta}}}\big|^2\Big]\Big]\bigg)^{\frac{1}{2}} \\
&\leq  \Hat{\delta} r C+K\sup _{\tau \in \mathcal{T}^Y_{s}}\bigg(\mathbb{E}_{s, x}\Big[\mathbb{E}_{\tau, Y_{\tau}}\Big[\langle Y\rangle_{\Hat{\delta}}\Big]\Big]\bigg)^{\frac{1}{2}} \\
&\leq  \Hat{\delta} r C+K\Hat{\delta}^{\frac{1}{2}} K_{Y}^{\frac{1}{2}}, \quad \text {(by Assumption \ref{assump:moment-quadratic-variation}} )
    \end{aligned}
    \]
    where $C$ is a constant independent of $n$ and $\delta_t$, $K$ is the Lipschitz constant of $f(x)$, and $K_Y$ is the Lipschitz constant of $\mathbb{E}_{t, x}\big[\langle Y\rangle_{\Hat{\delta}}\big]$ w.r.t. $\hat{\delta}$.

    Proof of (3): Let $\mathrm {H}$ and $\mathrm{F}^{\zeta,X}$ be the filtration generated by the processes $(X, Y)$ and $(\zeta, X)$, respectively. We introduce the following additional intermediate quantities:
    \[
    \hat{c}_f^{\hat{\delta}}(s,x)=\sup_{\tau\in \mathcal{T}^{\zeta, Y} _{s}}\mathbb{E}_{s,x}\Big[e^{-r(\tau^{\hat{\delta}} - s)} f(Y_{\tau^{\hat{\delta}}})1_{\{ \zeta_{\tau}\le T \}} \Big],
    \]
    \[
    \hat{c}^{\hat{\delta}}(s,x)=\sup_{\tau\in \mathcal{T}^{\zeta, X} _{s}}\mathbb{E}_{s,x}\Big[e^{-r(\tau^{\hat{\delta}} - s)} f(X_{\tau^{\hat{\delta}}})1_{\{ \zeta_{\tau}\le T \}} \Big],
    \]
    \[
    \Bar{c}_f^{\hat{\delta}}(s,x)=\sup_{\tau\in \mathcal{T}^{\zeta,Y} _{s}}\mathbb{E}_{s,x}\Big[e^{-r(\zeta_{\tau}^{\hat{\delta}} - s)} f(Y_{\tau^{\hat{\delta}}})1_{\{ \zeta_{\tau}\le T \}}1_{\{\tau\le \overline{T} \}} \Big],
    \]
    \[
    \tilde{c}_f^{\hat{\delta}}(s,x)=\sup_{\tau\in \mathcal{T}^{\zeta, Y} _{s}}\mathbb{E}_{s,x}\Big[e^{-r(\tau^{\hat{\delta}} - s)} f(Y_{\tau^{\hat{\delta}}})1_{\{ \zeta_{\tau}\le T \}}1_{\{\tau\le \overline{T} \}} \Big],
    \]
    \[
    \tilde{c}^{\hat{\delta}}(s,x)=\sup_{\tau\in \mathcal{T}^{\zeta, X} _{s}}\mathbb{E}_{s,x}\Big[e^{-r(\tau^{\hat{\delta}} - s)} f(X_{\tau^{\hat{\delta}}})1_{\{ \zeta_{\tau}\le T \}}1_{\{\tau\le \overline{T} \}} \Big],
    \]
    where $\mathcal{T}^{\zeta, X} _{s}$ and $\mathcal{T}^{\zeta, Y} _{s}$ are the sets of $\mathrm{F}^{\zeta, X}$ and $\mathrm{F}^{\zeta, Y}$ stopping times taking values in $[s,\infty]$.
     
    Since 
    \begin{align}
        &\Big|c_f^{\hat{\delta}} (s,x)-\hat{c}_f^{\hat{\delta}} (s,x)\Big|\\
         &=  \Big|c_f^{\hat{\delta}} (s,x)- \Bar{c}_f^{\hat{\delta}}(s,x)+\Bar{c}_f^{\hat{\delta}}(s,x)-\tilde{c}_f^{\hat{\delta}}(s,x)+\tilde{c}_f^{\hat{\delta}}(s,x)-\hat{c}_f^{\hat{\delta}} (s,x)  \Big|\\
        & \leq  \Big|c_f^{\hat{\delta}} (s,x)- \Bar{c}_f^{\hat{\delta}}(s,x)\Big| + \Big| \Bar{c}_f^{\hat{\delta}}(s,x)-\tilde{c}_f^{\hat{\delta}}(s,x)\Big| +\Big| \tilde{c}_f^{\hat{\delta}}(s,x)-\hat{c}_f^{\hat{\delta}} (s,x)\Big| \\
        &\leq \bigg| \sup_{\tau\in \mathcal{T}^{\zeta,Y} _{s}}\mathbb{E}_{s,x}\Big[e^{-r\zeta_{\tau^{\hat{\delta}}}} f(Y_{\tau^{\hat{\delta}}})-e^{-r\zeta_{\tau^{\hat{\delta}}}} f(Y_{\tau^{\hat{\delta}}})1_{\{ \tau\le \overline{T} \}}\Big]\bigg|\\
        &\;\;\;\;+\bigg| \sup_{\tau\in \mathcal{T}^{\zeta, Y} _{s}}\mathbb{E}_{s,x}\Big[e^{-r\zeta_{\tau^{\hat{\delta}}}} f(Y_{\tau^{\hat{\delta}}})1_{\{ \tau\le \overline{T} \}}-e^{-r\tau^{\hat{\delta}}} f(Y_{\tau^{\hat{\delta}}})1_{\{ \tau\le \overline{T} \}}\Big]\bigg|\\
        & \;\;\;\; + \bigg| \sup_{\tau\in \mathcal{T}^Y _{s}}\mathbb{E}_{s,x}\Big[e^{-r\tau^{\hat{\delta}}} f(Y_{\tau^{\hat{\delta}}})1_{\{ \tau\le \overline{T} \}}-e^{-r\tau^{\hat{\delta}}} f(Y_{\tau^{\hat{\delta}}}) \Big]\bigg|\\
        &\leq   \sup_{\tau\in \mathcal{T}^{\zeta,Y} _{s}\setminus \mathcal{T}^{\zeta,Y} _{s,\overline{T}}}\mathbb{E}_{s,x}\bigg[\Big|e^{-r\zeta_{\tau^{\hat{\delta}}}} f(Y_{\tau^{\hat{\delta}}})\Big|\bigg]+\mathbb{E}_{s,x}\bigg[\sup_{t\in \mathbb{T}^{\hat{\delta}}\cap [s,\overline{T}+\hat{\delta}]}  \Big|e^{-r\zeta_t} f(Y_{t})-e^{-rt}f(Y_{t})\Big| \bigg]\\
        & \;\;\;\;\; +\sup_{\tau\in \mathcal{T}^Y _{s}\setminus \mathcal{T}^Y _{s,\overline{T}}}\mathbb{E}_{s,x}\bigg[\Big|e^{-r\tau^{\hat{\delta}}} f(Y_{\tau^{\hat{\delta}}})\Big|\bigg], \label{eq:convergence-of-discount-factor}
    \end{align}
    we will show that for any $\epsilon>0$, there exist a $\overline{T}_{\epsilon}$ independent of $n$ and $\delta_t$, and there exists a $\delta_t^{\epsilon}$ independent of $n$, such that for any $\overline{T}>\overline{T}_{\epsilon}$ and $\delta_t<\delta_t^{\epsilon}$,  $|c_f^{\hat{\delta}} (s,x)- \hat{c}_f^{\hat{\delta}}(s,x)|<\epsilon$.
    For the first and last terms of \eqref{eq:convergence-of-discount-factor}, we can use the same arguments in the proof of Lemma \ref{lem:perpetual-American-option-intermedia} and show that for any $\epsilon>0$, there exists a $\overline{T}_{\epsilon}$ independent of $n$ and $\delta_t$ such that for any $\overline{T}>\overline{T}_{\epsilon}$, 
    \[
    \max\Big\{\sup_{\tau\in \mathcal{T}^{\zeta,Y} _{s}\setminus \mathcal{T}^{\zeta,Y} _{s, \overline{T}}}\mathbb{E}_{s,x}\bigg[\Big|e^{-r\zeta_{\tau^{\hat{\delta}}}} f(Y_{\tau^{\hat{\delta}}})\Big|\bigg], \sup_{\tau\in \mathcal{T}^Y _{s}\setminus \mathcal{T}^Y _{s,\overline{T}}}\mathbb{E}_{s,x}\bigg[\Big|e^{-r\tau^{\hat{\delta}}} f(Y_{\tau^{\hat{\delta}}})\Big|\bigg] \Big\}<\frac{\epsilon}{3}.
    \]
    And for the second term in \eqref{eq:convergence-of-discount-factor}, we fix $\overline{T}=\overline{T}_{\epsilon}$. We further note that inside the expectation the supremum is taken over a finite set $\mathbb{T}^{\hat{\delta}}\cap[s,\overline{T}+\hat{\delta}]$ which implies that there exists a $\delta_t^{\epsilon}$ independent of $n$ such that for any $\delta_t<\delta_t^{\epsilon}$, 
    \[
    \mathbb{E}_{s,x}\bigg[\sup_{t\in \mathbb{T}^{\hat{\delta}}\cap [s,\overline{T}+\hat{\delta}]}  \Big|e^{-r\zeta_t} f(Y_{t})-e^{-rt}f(Y_{t})\Big| \bigg]<\frac{\epsilon}{3}.
    \]
Combining the estimates above, we get that $|c_f^{\hat{\delta}} (s,x)- \hat{c}_f^{\hat{\delta}}(s,x)|<\epsilon$.

Moreover,
\begin{align}
   & \Big|\hat{c}_f^{\hat{\delta}}(s,x)-\hat{c}^{\hat{\delta}}(s,x)\Big|\\
   &\leq  \Big|\hat{c}_f^{\hat{\delta}}(s,x) -\tilde{c}_f^{\hat{\delta}}(s,x)\Big|+\Big|\tilde{c}_f^{\hat{\delta}}(s,x)-\tilde{c}^{\hat{\delta}}(s,x)\Big| +\Big|\tilde{c}^{\hat{\delta}}(s,x)-\hat{c}^{\hat{\delta}}(s,x)\Big| \\
   &\leq \bigg| \sup_{\tau\in \mathcal{T}^Y _{s}}\mathbb{E}_{s,x}\Big[e^{-r\tau^{\hat{\delta}}} f(Y_{\tau^{\hat{\delta}}})-e^{-r\tau^{\hat{\delta}}} f(Y_{\tau^{\hat{\delta}}})1_{\{ \tau\le \overline{T} \}}\Big]\bigg|+\bigg| \sup_{\tau\in \widetilde{\mathcal{T}} _{s}}\mathbb{E}_{s,x}\Big[e^{-r\tau^{\hat{\delta}}} f(Y_{\tau^{\hat{\delta}}})1_{\{ \tau\le \overline{T} \}}-e^{-r\tau^{\hat{\delta}}} f(X_{\tau^{\hat{\delta}}})1_{\{ \tau\le \overline{T} \}}\Big]\bigg|\\
        &\; \;\;\;+ \bigg| \sup_{\tau\in \mathcal{T}^X _{s}}\mathbb{E}_{s,x}\Big[e^{-r\tau^{\hat{\delta}}} f(X_{\tau^{\hat{\delta}}})1_{\{ \tau\le \overline{T} \}}-e^{-r\tau^{\hat{\delta}}} f(X_{\tau^{\hat{\delta}}}) \Big]\bigg|,
\end{align}
where $\widetilde{\mathcal{T}}_{s}$ is the set of $\mathrm {H}$ stopping times taking values in $[s,\infty]$.
Then we can use the same arguments as before and show that $\hat{c}_f^{\hat{\delta}}(s,x)\rightarrow \hat{c}^{\hat{\delta}}(s,x)$ as $n\rightarrow \infty$ for any fixed $\overline{T}$.

Now we show that $\hat{c}^{\hat{\delta}}(s,x)\rightarrow c^{\hat{\delta}}(s,x)$ as $\delta_t\rightarrow 0$. Let $T^{\delta_t}=\inf \{t\ge 0: \zeta_t=T \}$, which follows a Gamma distribution with mean $T$ and $T^{\delta_t}\rightarrow T$ as $\delta_t\rightarrow 0$. We note that,
\begin{align}
    \sup_{\tau\in \mathcal{T}^{\zeta,X} _{s}}\mathbb{E}_{s,x}\Big[e^{-r(\tau^{\hat{\delta}} - s)} f(X_{\tau^{\hat{\delta}}})1_{\{ \zeta_{\tau}\le T \}} \Big]&= \sup_{\tau\in \mathcal{T}^{\zeta,X} _{s}}\mathbb{E}_{s,x}\Big[e^{-r(\tau^{\hat{\delta}} - s)} f(X_{\tau^{\hat{\delta}}})1_{\{ \tau\le T^{\delta_t} \}} \Big]\\
    &=  \mathbb{E}_{s,x} \bigg[\sup_{\tau\in \mathcal{T}^{X} _{s}}\mathbb{E}_{s,x}\Big[e^{-r(\tau^{\hat{\delta}} - s)} f(X_{\tau^{\hat{\delta}}})1_{\{ \tau\le T^{\delta_t} \}}\Big| T^{\delta_t} \Big]   \bigg].
\end{align}
For the quantity above, $\sup_{\tau\in \mathcal{T}^X _{s}}\mathbb{E}_{s,x}\Big[e^{-r(\tau^{\hat{\delta}} - s)} f(X_{\tau^{\hat{\delta}}})1_{\{ \tau\le T^{\delta_t} \}}\Big| T^{\delta_t} \Big]$ is the value function of a finite-maturity Bermudan option with maturity $T^{\delta_t}$. As $\delta_t\rightarrow 0$, 
\begin{align}
    \mathbb{E}_{s,x} \bigg[\sup_{\tau\in \mathcal{T}^X _{s}}\mathbb{E}_{s,x}\Big[e^{-r(\tau^{\hat{\delta}} - s)} f(X_{\tau^{\hat{\delta}}})1_{\{ \tau\le T^{\delta_t} \}}\Big| T^{\delta_t} \Big]   \bigg] &\rightarrow  \mathbb{E}_{s,x} \bigg[\sup_{\tau\in \mathcal{T}^X _{s}}\mathbb{E}_{s,x}\Big[e^{-r(\tau^{\hat{\delta}} - s)} f(X_{\tau^{\hat{\delta}}})1_{\{ \tau\le T \}}\Big| T \Big]   \bigg]\\
    &=  \sup_{\tau\in \mathcal{T}^X _{s}}\mathbb{E}_{s,x}\Big[e^{-r(\tau^{\hat{\delta}} - s)} f(X_{\tau^{\hat{\delta}}})1_{\{ \tau\le T \}}\Big].
\end{align}
Now we are ready to establish the claim (3) by collecting the intermediate estimates above:
    \begin{align}
    \Big|c_f^{\hat{\delta}} (s,x)-c^{\hat{\delta}} (s,x)\Big|
    &= \Big|c_f^{\hat{\delta}} (s,x)-\hat{c}_f^{\hat{\delta}} (s,x)+\hat{c}_f^{\hat{\delta}} (s,x)-\hat{c}^{\hat{\delta}} (s,x)+\hat{c}^{\hat{\delta}} (s,x)-c^{\hat{\delta}} (s,x) \Big| \\
    &\leq  \Big|c_f^{\hat{\delta}} (s,x)-\hat{c}_f^{\hat{\delta}} (s,x)\Big|+ \Big|\hat{c}_f^{\hat{\delta}} (s,x)-\hat{c}^{\hat{\delta}} (s,x)\Big|+\Big|\hat{c}^{\hat{\delta}} (s,x)-c^{\hat{\delta}} (s,x) \Big| \to 0,
    \end{align}
    as $n\rightarrow \infty$ and $\delta_t\rightarrow 0$ for any fixed $\hat{\delta}$.

    Finally, we note that
     \begin{align}
     \big| c^{\hat{\delta}}_f(s,x)-c^{\hat{\delta}}(s,x)\big|
     \leq \Big|c_f^{\hat{\delta}} (s,x)-\hat{c}_f^{\hat{\delta}} (s,x)\Big|+ \Big|\hat{c}_f^{\hat{\delta}} (s,x)-\hat{c}^{\hat{\delta}} (s,x)\Big|+\Big|\hat{c}^{\hat{\delta}} (s,x)-c^{\hat{\delta}} (s,x) \Big|.
     \end{align}
   Then the proof is concluded by combining the claims (1), (2) and (3).
\end{proof}

\begin{proof}[Proof of Theorem \ref{thm:finite-down-in-Amer-Pari-option-converge}]
    We note that
    \[
    \begin{aligned}
        \sup_{\tau \in \mathcal{T}_t^{\zeta,Y}}\mathbb{E}_{t,x} \big[ e^{-r(\zeta_{\tau} - t)}  f( Y_\tau) 1_{\{ \tau_{L, D}^{Y,-} \le \tau, \zeta_{\tau} \le T \}} \big]
        &=\mathbb{E}_{t,x}\big[\sup_{\tau\in \mathcal{T}_t^{\zeta,Y}}\mathbb{E}_{t,x}\big[e^{-r(\zeta_{\tau} - t)}  f( Y_{\tau}) 1_{\{\zeta_{\tau} \le T \}}1_{\{\tau\ge \tau_{L, D}^{Y,-} \}}|\tau_{L, D}^{Y,-}\big] \big],
    \end{aligned}
    \]
    where $\mathcal{T}_t^{\zeta,Y}$ is the set of stopping times taking values in $[t,\infty]$. 
    
    For the above, $\sup_{\tau\in \mathcal{T}_t^{\zeta,Y}}\mathbb{E}_{t,x}\big[e^{-r(\zeta_{\tau} - t)}  f( Y_{\tau}) 1_{\{\zeta_{\tau} \le T \}}1_{\{\tau\ge \tau_{L, D}^{Y,-} \}}|\tau_{L, D}^{Y,-}\big]$ is the approximate value function of a finite-maturity American option with initial time $\tau_{L, D}^{Y,-}$. Then by Proposition \ref{prop:finite-American-option-converge} and Proposition \ref{prop:parisian-stopping-time-weak-convergence}, we have
    \[
    \begin{aligned}
        &\mathbb{E}_{t,x}\big[\sup_{\tau\in \mathcal{T}_t^{\zeta,Y}}\mathbb{E}_{t,x}\big[e^{-r(\zeta_{\tau} - t)}  f( Y_{\tau}) 1_{\{\zeta_{\tau} \le T \}}1_{\{\tau\ge \tau_{L, D}^{Y,-} \}}|\tau_{L, D}^{Y,-}\big] \big] \\
        &\rightarrow \mathbb{E}_{t,x}\big[\sup_{\tau\in \mathcal{T}_t^X}\mathbb{E}_{t,x}\big[e^{-r(\tau - t)}  f( X_{\tau}) 1_{\{\tau \le T \}}1_{\{\tau\ge \tau_{L, D}^{X,-} \}}|\tau_{L, D}^{X,-}\big] \big]\quad (n\rightarrow \infty, \delta_t\rightarrow 0)\\
        &=\sup_{\tau \in \mathcal{T}_t^X}\mathbb{E}_{t,x} \big[ e^{-r(\tau - t)}  f( X_{\tau}) 1_{\{\tau \le T \}}1_{\{\tau\ge \tau_{L, D}^{X,-} \}} \big].
    \end{aligned}
    \]
\end{proof}

\begin{proof}[Proof of Theorem \ref{thm:finite-down-out-Amer-Pari-option-converge}]
We will first prove the following claims:

(1) There exists some constant $\widetilde{C}$ independent of $\hat{\delta}$, $n$, $\delta_d$ and $\delta_t$ such that
\[
\Big|\overline{C}_{fo}(s,d,x)-\overline{C}^{\hat{\delta}}_{fo}(s,d,x)\Big|\le \widetilde{C}\hat{\delta}^{\frac{1}{2}}.
\]

(2) There exists some constant $\widetilde{D}$ independent of $\hat{\delta}$, $n$, $\delta_d$ and $\delta_t$ such that
\[
\Big|C_{fo}(s,d,x)-C_{fo}^{\hat{\delta}}(s,d,x)\Big|\le \widetilde{D}\hat{\delta}^{\frac{1}{2}}.
\]

(3) We consider an arbitrary fixed $\hat{\delta}$. Let $\widetilde{\mathrm {H}}$ be the filtration generated by the process $(X, Y,D )$. Let $\tau^{\bar{\delta},-}=\sup\{ t< \tau: t\in \mathbb{T}^{\bar{\delta}}\}$ with $\mathbb{T}^{\bar{\delta}}=\left \{i\bar{\delta}: i\in \mathbb{N}  \right \} $ for some stopping time $\tau$. We set $\bar{\delta}=\frac{T-s}{2^m}$ for some $m\in \mathbb{N}$. For the quantities
\[
\widehat{C}_{fo}^{\hat{\delta}} (s,d,x)=\sup_{\tau\in \widetilde{\mathcal{T}} _{s}}\mathbb{E}_{s,d,x}\Big[e^{-r(\tau^{\hat{\delta}}-s)}f(X_{\tau^{\hat{\delta}}})1_{\{\tau\le T  \}}1_{\{ \tau\le \widetilde{\tau}_{L, D}^{Y,-}\}}\Big],
\]
\[
{C}_{fo}^{\hat{\delta},-} (s,d,x)=\sup_{\tau\in \widetilde{\mathcal{T}} _{s}}\mathbb{E}_{s,d,x}\Big[e^{-r(\tau^{\hat{\delta}}-s)}f(X_{\tau^{\hat{\delta}}})1_{\{\tau\le T  \}}1_{\{ \tau^{\bar{\delta},-}\le \widetilde{\tau}_{L, D}^{Y,-}\}}\Big]
\]
and
\[
\overline{C}_{fo}^{\hat{\delta},-} (s,d,x)=\sup_{\tau\in \widetilde{\mathcal{T}}^X _{s}}\mathbb{E}_{s,d,x}\Big[e^{-r(\tau^{\hat{\delta}}-s)}f(X_{\tau^{\hat{\delta}}})1_{\{\tau\le T  \}}1_{\{ \tau^{\bar{\delta},-}\le \widetilde{\tau}_{L, D}^{X,-}\}}\Big], 
\]
where $\widetilde{\mathcal{T}}^X _{s}$ is the set of $\widetilde{\mathrm{F}}^X$ stopping times taking values in $[s,\infty]$. For any $\epsilon>0$, there exists a $\bar{\delta}_{\epsilon}$ independent of $\delta_d$, $\hat{\delta}$ and $n$ such that for $\bar{\delta}<\bar{\delta}_{\epsilon}$, 
\[
\max\Big\{\big|\overline{C}_{fo}^{\hat{\delta},-} (s,d,x)-\overline{C}_{fo}^{\hat{\delta}} (s,d,x)\big|, \big|\widehat{C}_{fo}^{\hat{\delta}} (s,d,x)- {C}_{fo}^{\hat{\delta},-} (s,d,x) \big|   \Big\}<\epsilon.
\]
Moreover, for any fixed $\bar{\delta}$, 
\[
{C}_{fo}^{\hat{\delta},-} (s,d,x)\rightarrow \overline{C}_{fo}^{\hat{\delta},-} (s,d,x)
\]
as $\delta_d\rightarrow 0$ and $n\rightarrow \infty$. 

(4) For any fixed $\hat{\delta}$, as $\delta_t \to 0$ and $n\rightarrow \infty$, 
\[
C^{\hat{\delta}}_{fo}(s,d,x)\rightarrow \widehat{C}_{fo}^{\hat{\delta}} (s,d,x).
\] 

We will show the proof details for (3) and (4). The proofs of (1) and (2) are similar to those in the proof of Proposition \ref{prop:finite-American-option-converge}.

Proof of (3):  
We first show that $\overline{C}_{fo}^{\hat{\delta},-} (s,d,x)\rightarrow \overline{C}_{fo}^{\hat{\delta}} (s,d,x)$ as $\bar{\delta}\rightarrow 0$.
\begin{align}
    &\Big|\overline{C}_{fo}^{\hat{\delta}} (s,d,x)-\overline{C}_{fo}^{\hat{\delta},-} (s,d,x) \Big|\\
    &\leq \bigg|\sup_{\tau\in \widetilde{\mathcal{T}}^X _{s}}\mathbb{E}_{s,d,x}\Big[e^{-r(\tau^{\hat{\delta}}-s)}f(X_{\tau^{\hat{\delta}}})1_{\{\tau\le T  \}}1_{\{ \tau\le \widetilde{\tau}_{L, D}^{X,-}\}}-e^{-r(\tau^{\hat{\delta}}-s)}f(X_{\tau^{\hat{\delta}}})1_{\{\tau\le T  \}}1_{\{ \tau^{\bar{\delta},-}\le \widetilde{\tau}_{L, D}^{X,-}\}}\Big]\bigg|\\
    &\leq  C \bigg|\sup_{t\in [s,T]}\mathbb{E}_{s,d,x}\Big[1_{\{ t\le \widetilde{\tau}_{L, D}^{X,-}\}}-1_{\{ t^{\bar{\delta},-}\le \widetilde{\tau}_{L, D}^{X,-}\}}\Big]\bigg|\\
    &= C \sup_{t\in [s,T]}\bigg|\mathbb{P}\Big(t\le \widetilde{\tau}_{L, D}^{X,-}\Big)-\mathbb{P}\Big(t^{\bar{\delta},-}\le \widetilde{\tau}_{L, D}^{X,-}\Big)\bigg|\\
   & =  C \sup_{t\in [s,T]\cap \mathbb{T}^{\bar{\delta}}}\bigg(\mathbb{P}\Big(t-\bar{\delta}\le \widetilde{\tau}_{L, D}^{X,-}\Big)-\mathbb{P}\Big(t\le \widetilde{\tau}_{L, D}^{X,-}\Big)\bigg)\\
    &=  C \sup_{t\in [s,T]\cap \mathbb{T}^{\bar{\delta}}}\mathbb{P}\Big(t-\bar{\delta}\le \widetilde{\tau}_{L, D}^{X,-}<t\Big).
\end{align}
The reason for the above is that $\mathbb{P}\big(t\le \widetilde{\tau}_{L, D}^{X,-}\big)$ decreases as $t$ increases and hence by the definition of $t^{\bar{\delta},-}$, the maximum value of the difference in probabilities occurs in $t\in [s,T]\cap \mathbb{T}^{\bar{\delta}}$. Therefore, for any $\epsilon>0$, there exists an $m_{\epsilon}\in \mathbb{N}$ such that for all $\frac{T-s}{2^m}=\bar{\delta}<\bar{\delta}_{\epsilon}=\frac{T-s}{2^{m_{\epsilon}}}$, 
\begin{align}
\sup_{t\in [s,T]\cap \mathbb{T}^{\bar{\delta}}}\mathbb{P}\Big(t-\bar{\delta}\le \widetilde{\tau}_{L, D}^{X,-}<t\Big)<\sup_{t\in [s,T]\cap \mathbb{T}^{\bar{\delta}_{\epsilon}}}\mathbb{P}\Big(t-\bar{\delta}_{\epsilon}\le \widetilde{\tau}_{L, D}^{X,-}<t\Big)\le \epsilon,
\end{align}
which implies that claim to prove. With the same arguments, we can show that ${C}_{fo}^{\hat{\delta},-} (s,d,x)\rightarrow \widehat{C}_{fo}^{\hat{\delta}} (s,d,x)$ as $\bar{\delta}\rightarrow 0$. 

Then we will show that for any fixed $\bar{\delta}$, ${C}_{fo}^{\hat{\delta},-} (s,d,x)\rightarrow \overline{C}_{fo}^{\hat{\delta},-} (s,d,x)$ as $\delta_d\rightarrow 0$ and $n\rightarrow \infty$. The arguments are as follows:
\begin{align}
    &\Big|{C}_{fo}^{\hat{\delta},-} (s,d,x)-\overline{C}_{fo}^{\hat{\delta},-} (s,d,x)  \Big|\\
    &\leq  \bigg|\sup_{\tau\in \widetilde{\mathcal{T}} _{s}}\mathbb{E}_{s,d,x}\Big[ e^{-r(\tau^{\hat{\delta}}-s)}f(X_{\tau^{\hat{\delta}}})1_{\{\tau\le T  \}}1_{\{ \tau^{\bar{\delta},-}\le \widetilde{\tau}_{L, D}^{Y,-}\}}-e^{-r(\tau^{\hat{\delta}}-s)}f(X_{\tau^{\hat{\delta}}})1_{\{\tau\le T  \}}1_{\{ \tau^{\bar{\delta},-}\le \widetilde{\tau}_{L, D}^{X,-}\}}\Big]\bigg|\\
    &\leq  C \sup_{t\in [s-\bar{\delta},T-\bar{\delta}]\cap \mathbb{T}^{\bar{\delta}}}\bigg| \mathbb{P}\Big(t\le \widetilde{\tau}_{L, D}^{Y,-}\Big)- \mathbb{P}\Big( t\le \widetilde{\tau}_{L, D}^{X,-} \Big) \bigg| \rightarrow  0,
\end{align}
as $\delta_d\rightarrow 0$ and $n\rightarrow \infty$ by Proposition \ref{prop:down-out-parisian-time-weak-convergence}. $\widetilde{\mathcal{T}} _{s}$ is the set of $\widetilde{\mathrm {H}}$ stopping times taking values in $[s,\infty]$.

Proof of (4): Let $\widetilde{\mathrm{H}}^{\zeta}$ be the filtration generated by the process $(\zeta, X, Y, D)$. We introduce the following intermediate quantities:
\[
\breve{C}_{fo}^{\hat{\delta}} (s,d,x)=\sup_{\tau \in \widetilde{\mathcal{T}}^{\zeta,Y}_{s}} \mathbb{E}_{s,d,x} \Big[ e^{-r(\tau^{\hat{\delta}} - s)} f(Y_{\tau^{\hat{\delta}}})1_{\{ \zeta_{\tau}\le T \}}1_{\{ \tau\le \widetilde{\tau}_{L, D}^{Y, -} \}} \Big],
\]
\[
\overrightarrow{C}_{fo}^{\hat{\delta}} (s,d,x)=\sup_{\tau \in \widetilde{\mathcal{T}}^\zeta_{s}} \mathbb{E}_{s,d,x} \Big[ e^{-r(\tau^{\hat{\delta}} - s)} f(X_{\tau^{\hat{\delta}}})1_{\{ \zeta_{\tau}\le T \}}1_{\{ \tau\le \widetilde{\tau}_{L, D}^{Y, -} \}} \Big],
\]
where $\widetilde{\mathcal{T}}^{\zeta,Y}_{s}$ and $\widetilde{\mathcal{T}}^\zeta_{s}$ are the sets of $\widetilde{\mathrm{F}}^{\zeta, Y}$ stopping times and $\widetilde{\mathrm{H}}^{\zeta}$ stopping times, respectively, taking values in $[s,\infty]$. With the same arguments in the proof of Proposition \ref{prop:finite-American-option-converge}, we can show that $C^{\hat{\delta}}_{fo}(s,d,x)\rightarrow \breve{C}_{fo}^{\hat{\delta}} (s,d,x)$ as $\delta_t\rightarrow 0$, and $\breve{C}_{fo}^{\hat{\delta}} (s,d,x)\rightarrow \overrightarrow{C}_{fo}^{\hat{\delta}} (s,d,x)$ as $n\rightarrow \infty$. 

Next we show that $\overrightarrow{C}_{fo}^{\hat{\delta}} (s,d,x)\rightarrow \widehat{C}_{fo}^{\hat{\delta}} (s,d,x)$ as $\delta_t \rightarrow 0$.
\begin{align}
    \overrightarrow{C}_{fo}^{\hat{\delta}} (s,d,x)
    &=\sup_{\tau \in \widetilde{\mathcal{T}}^\zeta_{s}} \mathbb{E}_{s,d,x} \Big[ e^{-r(\tau^{\hat{\delta}} - s)} f(X_{\tau^{\hat{\delta}}})1_{\{ \tau\le T^{\delta_t} \}}1_{\{ \tau\le \widetilde{\tau}_{L, D}^{Y, -} \}} \Big]\\
    &= \mathbb{E}_{s,d,x} \bigg[ \sup_{\tau \in \mathcal{T}^X_{s}} \mathbb{E}_{s,d,x} \Big[ e^{-r(\tau^{\hat{\delta}} - s)} f(X_{\tau^{\hat{\delta}}})1_{\{ \tau\le T^{\delta_t} \}}1_{\{ \tau\le \widetilde{\tau}_{L, D}^{Y, -} \}} \Big|T^{\delta_t}, \widetilde{\tau}_{L, D}^{Y, -}  \Big]  \bigg]\\
    &=  \mathbb{E}_{s,d,x} \bigg[ \sup_{\tau \in \mathcal{T}^X_{s}} \mathbb{E}_{s,d,x} \Big[ e^{-r(\tau^{\hat{\delta}} - s)} f(X_{\tau^{\hat{\delta}}})1_{\{ \tau\le \widetilde{\tau}_{L, D}^{Y, -} \}} \Big|T^{\delta_t}, \widetilde{\tau}_{L, D}^{Y, -}  \Big]1_{\{ \widetilde{\tau}_{L, D}^{Y, -}\le T^{\delta_t} \}}\\
    &\;\;\;\; + \sup_{\tau \in \mathcal{T}^X_{s}} \mathbb{E}_{s,d,x} \Big[ e^{-r(\tau^{\hat{\delta}} - s)} f(X_{\tau^{\hat{\delta}}})1_{\{ \tau\le T^{\delta_t}\}} \Big|T^{\delta_t}, \widetilde{\tau}_{L, D}^{Y, -}  \Big]1_{\{ \widetilde{\tau}_{L, D}^{Y, -}> T^{\delta_t} \}} \bigg]\\
    &= \mathbb{E}_{s,d,x} \bigg[ \sup_{\tau \in \mathcal{T}^X_{s}} \mathbb{E}_{s,d,x} \Big[ e^{-r(\tau^{\hat{\delta}} - s)} f(X_{\tau^{\hat{\delta}}})1_{\{ \tau\le \widetilde{\tau}_{L, D}^{Y, -} \}} \Big|T^{\delta_t}, \widetilde{\tau}_{L, D}^{Y, -}  \Big]1_{\{ \widetilde{\tau}_{L, D}^{Y, -}\le T^{\delta_t} \}}\bigg]\\
    &\;\;\;\; + \mathbb{E}_{s,d,x} \bigg[\sup_{\tau \in \mathcal{T}^X_{s}} \mathbb{E}_{s,d,x} \Big[ e^{-r(\tau^{\hat{\delta}} - s)} f(X_{\tau^{\hat{\delta}}})1_{\{ \tau\le T^{\delta_t}\}} \Big|T^{\delta_t}, \widetilde{\tau}_{L, D}^{Y, -}  \Big]1_{\{ \widetilde{\tau}_{L, D}^{Y, -}> T^{\delta_t} \}} \bigg]\\
    &= \mathbb{E}_{s,d,x} \Big[ V\big( \widetilde{\tau}_{L, D}^{Y, -}, s,d,x\big)1_{\{ \widetilde{\tau}_{L, D}^{Y, -}\le T^{\delta_t} \}}\Big]+\mathbb{E}_{s,d,x} \Big[ V\big( T^{\delta_t}, s,d,x\big)1_{\{ T^{\delta_t}<  \widetilde{\tau}_{L, D}^{Y, -} \}}\Big]
\end{align}
Here $V\big(T,s,d,x\big)=\sup_{\tau \in \mathcal{T}^X_{s}} \mathbb{E}_{s,d,x} \Big[ e^{-r(\tau^{\hat{\delta}} - s)} f(X_{\tau^{\hat{\delta}}})1_{\{ \tau\le T \}} \Big]$ is the value function of a finite maturity Bermudan option with maturity $T$. Suppose that $\mathbb{P}\big[\widetilde{\tau}_{L, D}^{Y, -}=T \big]=0$ for any $T>0$. Because $T^{\delta_t}\rightarrow T$ as $\delta_t\rightarrow 0$, we have that 
\[
\mathbb{E}_{s,d,x} \Big[ V\big( \widetilde{\tau}_{L, D}^{Y, -}, s,d,x\big)1_{\{ \widetilde{\tau}_{L, D}^{Y, -}\le T^{\delta_t} \}}\Big]\rightarrow \mathbb{E}_{s,d,x} \Big[ V\big( \widetilde{\tau}_{L, D}^{Y, -}, s,d,x\big)1_{\{ \widetilde{\tau}_{L, D}^{Y, -}\le T \}}\Big]
\]
and
\[
\mathbb{E}_{s,d,x} \Big[ V\big( T^{\delta_t}, s,d,x\big)1_{\{ T^{\delta_t}<  \widetilde{\tau}_{L, D}^{Y, -} \}}\Big]\rightarrow \mathbb{E}_{s,d,x} \Big[ V\big( T, s,d,x\big)1_{\{ T<  \widetilde{\tau}_{L, D}^{Y, -} \}}\Big]
\]
as $\delta_t\rightarrow 0$. Note that 
\begin{align}
   &\mathbb{E}_{s,d,x} \Big[ V\big( \widetilde{\tau}_{L, D}^{Y, -}, s,d,x\big)1_{\{ \widetilde{\tau}_{L, D}^{Y, -}\le T \}}\Big]+\mathbb{E}_{s,d,x} \Big[ V\big( T, s,d,x\big)1_{\{ T<  \widetilde{\tau}_{L, D}^{Y, -} \}}\Big]\\
    &=  \mathbb{E}_{s,d,x} \bigg[ \sup_{\tau \in \mathcal{T}_{s}^X} \mathbb{E}_{s,d,x} \Big[ e^{-r(\tau^{\hat{\delta}} - s)} f(X_{\tau^{\hat{\delta}}})1_{\{ \tau\le \widetilde{\tau}_{L, D}^{Y, -} \}} \Big|T, \widetilde{\tau}_{L, D}^{Y, -}  \Big]1_{\{ \widetilde{\tau}_{L, D}^{Y, -}\le T \}}\\
    &\;\;\;\;  + \sup_{\tau \in \mathcal{T}^X_{s}} \mathbb{E}_{s,d,x} \Big[ e^{-r(\tau^{\hat{\delta}} - s)} f(X_{\tau^{\hat{\delta}}})1_{\{ \tau\le T  \}}\Big|T, \widetilde{\tau}_{L, D}^{Y, -}\Big]1_{\{ \widetilde{\tau}_{L, D}^{Y, -}> T\}}\bigg]\\
    &=  \mathbb{E}_{s,d,x} \bigg[ \sup_{\tau \in \mathcal{T}^X_{s}} \mathbb{E}_{s,d,x} \Big[ e^{-r(\tau^{\hat{\delta}} - s)} f(X_{\tau^{\hat{\delta}}})1_{\{ \tau\le T  \}}1_{\{ \tau\le \widetilde{\tau}_{L, D}^{Y, -} \}} \Big|T, \widetilde{\tau}_{L, D}^{Y, -}  \Big]1_{\{ \widetilde{\tau}_{L, D}^{Y, -}\le T \}}\\
    & \;\;\;\; + \sup_{\tau \in \mathcal{T}^X_{s}} \mathbb{E}_{s,d,x} \Big[ e^{-r(\tau^{\hat{\delta}} - s)} f(X_{\tau^{\hat{\delta}}})1_{\{ \tau\le T  \}}1_{\{ \tau\le \widetilde{\tau}_{L, D}^{Y, -} \}}\Big|T, \widetilde{\tau}_{L, D}^{Y, -}\Big]1_{\{ \widetilde{\tau}_{L, D}^{Y, -}> T\}}\bigg]\\
    &=  \mathbb{E}_{s,d,x} \bigg[ \sup_{\tau \in \mathcal{T}^X_{s}} \mathbb{E}_{s,d,x} \Big[ e^{-r(\tau^{\hat{\delta}} - s)} f(X_{\tau^{\hat{\delta}}})1_{\{ \tau\le T  \}}1_{\{ \tau\le \widetilde{\tau}_{L, D}^{Y, -} \}} \Big|T, \widetilde{\tau}_{L, D}^{Y, -}  \Big]\bigg]\\
    &=  \widehat{C}_{fo}^{\hat{\delta}} (s,d,x).
\end{align}
Therefore, we have that $\overrightarrow{C}_{fo}^{\hat{\delta}} (s,d,x)\rightarrow \widehat{C}_{fo}^{\hat{\delta}} (s,d,x)$ as $\delta_t \rightarrow 0$. 

We further note that
\begin{align}
    &\Big| \widehat{C}_{fo}^{\hat{\delta}} (s,d,x)-C^{\hat{\delta}}_{fo}(s,d,x) \Big|\\
    &\leq  \Big|\widehat{C}^{\hat{\delta}}_{fo}(s,d,x)-\overrightarrow{C}_{fo}^{\hat{\delta}} (s,d,x)\Big| + \Big| \overrightarrow{C}_{fo}^{\hat{\delta}} (s,d,x)-\breve{C}_{fo}^{\hat{\delta}} (s,d,x)\Big| +  \Big|\breve{C}_{fo}^{\hat{\delta}} (s,d,x)-{C}_{fo}^{\hat{\delta}} (s,d,x)\Big|.
\end{align}
Combining the results obtained to the above and we conclude that
\[
\Big| \widehat{C}_{fo}^{\hat{\delta}} (s,d,x)-C^{\hat{\delta}}_{fo}(s,d,x) \Big|\rightarrow 0
\]
as $\delta_t\rightarrow 0$ and $n\rightarrow \infty$ for any fixed $\hat{\delta}$.

Finally, we note that
    \[
    \begin{aligned}
    &\Big|C_{fo}(s,d,x)-\overline{C}_{fo}(s,d,x)\Big|\\
    &\leq  \Big|C_{fo}(s,d,x)-C^{\hat{\delta}}_{fo}(s,d,x)\Big|+\Big|C^{\hat{\delta}}_{fo}(s,d,x)-\widehat{C}_{fo}^{\hat{\delta}} (s,d,x)\Big|+\Big|\widehat{C}_{fo}^{\hat{\delta}} (s,d,x)-{C}_{fo}^{\hat{\delta},-} (s,d,x)\Big|\\
    &  \; \; \; \;    + \Big|{C}_{fo}^{\hat{\delta},-} (s,d,x)-\overline{C}_{fo}^{\hat{\delta},-} (s,d,x)\Big|+\Big|\overline{C}_{fo}^{\hat{\delta},-} (s,d,x)-\overline{C}_{fo}^{\hat{\delta}} (s,d,x)\Big|+\Big|\overline{C}_{fo}^{\hat{\delta}} (s,d,x)-\overline{C}_{fo}(s,d,x)\Big|.
    \end{aligned}
    \]
    Then the proof of Theorem \ref{thm:finite-down-out-Amer-Pari-option-converge} is concluded by combining the claims (1), (2), (3) and (4).  
\end{proof}

\begin{proof}[Proof of Lemma \ref{lem:perpetual-down-out-American-option-intermedia}]
    \[
    \begin{aligned}
        &\big| \overline{C}_{fo}(0,d,x)-\overline{C}^{\hat{\delta}}_{po}(d,x)\big|\\
        &=\sup _{\tau \in \widetilde{\mathcal{T}}^X}\mathbb{E}_{d, x}\big[e^{-r\tau^{\hat{\delta}}} f(X_{\tau^{\hat{\delta}}})1_{\{ \tau\le \widetilde{\tau}_{L, D}^{X,-} \}} \big]-\sup _{\tau \in \widetilde{\mathcal{T}}^X_{0,T}} \mathbb{E}_{0,d, x}\left[e^{-r\tau^{\hat{\delta}}} f\left(X_{\tau^{\hat{\delta}}}\right) 
 1_{\{ \tau\le \widetilde{\tau}_{L, D}^{X,-} \}} \right]\\
 &\leq  \sup _{\tau \in \widetilde{\mathcal{T}}^X_{0,T}} \mathbb{E}_{0,d, x}\left[e^{-r\tau^{\hat{\delta}}} f\left(X_{\tau^{\hat{\delta}}}\right) 
 1_{\{ \tau\le \widetilde{\tau}_{L, D}^{X,-} \}} \right]+\sup _{\tau \in \widetilde{\mathcal{T}}^X\setminus \widetilde{\mathcal{T}}^X_{0,T}} \mathbb{E}_{0,d, x}\left[e^{-r\tau^{\hat{\delta}}} f\left(X_{\tau^{\hat{\delta}}}\right) 
 1_{\{ \tau\le \widetilde{\tau}_{L, D}^{X,-} \}} \right]\\
 &\;\;\;\;-\sup _{\tau \in \widetilde{\mathcal{T}}^X_{0,T}} \mathbb{E}_{0,d, x}\left[e^{-r\tau^{\hat{\delta}}} f\left(X_{\tau^{\hat{\delta}}}\right) 
 1_{\{ \tau\le \widetilde{\tau}_{L, D}^{X,-} \}} \right]\\
 &=\sup _{\tau \in \widetilde{\mathcal{T}}^X\setminus \widetilde{\mathcal{T}}^X_{0,T}} \mathbb{E}_{0,d, x}\left[e^{-r\tau^{\hat{\delta}}} f\left(X_{\tau^{\hat{\delta}}}\right) 
 1_{\{ \tau\le \widetilde{\tau}_{L, D}^{X,-} \}} \right].
    \end{aligned}
    \]

    By Assumption \ref{assump:payoff-growth}, there exists a $T_\epsilon$, such that for any $\tau^{\hat{\delta}}>T_\epsilon>T$, $\big| e^{-r\tau^{\hat{\delta}}} f\left(X_{\tau^{\hat{\delta}}}\right)\big|<\epsilon$, and 
    \[
    \begin{aligned}
        \big| \overline{C}_{fo}(0,d,x)-\overline{C}^{\hat{\delta}}_{po}(d,x)\big|
        &\leq  \sup _{\tau \in \widetilde{\mathcal{T}}^X\setminus \widetilde{\mathcal{T}}^X_{0,T}} \mathbb{E}_{0,d, x}\left[e^{-r\tau^{\hat{\delta}}} f\left(X_{\tau^{\hat{\delta}}}\right) 
 1_{\{ \tau\le \widetilde{\tau}_{L, D}^{X,-} \}} \right]\\
 &< \epsilon \sup _{\tau \in \widetilde{\mathcal{T}}^X\setminus \widetilde{\mathcal{T}}^X_{0,T}} \mathbb{E}_{0,d, x}\left[
 1_{\{ \tau\le \widetilde{\tau}_{L, D}^{X,-} \}} \right] \leq \epsilon.
    \end{aligned}
    \]

    Using similar arguments, we can show that $\big| \widetilde{C}^{\hat{\delta}}_{fo}(0,d,x)-C^{\hat{\delta}}_{po}(d,x)\big|<\epsilon$. This concludes the proof.
\end{proof}

\begin{proof}[Proof of Theorem \ref{thm:perpetual-down-out-Amer-Pari-option-converge}]
    We will first prove the following claims: 
    
    (1) There exists some constant $\widetilde{C}$ independent of $\hat{\delta}$, $n$ and $\delta_d$  such that
\[
\left|\overline{C}_{po}(d, x)-\overline{C}^{\hat{\delta}}_{po}(d, x)\right|\le \widetilde{C}\hat{\delta}^{\frac{1}{2}}
\]

(2) There exists some constant $\widetilde{D}$ independent of $\hat{\delta}$, $n$ and $\delta_d$ such that
\[
\left|C_{po}(d, x)-C^\delta_{po}(d, x)\right|\le \widetilde{D}\hat{\delta}^{\frac{1}{2}}
\]

(3) For any fixed $\hat{\delta}$, as $n\rightarrow \infty$ and $\delta_d\rightarrow 0$, we have 
\[
C_{po}^{\hat{\delta}}(d, x)\rightarrow \overline{C}_{po}^{\hat{\delta}}(d, x).
\]  

We will show details for (3). The proofs of (1) and (2) are similar to the proof of Proposition \ref{prop:perpetual-down-in-Amer-option-convergence}. 

Proof of (3): 
With the same arguments in the proof of Theorem \ref{thm:finite-down-out-Amer-Pari-option-converge}, we can show that $\big|\widetilde{C}^{\hat{\delta}}_{fo}(s,d,x)-\overline{C}^{\hat{\delta}}_{fo}(s,d,x)  \big|\rightarrow 0$ as $n\rightarrow \infty$ and $\delta_d\rightarrow 0$. Then we have 
\[
\begin{aligned}
    &\big| C_{po}^{\hat{\delta}}(d, x)- \overline{C}_{po}^{\hat{\delta}}(d, x) \big|\\
    &=  \big| C_{po}^{\hat{\delta}}(d, x)-\overline{C}_{po}^{\hat{\delta}}(d, x)+\widetilde{C}^{\hat{\delta}}_{fo}(0,d,x)-\widetilde{C}^{\hat{\delta}}_{fo}(0,d,x)+\overline{C}^{\hat{\delta}}_{fo}(0,d,x)-\overline{C}^{\hat{\delta}}_{fo}(0,d,x) \big|\\
    &\leq  \big|\overline{C}^{\hat{\delta}}_{fo}(0,d,x)- \overline{C}_{po}^{\hat{\delta}}(d, x) \big|+\big| \widetilde{C}^{\hat{\delta}}_{fo}(0,d,x)- C_{po}^{\hat{\delta}}(d, x)\big|+\big| \widetilde{C}^{\hat{\delta}}_{fo}(0,d,x)-\overline{C}^{\hat{\delta}}_{fo}(0,d,x)   \big|.
\end{aligned}
\]
By Lemma \ref{lem:perpetual-down-out-American-option-intermedia}, with the same arguments in the proof of Proposition \ref{prop:perpetual-down-in-Amer-option-convergence}, the claim (3) is proved.

Finally we note that,
\[
\begin{aligned}
    &\big| C_{po}(d, x)-\overline{C}_{po}(d, x) \big|\\
    &\leq  \big| C_{po}(d, x)-C^{\hat{\delta}}_{po}(d,x)  \big|+\big| C^{\hat{\delta}}_{po}(d,x)-\overline{C}^{\hat{\delta}}_{po}(d,x) \big|+\big|\overline{C}^{\hat{\delta}}_{po}(d,x)-\overline{C}_{po}(d, x)  \big|.
\end{aligned}
\]
Then the proof of Theorem \ref{thm:perpetual-down-out-Amer-Pari-option-converge} is concluded by combining the results in the claims (1), (2) and (3). 

\end{proof}

\bibliographystyle{chicagoa}

\end{document}